\documentclass[11pt]{article}
\usepackage{graphicx} 
 \usepackage{setspace}

 \usepackage{amssymb}
\usepackage{amsmath}
 \usepackage{bm}
 \usepackage{graphicx}
 \usepackage{mathrsfs}
  \usepackage{fancyhdr}
   \usepackage[colorlinks, linkcolor=blue]{hyperref}
    \usepackage{amsthm}
     \usepackage{booktabs}
 \usepackage{makecell}
 \usepackage{float}
 \usepackage{footmisc}
 \usepackage{tikz}
 \usetikzlibrary{external}
 \usetikzlibrary{decorations.pathreplacing}
 \usepackage{appendix}
\usepackage{pgfplots}
\pgfplotsset{compat=newest}
\usepackage{natbib}
\usepackage{listings}
\usepackage{tocloft}
 \usepackage{wasysym}
 \usepackage{ragged2e}
 \usepackage{hyperref}
 \usepackage{comment}
 \usepackage{algorithm2e}
 \usepackage{array}
 \usepackage{multirow}
 \usepackage{relsize}
\usepackage{verbatim}
\usepackage{caption}
\usepackage{accents}

 \usetikzlibrary{positioning}
\usetikzlibrary{calc}
\pgfplotsset{compat=1.18}

\usepackage[T1]{fontenc}   
\usepackage{kpfonts}

\hypersetup{
    citecolor=blue,
    colorlinks=true,
    linkcolor=blue,
    filecolor=magenta,      
    urlcolor=blue,
}

 \newtheorem{proposition}{Proposition}
 \newtheorem{example}{Example}
 \newtheorem{remark}{Remark}
  
  \newtheorem{definition}{Definition}
  \newtheorem{lemma}{Lemma}

  \newtheorem{corollary}{Corollary}

    \DeclareMathOperator*{\argmax}{argmax}
        \DeclareMathOperator*{\argmin}{argmin}
        
          \DeclareMathOperator*{\card}{card}
           \DeclareMathOperator*{\supp}{supp}
        
        \newcounter{savefootnote}
\newcounter{symfootnote}
\newcommand{\symfootnote}[1]{%
   \setcounter{savefootnote}{\value{footnote}}%
   \setcounter{footnote}{\value{symfootnote}}%
   \ifnum\value{footnote}>8\setcounter{footnote}{0}\fi%
   \let\oldthefootnote=\thefootnote%
   \renewcommand{\thefootnote}{\fnsymbol{footnote}}%
   \footnote{#1}%
   \let\thefootnote=\oldthefootnote%
   \setcounter{symfootnote}{\value{footnote}}%
   \setcounter{footnote}{\value{savefootnote}}%
}

\usepackage[margin=1in]{geometry}

\usepackage{microtype}

\allowdisplaybreaks

\begin{document}
\begin{spacing}{1.23}

\begin{titlepage}
    \centering
    \vspace*{1cm}
    
    \Large
    \textbf{Decomposing Common Agency} \symfootnote{I am indebted to Laura Doval for her continued guidance and support. I am grateful to Navin Kartik, Andreas Kleiner, Zihao Li, Qingmin Liu, Alessandro Pavan, Jacopo Perego, Rui Tang, and Yangfan Zhou for insightful comments and suggestions. }
    
    \vspace{0.5cm}
    \large
    Zhiming Feng \symfootnote{Department of Economics, Columbia University. Email: \href{zf2295@columbia.edu}{zf2295@columbia.edu}.}

    \vspace{0.5cm}
    \large
    \today
    
    \vspace{1.5cm}

    \normalsize
    \textbf{Abstract}
    \vspace{0.5cm}

\justifying

This paper develops a decomposition methodology for common agency
games in which each principal's payoff depends on her own outcome
and the agent's type, but not on rivals' outcomes. The key step
reduces each principal's best-response problem to a standard
screening problem defined over the agent's indirect utility---the
upper envelope of her payoff over rivals' offerings. Individually
best-responding mechanisms then assemble into a pure-menu perfect
Bayesian equilibrium when a compatibility condition
(utility-preserving recombination, \eqref{2+}) ensures aligned tie-breaking
across principals. Under a non-indifference condition,
the decomposition recovers all equilibria except those sustained
by menu items that no type of the agent actually selects but which
nevertheless discipline the rival's screening problem.
When principals' payoffs depend on the full allocation profile,
the decomposition adapts only under substantive regularity
conditions on the agent's off-path choice behavior, one of which
coincides with Luce's choice axiom. I apply the methodology to
two settings. In a quadratic-loss delegation model, equilibria
feature one principal offering a finite menu of discrete
``regimes'' while the other receives piecewise full delegation
within each regime. In a competitive bundling duopoly under
intrinsic common agency, the decomposition yields equilibria
exhibiting market splitting, in which firms specialize
in complementary bundles, and asymmetric equilibria with a
take-it-or-leave-it base contract paired with a nested or tree
menu of upgrades.
\\ \\ 
 
\textbf{Keywords:} Common agency, mechanism design, contracts, decomposition, competitive delegation, competitive bundling
    \vfill

\end{titlepage}
\pagenumbering{gobble}
{\small
\tableofcontents
}

\newpage
\pagenumbering{arabic}
\setcounter{page}{1}

\section{Introduction}

Common agency---a setting in which multiple principals simultaneously contract with a single agent---arises naturally in many economic environments. Competing firms offering products to a shared consumer, multiple regulators overseeing a single entity, and rival interest groups lobbying a common policymaker all share this structure. A foundational result in this literature, the \emph{menu theorem} (\cite{Peters2001CommonAgency}, \cite{ms2002}; see also \cite{EpsteinPeters1999}), establishes that any equilibrium outcome of a general mechanism game can be replicated by an equilibrium in which each principal simply offers a menu of outcomes from which the agent selects. This result provides a powerful simplification: it reduces the strategic complexity of competing mechanisms to the analysis of menu offers.

Yet knowing that menus suffice is not the same as knowing which menus arise in equilibrium. The central difficulty is strategic interdependence: each principal's optimal menu depends on the menus offered by rivals, because the agent's willingness to accept any particular contract is shaped by her alternatives elsewhere. This paper develops a method for managing this circularity by reducing
each principal's problem to a standard screening problem, transforming
the search for equilibrium from the space of menus to the more
tractable space of direct mechanisms. The fixed-point structure is not
eliminated---each principal's screening problem is defined relative to
the rivals' allocation rules---but the transformed system can be
attacked using standard mechanism design tools, and candidate
solutions can be verified through a single compatibility check. The key construction is the \textbf{indirect utility function}: given the menus offered by rival principals, the agent's indirect utility in principal $i$ is defined as the upper envelope of her utility over all possible selections from the rivals' menus. This single function absorbs the competitive complexity---how rivals' offerings shape the agent's preferences---into the agent's ``type-dependent utility'' over principal $i$'s outcome. Once the indirect utility is in hand, each principal faces what is effectively a textbook mechanism design problem: choose an allocation rule to maximize payoff subject to incentive compatibility constraints defined by the indirect utility. The indirect utility construction thus serves as a bridge: it allows the powerful tools developed for single-principal settings---the envelope theorem, Lagrangian methods for delegation, virtual-surplus techniques for bundling---to be applied directly to multi-principal environments.

The decomposition works most cleanly when each principal's payoff
depends only on her own outcome and the agent's type, with no direct
dependence on rivals' outcomes. In this case---which I treat as the
core setting of the paper---the procedure is as follows. First, each principal solves her single-principal screening problem (program \eqref{P3}), taking the rivals' allocation rules as given; the solution pins down her best-response menu as the range of the optimal direct mechanism. A key step here is Lemma \ref{prop:separate}, which shows that optimality among direct mechanisms implies optimality among \emph{all} mechanisms---so no separate verification against indirect mechanisms is needed. Second, the analyst checks a compatibility condition I call \textbf{utility-preserving recombination} (\eqref{2+}). This condition addresses a subtlety absent in single-principal settings: when the agent faces the full menu profile, she may encounter ties---multiple outcome combinations that are equally attractive. In a single-principal world, breaking ties in favor of the principal is a standard assumption; with multiple principals, the tie-breaking rule must simultaneously favor \emph{all} principals. Condition \eqref{2+} ensures that such aligned tie-breaking is feasible. I show through an explicit example (Example \ref{e1+}) that this condition is not vacuous: there exist profiles of individually optimal allocation rules for which no agent strategy supports the profile as an equilibrium, precisely because the principals' interests at ties are irreconcilable.

I provide several sufficient conditions under which \eqref{2+} is automatically satisfied, so that any profile of solutions to \eqref{P3} directly yields a PBE. The sufficient conditions include a non-indifference condition (Corollary \ref{c3}): the agent is never indifferent between distinct outcome profiles, weak separability of the agent's utility across principals' outcomes in the sense of \cite{AttarMajumdarPiaserPorteriro2008} (Corollary \ref{c0}), and the structural condition that all but one principal offer singleton menus (Corollary \ref{c2}). Non-indifference plays an additional role in sharpening the converse characterization (Proposition \ref{prop:p3induced}) and the Pareto optimality results.

I also characterize the converse: which equilibria can be recovered through the decomposition. Under the non-indifference condition, Proposition \ref{prop:p3induced} shows that a PBE is \eqref{P3}-induced if and only if every item in every principal's menu is chosen by some type on the equilibrium path. In other words, the decomposition captures all equilibria except those with ``unused'' menu items. Such items play no role on the equilibrium path but can affect the rival's screening problem through the indirect utility---functioning as \emph{strategic shields} that deter profitable deviations. The decomposition, by working with minimal menus (the ranges of direct mechanisms), cannot capture equilibria sustained by such shields.

Beyond existence and characterization, I study the welfare properties of decomposition-induced equilibria. A natural concern is whether the decomposition, by restricting attention to minimal menus, might miss welfare-superior outcomes. Following the characterization result above, under non-indifference, any Pareto improvement over a PBE that is Pareto optimal among \eqref{P3}-induced PBE must come from equilibria with strategic shields---but the economic scope for such improvements is limited, as unused menu items can only tighten the rival's incentive constraints. Beyond non-indifference, I show that this concern can also be mitigated in several cases: when the type space is a singleton, and when one principal has a type-independent ideal point (Proposition \ref{prop:PO}). 

When each principal's payoff depends on the full allocation profile
rather than only on her own outcome, the agent's selection rule
enters each principal's objective, and the decomposition no longer
reduces each principal's problem to a self-contained screening
problem. I identify regularity conditions on the agent's choice
behavior---(stochastic) independence-of-irrelevant-alternatives
conditions, one of which coincides with \cite{luce1959}'s choice
axiom---under which a modified construction delivers best-response
characterizations. These conditions are substantive restrictions
on off-path choice rather than consequences of rational behavior,
so the corresponding results are more conditional than the
independent-payoff analysis.

Applying the decomposition to continuous-type settings requires the envelope theorem to hold for the indirect utility function, which is generically nondifferentiable. In Online Appendix \ref{s21}, I establish conditions under which the envelope formula remains valid for such functions and verify that these conditions are satisfied in both applications.

I demonstrate the framework through two applications. The
\textbf{first application is to delegation} with quadratic-loss
preferences and two principals. Each principal has a type-dependent
ideal point, and the agent's bliss point is the sum of the two
outcomes. The central result characterizes equilibria in which one
principal offers a finite menu of discrete ``regimes'' while the
other receives full delegation within each regime (Proposition
\ref{prop:delegationhard}). In the single-principal benchmark, full delegation arises under
alignment of preferences, and interval delegation arises more generally. With two competing principals, Proposition \ref{prop:delegationhard} shows that full delegation can be sustained
within segments of a partition of the type space, with one principal
selecting the segment through a coarse menu and the other adapting
freely within the selected segment. The structure
arises naturally in hierarchical economic environments: a franchiser
choosing among a small number of store formats while franchisees adapt
pricing and product mix locally; a telecommunications carrier
selecting a service tier while a device manufacturer tailors handset
configurations within each tier; a federal regulator setting a
regulatory regime while state agencies exercise local discretion. In
each case, one principal provides the broad framework and the other
provides the granular adaptation, and this division of screening
responsibilities is exactly what the decomposition produces. I also
identify conditions under which ``no compromise for both'' emerges,
when the principals' ideal points perfectly partition the agent's
bliss point (Proposition \ref{prop:delegationeasy}).

A \textbf{second application is to competitive bundling}, where I
extend the one-dimensional, non-additive-value framework of
\cite{ghili2023characterization}, \cite{yang2023nested}, and
\cite{feng2025} to a duopoly setting under intrinsic common agency.
The decomposition yields two classes of equilibria: one with
\textbf{market splitting}, where firms partition the
product space into complementary bundles and charge equal markups
tied to the marginal type's virtual surplus (Proposition \ref{t1});
and one with a \textbf{``base product plus complementary upgrades''}
structure, where one firm posts a singleton take-it-or-leave-it
contract while the rival screens consumer types through a menu of
upgrades, potentially with the nested or tree structure of
\cite{feng2025} (Proposition \ref{t2} and Corollary \ref{c1}).

\paragraph{Related literature.} This paper contributes to the theory of common agency and competing mechanisms. The closest predecessor is \cite{MartimortStole2001Discrete}, who
study intrinsic common agency with discrete types. They
observe that each principal's best-response problem can be analyzed
within the class of direct revelation mechanisms by the revelation
principle applied to the sub-problem. Lemma \ref{prop:separate}
formalizes this observation and makes it the foundation of a
decomposition framework: combined with the indirect utility function, each principal's sub-problem becomes a standard screening problem, and the compatibility condition
\eqref{2+} identifies when individually best-responding mechanisms
assemble into a PBE. My framework also accommodates continuous type
spaces, continuous outcome spaces, general utility structures, and
mixed agent strategies, whereas \cite{MartimortStole2001Discrete}
work with discrete types, pure agent strategies, and two principals
controlling discrete activities; their invocation of the revelation
principle would not extend directly to the mixed-strategy case, since
under mixed agent selection from the rival's menu, some payoffs
achievable with indirect mechanisms cannot be replicated by direct
mechanisms. In addition, \cite{MartimortStole2001Discrete}'s focus is on the
structure of the equilibrium set under complementarity and
substitutability conditions on the agent's objective; mine is on the
decomposition as a general-purpose method, with \eqref{2+} addressing
the off-path behavior that their equilibrium concept does not treat.

\cite{ms2009} also work closely with direct revelation mechanisms,
characterizing equilibria in delegated and intrinsic common agency
under the multi-principal \cite{mussa1978monopoly} model. Their
equilibrium notion specifies each principal's best response given the
\emph{equilibrium} menus of rivals, without requiring the agent's
strategy to be defined for arbitrary off-path menu profiles. The
perfect Bayesian equilibrium I study here requires the agent's
strategy to be specified for \emph{all} menu profiles, including
deviations---the stronger requirement that gives rise to the
tie-breaking complications captured by \eqref{2+}. A more recent
\cite{MartimortSemenovStole2018} contribution, also in the
\cite{mussa1978monopoly} setting, characterizes the \emph{complete}
set of equilibrium allocations via ``self-generating optimization
programs.'' Their approach and mine both reduce the multi-principal
problem to optimization programs, but with different objectives:
their programs characterize the allocation set, while mine construct
equilibrium menus and strategies directly. My framework also applies beyond the \cite{mussa1978monopoly} specification to settings with non-transferable utility and arbitrary outcome spaces, enabling the delegation and bundling applications.

The cooperative approach to common agency, initiated by \cite{bw1986}
and extended by \cite{LausselLeBreton2001}, characterizes equilibrium
\emph{payoffs} through connections to the core and truthful
equilibria. \cite{MartimortStole2024Menu} recently extend this
menu-auction framework to asymmetric information, using non-smooth
optimal control to aggregate principals' objectives into a surrogate
program via the Aggregate Concurrence Principle
(\cite{MartimortStole2012}).\footnote{Their setting differs from mine
in that principals offer contribution schedules over a common action
in a quasi-linear environment, while my principals control separate
outcomes under general utility; correspondingly, their reduction is
aggregative while mine is separative.} Rather than characterizing the payoff set, I construct the equilibrium \emph{menus} and \emph{strategies} that implement specific outcomes. The menu theorem (\cite{Peters2001CommonAgency}, \cite{ms2002}) and the revelation principle for competing mechanisms (\cite{EpsteinPeters1999}) provide the conceptual foundation for working with menus, but do not address which menus arise in equilibrium. \cite{Peters2003Negotiation} studies existence when principals are restricted to take-it-or-leave-it offers under a no-externalities assumption; Corollary \ref{c2} generalizes this by allowing one principal to screen while maintaining equilibrium existence. \cite{PavanCalzolari2009Sequential, PavanCalzolari2010} introduce extended direct mechanisms and analyze sequential common agency; I discuss the relationship between their approach and mine in Section \ref{sec:extendeddm}, showing that extended mechanisms offer advantages primarily for replicating, rather than generating, equilibria using the decomposition (Proposition \ref{prop:extendeddm2}). More recently, \cite{attar2025keeping} show that principals can improve their payoffs by privately and asymmetrically informing agents about their decision rules, opening a direction I discuss as a natural extension of the present framework.

The delegation application builds on \cite{holmstrom1984} and the Lagrangian methods of \cite{AmadorBagwell2013OptimalDelegation} and \cite{KartikKleinerVanWeelden2021Delegation}, which I adapt to the multi-principal setting. The bundling application extends the framework of \cite{ghili2023characterization}, \cite{yang2023nested}, and \cite{feng2025} from monopoly to duopoly.

~\\
The remainder of the paper is organized as follows. Section \ref{setup} introduces the model and defines equilibrium. Section \ref{sec:mainresults} presents the decomposition results, including the construction and characterization of PBE, Pareto optimality, and the extension to outside options. Sections \ref{sdelegation} and \ref{sec:multiproductduopoly} present the delegation and bundling applications. Section \ref{sec:discussions} discusses extensions to general payoff interdependence and extended direct mechanisms. Section \ref{sec:conclusion} concludes. Proofs are collected in the Appendix, and the envelope formula for indirect utilities is developed in Online Appendix \ref{s21}.

\section{Setup}
\label{setup}

Consider a setting with a single agent $A$ and a set of principals $\mathcal{N} = \{1,...,n\}$. Each principal $i \in \mathcal{N}$ contracts with the agent via a mechanism $(M_i, g_i)$, where $M_i$ denotes the message space and $g_i: M_i \rightarrow \Delta(O_i)$ is the outcome function. The set of feasible outcomes for principal $i$, $O_i$, is assumed to be compact.

The agent's type $t$ is private information, distributed over a compact set $T$. The agent's utility function, $V: \Delta(\prod_{i=1}^n O_i) \times T \rightarrow \mathbb{R}$, is continuously differentiable with respect to $t$ and satisfies the expected utility hypothesis: for any $t \in T$ and $\tilde{o} \in \Delta(\prod_{i=1}^n O_i)$, $V(\tilde{o},t)=\mathbb{E}_{o \sim \tilde{o}}[V(o,t)]$. 

For any $i$, let principal $i$'s utility function be $u_i: \Delta(O_i) \times T \rightarrow \mathbb{R}$. That is, each principal cares about her own outcome and the agent's type. The expected utility hypothesis is also satisfied: for any $t \in T$ and $\tilde{o} \in \Delta(O_i), u_i(\tilde{o},t)=\mathbb{E}_{o \sim \tilde{o}}[u_i(o,t)]$. I extend the setting to general payoff interdependence, where each principal cares about her own outcome, the agent's type, and other principals' outcomes: $u_i: \Delta(\prod_{j=1}^n O_j) \times T \rightarrow \mathbb{R}$, in Section \ref{sec:discussions}. 

I initially analyze the setting in which the agent has no outside options. Outside options are introduced in Section \ref{section:independent}, which focuses on the case where principals have independent payoff structures, where $u_i$ depends solely on outcomes in $O_i$.

The timing of the game is as follows:
\begin{itemize}
    \item At $t=0$, $A$ learns her type.
    \item At $t=1$, principals simultaneously and independently offer mechanisms to $A$.
    \item At $t=2$, after observing the set of offered mechanisms, $A$ simultaneously reports a message $m_i \in M_i$ to each principal $i$.
\end{itemize}

The solution concept that I study for this mechanism game is perfect Bayesian Equilibrium (PBE). A well-known result in the literature, which is often referred to as the menu theorem or the delegation principle (\cite{Peters2001CommonAgency}, \cite{ms2002}), says the following: any PBE outcome in the mechanism game can be replicated by a PBE outcome in the game where for any $i \in \mathcal{N}$, principal $i$ offers menus $\mathscr{M}_i \in 2^{\Delta(O_i)}$ from which the agent chooses. 

Accordingly, it suffices to analyze the \textit{menu game}.\footnote{For expositional clarity, I restrict attention to menus of deterministic outcomes rather than stochastic ones. This simplification is standard in the literature; see, e.g., \cite{PavanCalzolari2010}. This restriction can be relaxed: see Remark \ref{rmk2} following Proposition \ref{p0}.} In the menu game, for any $i \in \mathcal{N}$, a strategy for principal $i$ is a distribution over menus, $\sigma_{i} \in \Delta(2^{O_i})$. The agent's strategy maps the profile of offered menus and her type to a distribution over outcomes: $\sigma_A: \prod_{i=1}^n 2^{O_i} \times T \rightarrow \Delta( \prod_{i=1}^n O_i)$. The formal definition of PBE in this context is as follows:

\begin{definition}
\label{d0}
A strategy profile $(\sigma_{1},...,\sigma_{n}, \sigma_A)$ is a PBE if:
\begin{enumerate}
    \item The agent optimizes given the menu profile:
    \[
    \text{ for any } (\mathscr{M}_1,...,\mathscr{M}_n) \in \prod_{i=1}^n 2^{O_i}, t \in T, \quad
    (o_1,...,o_n) \in \supp(\sigma_A(\mathscr{M}_1,...,\mathscr{M}_n,t)) 
    \]
    \[
    \implies (o_1,...,o_n) \in \argmax_{\tilde{o}_i \in \mathscr{M}_i, \forall i} V(\tilde{o}_1,...,\tilde{o}_n,t).
    \] \label{cd1}
    \item Each principal optimizes given the strategies of others:
    \[
    \text{ for any } i, \quad \mathscr{M}_i \in \supp(\sigma_{i}) \implies \mathscr{M}_i \in \argmax_{\tilde{\mathscr{M}}_i \in 2^{O_i}} \mathbb{E}_t\left[\int_{\mathscr{M}_{-i}} u_i(\sigma_A(\mathscr{M}_1',...,\tilde{\mathscr{M}}_i,...,\mathscr{M}_n',t),t) \prod_{j \neq i} \sigma_{j}(\mathbf{d}\mathscr{M}_j')\right].
    \] \label{cd2}
\end{enumerate}
\end{definition}

Condition \ref{cd1} requires that the agent chooses an outcome optimal for her type given any menu profile. Condition \ref{cd2} requires that each principal chooses a menu strategy that maximizes expected utility, taking into account the strategies of other principals and the agent's best response.

The principals' strategy sets in Definition \ref{d0} include mixed strategies over menus. Throughout, I mainly focus on equilibria in which each principal plays a pure menu. When checking deviations, I follow Definition \ref{d0} and consider all mixed strategies over menus; this is essentially equivalent to checking only pure-menu deviations, since the
linearity of each principal's payoff in her own mixed strategy means no mixed deviation can strictly improve on every pure menu in its support.

\section{Decomposition to Single-Principal Problems}
\label{sec:mainresults}

The primary objective of this paper is to decompose the ``$n$-principals-one-agent'' problem into $n$ distinct single-principal problems. The decomposition proceeds in two logically distinct steps, each addressing a separate analytical challenge:

\begin{enumerate}
    \item Step 1 reduces the problem of choosing an optimal menu to the problem of choosing an optimal direct mechanism. This step eliminates the need to search over all possible menus and ensures that optimality among direct mechanisms implies optimality among \emph{all} mechanisms (Lemma \ref{prop:separate}).
    \item Step 2 assembles individually optimal direct mechanisms into a single equilibrium. This step requires a compatibility condition to ensure that the agent strategies supporting each principal's optimality can be unified into a common strategy (Proposition \ref{p1}).
\end{enumerate}

The practical value of this decomposition is that it reduces a game-theoretic problem---finding a fixed point of mutual best responses over the space of menus---to a collection of optimization problems that an analyst can solve using standard mechanism design tools. The fixed-point structure is not eliminated, but it is \emph{relocated}. 

While the agent's utility in the original game is standard (given by $V$), the construction of his utility in the decomposed setting warrants further elaboration. The following definition formally introduces the \textbf{indirect utility function} used in the single-principal analysis:

\begin{definition}
\label{d1}
    Given menu profile $(\mathscr{M}_1,...,\mathscr{M}_n)$, the agent's \textbf{indirect utility function} in principal $i$ is $v_i(\cdot,\cdot|\mathscr{M}_{-i}): \Delta(O_i) \times T \rightarrow \mathbb{R}$ such that for any $t$
    \begin{equation}
    \label{1}
        v_i(\cdot,t|\mathscr{M}_{-i})=\max \limits_{o_{-i} \in \mathscr{M}_{-i}} V(\cdot,o_{-i},t).\footnote{Throughout the paper, whenever I use ``$\max$'', I implicitly assume that it is well-defined. In fact, as will be shown in Online Appendix \ref{s21} when discussing the envelope formula for the indirect utility function and in Section \ref{sec:multiproductduopoly} when discussing competitive bundling, I mainly focus on finite menus, which makes ``$\max$'' well-defined.}
    \end{equation}
    Moreover, for a stochastic menu profile $\tilde{\mathscr{M}}_{-i} \in \Delta(\mathscr{M}_{-i}), v_i(\cdot,t|\tilde{\mathscr{M}}_{-i})= \displaystyle{\int}_{\mathscr{M}_{-i}} \max \limits_{o_{-i} \in \mathscr{M}_{-i}} V(\cdot,o_{-i},t) \tilde{\mathscr{M}}_{-i}(\mathbf{d}\mathscr{M}_{-i})$. 
\end{definition}

The indirect utility function captures how the competitive environment shapes the agent's preferences over any single principal's offerings. For a fixed outcome $o_i$ from principal $i$, the agent optimally selects among other principals' menus, and the resulting payoff $v_i(o_i, t|\mathscr{M}_{-i})$ is the relevant ``utility'' that governs the agent's choice from principal $i$'s menu. An equivalent interpretation: for any $o_i$, the function $v_i(o_i, \cdot|\mathscr{M}_{-i})$ is the \textbf{upper envelope} of $\{V(o_i, o_{-i}, \cdot)\}_{o_{-i} \in \mathscr{M}_{-i}}$. This upper-envelope structure is what connects the indirect utility to the technical results in Online Appendix \ref{s21} and enables the application of envelope methods in concrete settings.

\subsection{Constructing PBE}

This subsection focuses on constructing PBE using the decomposition method. 

\medskip\noindent\textbf{Step 1: From menus to direct mechanisms.} The first result establishes that, for each principal individually, it suffices to optimize over direct mechanisms rather than over all possible menus. This is a single-principal result---it holds for any fixed rival menu, regardless of whether that menu arises in equilibrium.

\begin{lemma}
    \label{prop:separate}
    For any $i$ and $\mathscr{M}_{-i}$, if $\phi_i^*$ solves
        \begin{equation}
\label{P3separate}
\tag{P1-$\mathscr{M}_{-i}$}
\begin{split}
    \max_{\phi_i: T \rightarrow O_i}\;& \mathbb{E}_t[u_i(\phi_i(t),t)], \\
    \text{s.t. }\; 
    v_i(\phi_i(t),t \mid \mathscr{M}_{-i}) 
    &\ge 
    v_i(\phi_i(t'),t \mid \mathscr{M}_{-i}),\quad \forall t,t',
\end{split}
\end{equation}
then there exists $\sigma_A$ that satisfies condition \ref{cd1} in Definition \ref{cd1} such that 
\[ \{\phi_i^*(s)\}_{s \in T} \in \argmax \limits_{\tilde{\mathscr{M}}_i \in 2^{O_i}} \mathbb{E}_t[u_i(\sigma_A(\tilde{\mathscr{M}}_i, \mathscr{M}_{-i},t),t)].\]
\end{lemma}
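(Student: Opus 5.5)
The plan is to construct the agent's strategy $\sigma_A$ explicitly, for the fixed rival profile $\mathscr{M}_{-i}$, and then apply a revelation-principle argument in which every menu deviation of principal $i$ is mapped to a direct mechanism feasible for \eqref{P3separate}. Since $\mathscr{M}_{-i}$ is fixed, only menu profiles of the form $(\tilde{\mathscr{M}}_i,\mathscr{M}_{-i})$ matter for principal $i$'s objective; on all other profiles $\sigma_A$ can be any selection satisfying condition \ref{cd1}.

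\textbf{Construction of $\sigma_A$.} For each menu $\tilde{\mathscr{M}}_i$ and type $t$, let
\[
B(\tilde{\mathscr{M}}_i,t)=\argmax_{o_i\in\tilde{\mathscr{M}}_i} v_i(o_i,t\mid\mathscr{M}_{-i}).
\]
This is the set of principal-$i$ outcomes that appear in some agent-optimal profile, because $\max_{o_i,o_{-i}}V=\max_{o_i}v_i$. I will let $\sigma_A$ pick, for principal $i$'s component, an element of $B(\tilde{\mathscr{M}}_i,t)$ that maximizes $u_i(\cdot,t)$, which is a tie-break in favor of $i$. For the rival components it picks any $o_{-i}\in\argmax_{o_{-i}\in\mathscr{M}_{-i}}V(o_i,o_{-i},t)$. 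This is deterministic, and condition \ref{cd1} holds by construction. Existence of these maximizers is covered by the paper's standing convention on ``$\max$''. If needed, I would require $\tilde{\mathscr{M}}_i$ to be closed and use compactness together with continuity, or else restrict to menus on which the argmax exists.

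\textbf{Deviations reduce to direct mechanisms.} Fix any $\tilde{\mathscr{M}}_i$ and define $\psi(t)$ to be the principal-$i$ outcome chosen by $\sigma_A$ at $(\tilde{\mathscr{M}}_i,\mathscr{M}_{-i},t)$. Since $\psi(t)$ maximizes $v_i(\cdot,t\mid\mathscr{M}_{-i})$ over $\tilde{\mathscr{M}}_i\supseteq\{\psi(s)\}_s$, the rule $\psi$ satisfies the IC constraints of \eqref{P3separate}. Principal $i$'s payoff from $\tilde{\mathscr{M}}_i$ equals $\mathbb{E}_t[u_i(\psi(t),t)]$, because $u_i$ depends only on her own outcome. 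Hence this payoff is at most $\mathbb{E}_t[u_i(\phi_i^*(t),t)]$. Mixed deviations are handled by linearity.

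\textbf{Equilibrium menu attains the bound; this is the main obstacle.} It remains to show that offering $\mathscr{M}_i^*=\{\phi_i^*(s)\}_s$ yields exactly $\mathbb{E}_t[u_i(\phi_i^*(t),t)]$ under $\sigma_A$. By IC, $\phi_i^*(t)\in B(\mathscr{M}_i^*,t)$. The tie-break then selects some $\psi^*(t)$ with $u_i(\psi^*(t),t)\ge u_i(\phi_i^*(t),t)$, so the payoff is at least the optimum. By the previous step it is also at most the optimum, so it is equal. This is the step where the tie-breaking in favor of principal $i$ is essential: with an arbitrary selection among indifferent items, the agent could pick an outcome worse for $i$ than $\phi_i^*(t)$. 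One subtlety is measurability of $t\mapsto\psi(t)$ and of the tie-break selection, which is needed for the expectations to be defined. I would handle this with a measurable selection theorem (Kuratowski–Ryll-Nardzewski), relying on the compactness of $O_i$, or take it as part of the implicit well-definedness convention.
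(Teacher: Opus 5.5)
Your proof is correct and follows the paper's revelation-principle argument: any deviation menu is mapped to the direct mechanism traced out by the agent's optimal choices, which is feasible for \eqref{P3separate} and therefore no better for principal $i$ than $\phi_i^*$. The differences are cosmetic. At $\{\phi_i^*(s)\}_{s\in T}$ the paper has the agent select $\phi_i^*(t)$ itself, which is available by IC, so no $u_i$-maximizing tie-break over your set $B$ and no extra existence assumption is needed there. Off path the paper leaves $\sigma_A$ arbitrary, even mixed, and bounds each deviation by taking the $u_i$-best point in the support of the agent's choice.
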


Lemma \ref{prop:separate} is the engine of the decomposition. Its content is a
single-principal statement: fixing the rival menu $\mathscr{M}_{-i}$, principal
$i$'s problem of choosing an optimal \emph{menu}---an arbitrary subset of
$O_i$---reduces to choosing an optimal \emph{direct mechanism}---a function from
types to outcomes. That is, the search over $2^{O_i}$, which is essentially a search over all indirect mechanisms, collapses to a search over $O_i^T$, with the optimal menu recovered
simply as the range of the optimal direct mechanism. This reduction is what makes
the ``decomposition'' operational: each principal can solve a standard screening
problem, and the solution automatically pins down her best-response menu.

The logic is as follows. Suppose $\phi_i^*$ solves \eqref{P3separate}, and consider an arbitrary deviation menu $\tilde{\mathscr{M}}_i$. Given $\tilde{\mathscr{M}}_i$ and $\mathscr{M}_{-i}$, any agent strategy satisfying condition \ref{cd1} in Definition \ref{d0} induces some type-contingent outcome for principal $i$, which is itself a feasible direct mechanism under \eqref{P3separate}. But $\phi_i^*$ is already optimal among all feasible direct mechanisms, so the deviation cannot improve principal $i$'s payoff. In short, optimality over direct mechanisms implies optimality over all mechanisms.

\medskip\noindent\textbf{Step 2: From individual optimality to equilibrium.} The second result assembles individually optimal mechanisms into a PBE. This is where the genuinely multi-principal challenge arises.

\begin{proposition}
\label{p1}
For all $(\phi_1^*,...,\phi_n^*)$ such that for any $i, \phi_i^*$ solves
    \begin{equation}
\label{P3}
\tag{P1-$\phi_{-i}^*$}
\begin{split}
    \max_{\phi_i: T \rightarrow O_i}\;& \mathbb{E}_t[u_i(\phi_i(t),t)], \\
    \text{s.t. }\; 
    v_i(\phi_i(t),t \mid \prod_{j \neq i} \{\phi_j^*(s)\}_{s \in T}) 
    &\ge 
    v_i(\phi_i(t'),t \mid \prod_{j \neq i} \{\phi_j^*(s)\}_{s \in T}),\quad \forall t,t',
\end{split}
\end{equation}
    and  
\begin{equation} \label{2+}
\begin{aligned}
\text{for any } t,\ \text{there exists } 
(o_1^t, \dots, o_n^t) 
&\in \argmax_{\substack{
o_i \in \{\phi_i^{*}(s)\}_{s \in T},\forall i
}} V((o_1,\dots,o_n), t) \\
\text{such that for any } i,\quad
u_i(o_i^t, t) 
&= u_i(\phi_i^{*}(t), t),
\end{aligned}
\tag{UPR}
\end{equation}
there exists $\sigma_A$ such that $(\mathscr{M}_1,...,\mathscr{M}_n,\sigma_A)$ is a PBE, in which for any $i, \mathscr{M}_i=\{\phi_i^{*}(s)\}_{s \in T}$.
\end{proposition}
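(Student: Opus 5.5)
The plan is to build a single agent strategy $\sigma_A$ by cases on the menu profile: the equilibrium profile $\mathscr{M}^*=(\mathscr{M}_1^*,\dots,\mathscr{M}_n^*)$ with $\mathscr{M}_i^*=\{\phi_i^*(s)\}_{s\in T}$, unilateral deviations by one principal, and all other profiles. Then I would check conditions \ref{cd1} and \ref{cd2} of Definition \ref{d0}. Profiles with two or more deviators never matter for principals' incentives in a pure-menu profile. On them, and on any profile where the argmax is well defined, let $\sigma_A$ choose any maximizer of $V(\cdot,t)$; condition \ref{cd1} then holds trivially.

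\textbf{On path.} For each $t$, let $\sigma_A(\mathscr{M}^*,t)$ put mass one on the profile $(o_1^t,\dots,o_n^t)$ given by \eqref{2+}. This choice is optimal for the agent by construction, so condition \ref{cd1} holds at $\mathscr{M}^*$. Moreover, principal $i$'s on-path payoff is $\mathbb{E}_t[u_i(o_i^t,t)]=\mathbb{E}_t[u_i(\phi_i^*(t),t)]$, which is exactly the value of \eqref{P3}. This is the only place where \eqref{2+} is used: it guarantees that one selection simultaneously gives every principal her program value. Without it, the tie-breaking that is good for one principal could hurt another.

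\textbf{Unilateral deviations.} Fix $i$ and a menu $\tilde{\mathscr{M}}_i\neq\mathscr{M}_i^*$, with the rivals holding $\mathscr{M}_{-i}^*$. I would apply Lemma \ref{prop:separate} with $\mathscr{M}_{-i}=\mathscr{M}_{-i}^*$. Since $\phi_i^*$ solves \eqref{P3}, which is \eqref{P3separate} for this rival profile, the lemma yields an agent strategy $\sigma_A^i$ satisfying condition \ref{cd1} under which $\mathscr{M}_i^*$ is a best response among all menus. I would set $\sigma_A(\tilde{\mathscr{M}}_i,\mathscr{M}_{-i}^*,t)=\sigma_A^i(\tilde{\mathscr{M}}_i,\mathscr{M}_{-i}^*,t)$. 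The sets of profiles with a unilateral deviation by $i$ are disjoint across $i$ and disjoint from $\mathscr{M}^*$, so these definitions do not conflict.

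The one subtle step is that Lemma \ref{prop:separate} certifies the payoff of $\mathscr{M}_i^*$ under $\sigma_A^i$ at $\mathscr{M}^*$, whereas I have replaced $\sigma_A^i$ on path with the \eqref{2+} selection. I therefore need the deviation payoff under $\sigma_A^i$ to be bounded by the \eqref{P3} value, not merely by the payoff $\sigma_A^i$ happens to assign on path. I would reopen the argument behind the lemma to get this. Any optimal agent selection at $(\tilde{\mathscr{M}}_i,\mathscr{M}_{-i}^*)$ induces a type-contingent (possibly random) outcome for principal $i$ that satisfies the incentive constraints of \eqref{P3separate}, because each type maximizes $v_i(\cdot,t\mid\mathscr{M}_{-i}^*)$ over $\tilde{\mathscr{M}}_i$. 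If the selection is mixed, I would pass to a measurable pure selection from the support that gives principal $i$ at least the expected payoff; this is possible since every outcome in the support is optimal for the agent. The deviation payoff is then at most the value of \eqref{P3}, which equals the on-path payoff. Mixed deviations by principals cannot do better by the linearity remark after Definition \ref{d0}. This pure-selection step, together with the measurability and existence of maxima assumed in the paper's footnote, is where I expect the only real care to be needed. With it in hand, condition \ref{cd2} holds for every $i$, and $(\mathscr{M}_1^*,\dots,\mathscr{M}_n^*,\sigma_A)$ is a PBE.
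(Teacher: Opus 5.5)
Your proposal is correct and follows essentially the paper's proof. On path you select the \eqref{2+} profile so that every principal obtains her \eqref{P3} value. You then reuse the argument inside the proof of Lemma~\ref{prop:separate}: a $u_i$-maximizing pure selection from the support of any agent-optimal response to a deviation is feasible in \eqref{P3}, so every deviation payoff is bounded by that value. As your own bound already shows, importing $\sigma_A^i$ off path is unnecessary; the paper simply lets $\sigma_A$ be any agent-optimal strategy on all other menu profiles.
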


``UPR'' stands for utility-preserving recombination, which captures the essence of the condition: among the agent's utility-maximizing outcome profiles, there exists one that preserves every principal's utility relative to the intended allocation. Note that Lemma \ref{prop:separate} implies that if each $\phi_i^*$ solves \eqref{P3}, then for each principal $i$ there exists some $\sigma_A^i$ satisfying condition \ref{cd1} in Definition \ref{d0} under which $\{\phi_i^*(s)\}_{s \in T}$ is a best response to $\prod_{j \neq i} \{\phi_j^*(s)\}_{s \in T}$. However, different principals may require different agent strategies to sustain their respective best responses. The role of \eqref{2+} is to guarantee that a single agent strategy works for all principals simultaneously---that is, the selections $\sigma_A^1, \ldots, \sigma_A^n$ can be unified into a common $\sigma_A$, thereby ensuring mutual best response.

\medskip\noindent\textbf{Why is \eqref{2+} needed?} In a single-principal setting, if the agent is indifferent between two outcomes, the principal can simply specify that ties are broken in her favor. With multiple principals, however, breaking ties in favor of one principal may come at the expense of another. The following example demonstrates that this tension is genuine: there exist profiles of individually optimal mechanisms for which \emph{no} agent strategy supports a PBE, because the principals' interests at ties are irreconcilable.

\begin{example}
\label{e1+}
Consider an example with two principals $\{1,2\}$ and one agent. Let $O_1=\{a,a'\}, O_2=\{b,b'\}, T=\{t_1,t_2\}$. The principals and the agent's payoffs are given in the tables below, where the first (second) entry is principal 1 (2)'s payoff, and the third entry is the agent's payoff. It is straightforward that fix $t$, $u_i$ only depends on $o_i,$ for any $i \in \{1,2\}$. 

\begin{minipage}{0.45\textwidth}
\centering
\[
\begin{array}{c|cc}
 & b & b' \\ \hline
a & (5,5,5) & (5,0,10) \\
a' & (0,5,10) & (0,0,0) \\
 & t_1  &
\end{array}
\]
\end{minipage}
\quad
\begin{minipage}{0.45\textwidth}
\centering
\[
\begin{array}{c|cc}
 & b & b' \\ \hline
a & (0,0,0) & (0,10,0) \\
a' & (10,0,0) & (10,10,10) \\
 & t_2 &
\end{array}
\]
\end{minipage}

Claim that $\phi_1^*(t_1)=a, \phi_1^*(t_2)=a';\phi_2^*(t_1)=b, \phi_2^*(t_2)=b'$ solves \eqref{P3} for any distribution on $T$. I first check the feasibility. IC for $t_1$ is guaranteed by
\begin{equation*}
    \begin{split}
        v_1(a,t_1|\{b,b'\})=10 \geq 10=v_1(a',t_1|\{b,b'\}), \\
        v_2(b,t_1|\{a,a'\})=10 \geq 10=v_2(b',t_1|\{a,a'\}),
    \end{split}
\end{equation*}
and IC for $t_2$ is guaranteed by
\begin{equation*}
    \begin{split}
        v_1(a',t_2|\{b,b'\})=10 \geq 0=v_1(a,t_2|\{b,b'\}), \\
        v_2(b',t_2|\{a,a'\})=10 \geq 0=v_2(b,t_2|\{a,a'\}).
    \end{split}
\end{equation*}

The optimality of $(\phi_1^*,\phi_2^*)$ can be argued through the following: under $t_1$, $\{a\}$ and $\{b\}$ strictly dominates $\{a'\}$ and $\{b'\}$ respectively, and under $t_2$, $\{a'\}$ and $\{b'\}$ strictly dominates $\{a\}$ and $\{b\}$ respectively. 

However, $(\phi_1^*, \phi_2^*)$ does not satisfy \eqref{2+}. In particular, $\argmax \limits_{o_i \in \{\phi_i^*(s)\}_{s \in T}, \forall i}V((o_1,o_2),t_1)=\{(a,b'), (a',b)\}$, but $u_2((a,b'),t_1)<u_2((a,b),t_1); u_1((a',b),t_1)<u_1((a,b),t_1)$. 

The economic content of this example is transparent: at type $t_1$, the agent prefers complementary ``cross-pairings''---$a$ with $b'$, or $a'$ with $b$---over the ``matching'' pair $(a,b)$ that both principals prefer. Each cross-pairing benefits one principal at the expense of the other. Since neither principal is willing to sacrifice her payoff, no tie-breaking rule can simultaneously satisfy both, and no PBE exists when $t_1$ is sufficiently likely (specifically, when $Pr(t=t_1) > \frac{4}{5}$).

I proceed to show this formally. Given principal 2 plays $\{b,b'\}$, for any $\sigma_A$ that satisfies condition \ref{cd1} in Definition \ref{d0},
\begin{equation*}
\begin{aligned}
u_1(\sigma_A(\{a\}, \{b,b'\}, t_1)) &= 5,  &\qquad
u_1(\sigma_A(\{a\}, \{b,b'\}, t_2)) &= 0, \\
u_1(\sigma_A(\{a'\}, \{b,b'\}, t_1)) &= 0, & 
u_1(\sigma_A(\{a'\}, \{b,b'\}, t_2)) &= 10, \\
u_1(\sigma_A(\{a,a'\}, \{b,b'\}, t_1)) &= 5p_A, &
u_1(\sigma_A(\{a,a'\}, \{b,b'\}, t_2)) &= 10,
\end{aligned}
\end{equation*}
where $p_A$ is given by $\sigma_A(\{a,a'\}, \{b,b'\},t_1):=p_A(a,b')+(1-p_A)(a',b)$. Similarly, given principal 1 plays $\{a,a'\}$, 
\begin{equation*}
\begin{aligned}
u_2(\sigma_A(\{a,a'\}, \{b\}, t_1)) &= 5,  &\qquad
u_2(\sigma_A(\{a,a'\}, \{b\}, t_2)) &= 0, \\
u_2(\sigma_A(\{a,a'\}, \{b'\}, t_1)) &= 0, &
u_2(\sigma_A(\{a,a'\}, \{b'\}, t_2)) &= 10, \\
u_2(\sigma_A(\{a,a'\}, \{b,b'\}, t_1)) &= 5 - 5p_A, &
u_2(\sigma_A(\{a,a'\}, \{b,b'\}, t_2)) &= 10.
\end{aligned}
\end{equation*}

Let $Pr(t=t_1)=p$, then if there exists $\sigma_A$ such that $(\{a,a'\}, \{b,b'\}, \sigma_A)$ is a PBE, then there exists $p_A$ such that
\begin{equation*}
    \begin{split}
        p\cdot 5p_A+10(1-p) \geq 5p, \\
        p \cdot 5(1-p_A)+10(1-p) \geq 5p,
    \end{split}
\end{equation*}
which implies that $\frac{3p-2}{p} \leq p_A \leq \frac{2-2p}{p}$. In that case, for $p$ such that $2-2p<3p-2$, i.e., $p>\frac{4}{5}$, there does not exist any $p_A$ that makes the inequality hold, thus there does not exist $\sigma_A$ such that $(\{a,a'\},\{b,b'\}, \sigma_A)$ is a PBE. 

\end{example}

\begin{remark}
\label{rmk:upnr}
The following claim is not correct: 
\label{prop:uprsimplified}
    there exists $(\phi_1^*,...,\phi_n^*)$ such that for any $i, \phi_i^*$ solves \eqref{P3} and \eqref{2+} is satisfied only if there exists $(\phi_1^{*},...,\phi_n^{*})$ such that for any $i, \phi_i^*$ solves \eqref{P3}, and for any $t$
    \begin{equation*} (\phi_1^{*}(t),...,\phi_n^{*}(t)) \in \argmax \limits_{o_i \in \{\phi_i^{*}(s)\}_{s \in T}, \forall i} V((o_1,...,o_n),t). \end{equation*}
    In particular, $(\phi_1^{**},...,\phi_n^{**})$, where for any $i,t,$
    \[ \phi_i^{**}(t)=o_i^t,\]
    where $o_i^t$ is specified in \eqref{2+}, does not necessarily satisfy for any $i, \phi_i^{**}$ solves \eqref{P3}.\footnote{Rigorously, \eqref{P3} written with respect to $\phi_{-i}^{**}$. } Examples can be found in Online Appendix \ref{sec:examplesrmk1}. 
\end{remark}

\begin{remark}
    The following claim is not correct: given $(\phi_1^*,...,\phi_n^*)$ that solves \eqref{P3} and satisfies \eqref{2+}, for any $(\mathscr{M}_1,...,\mathscr{M}_n)$ such that for any $i, \{\phi_i^*(s)\}_{s \in T} \subseteq \mathscr{M}_i$, there exists $\sigma_A$ such that $(\mathscr{M}_1,...,\mathscr{M}_n, \sigma_A)$ is a PBE. This is because if only $\{\phi_i^*(s)\}_{s \in T} \subseteq \mathscr{M}_i$ is satisfied, then the agent may choose outcomes that are included in $\mathscr{M}_i$ but are excluded from $\{\phi_i^*(s)\}_{s \in T}$ when facing $\mathscr{M}_i$. Example \ref{e1+} can also be used to invalidate this claim: $(\{a,a'\}, \{b,b'\})$ clearly consists of a $(\mathscr{M}_1,\mathscr{M}_2)$ that satisfies for any $i, \{\phi_i^*(s)\}_{s \in T} \subseteq \mathscr{M}_i$ for any $(\phi_1^*, \phi_2^*)$ that solves \eqref{P3} and satisfies \eqref{2+}, but when $Pr(t=t_1)>\frac{4}{5}$, there does not exist $\sigma_A$ such that $(\{a,a'\},\{b,b'\}, \sigma_A)$ is a PBE. 
\end{remark}

\medskip\noindent\textbf{When is \eqref{2+} satisfied?} The following corollaries provide sufficient conditions under which \eqref{2+} holds automatically, so that any profile of individually optimal mechanisms constitutes a PBE without further verification. Each corollary addresses a different source of the guarantee: the first two restrict the agent's utility function $V$ (eliminating ties or ensuring that the agent's ranking of each principal's outcomes does not depend on rivals' outcomes), while the third restricts the equilibrium structure (at most one principal screens). 

\begin{corollary}
    \label{c3}
    For any $(\phi_1^*,...,\phi_n^*)$ such that for any $i, \phi_i^*$ solves \eqref{P3}, if for any $t$ and for any $o,o' \in \prod \limits_{i=1}^n \{\phi_i^*(s)\}_{s \in T}, $ \[o \neq o' \Rightarrow V(o,t) \neq V(o',t),\] then there exists $\sigma_A$ such that $(\mathscr{M}_1,...,\mathscr{M}_n,\sigma_A)$ is a PBE, in which for any $i, \mathscr{M}_i=\{\phi_i^{*}(s)\}_{s \in T}$.
\end{corollary}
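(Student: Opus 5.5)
The plan is to reduce Corollary \ref{c3} to Proposition \ref{p1}. Since each $\phi_i^*$ solves \eqref{P3} by hypothesis, it suffices to verify \eqref{2+}. For that I will show that, under the non-indifference hypothesis, the on-path profile $(\phi_1^*(t),\dots,\phi_n^*(t))$ is itself the unique maximizer of $V(\cdot,t)$ over $\prod_i\{\phi_i^*(s)\}_{s\in T}$. Taking $o_i^t=\phi_i^*(t)$ then gives $u_i(o_i^t,t)=u_i(\phi_i^*(t),t)$ trivially for every $i$.

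Fix $t$ and write $\mathscr{M}_j=\{\phi_j^*(s)\}_{s\in T}$. Because the menus are finite or compact, the maximum is attained; let $\hat o=(\hat o_1,\dots,\hat o_n)$ be a maximizer of $V(\cdot,t)$ over $\prod_j \mathscr{M}_j$. The hypothesis says $V(\cdot,t)$ is injective on this product, so $\hat o$ is the unique maximizer. For each $i$, the definition of the indirect utility gives $v_i(\hat o_i,t\mid\mathscr{M}_{-i})=V(\hat o,t)=\max_{o\in\prod_j\mathscr{M}_j}V(o,t)$. Hence $\hat o_i$ is the unique maximizer of $v_i(\cdot,t\mid\mathscr{M}_{-i})$ over $\mathscr{M}_i$. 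Any other $o_i'\in\mathscr{M}_i$ has $v_i(o_i',t\mid\mathscr{M}_{-i})=V(o_i',o_{-i}',t)$ for some $o'_{-i}$, with $(o_i',o_{-i}')\neq\hat o$, so this value is strictly smaller.

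The IC constraint in \eqref{P3} states exactly that $\phi_i^*(t)$ maximizes $v_i(\cdot,t\mid\mathscr{M}_{-i})$ over $\mathscr{M}_i=\{\phi_i^*(t')\}_{t'}$. By the uniqueness just shown, $\phi_i^*(t)=\hat o_i$ for every $i$. Therefore $(\phi_1^*(t),\dots,\phi_n^*(t))=\hat o$ lies in the argmax, and \eqref{2+} holds with $o_i^t=\phi_i^*(t)$. Proposition \ref{p1} then delivers $\sigma_A$ making $(\mathscr{M}_1,\dots,\mathscr{M}_n,\sigma_A)$ a PBE.

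The only delicate point is the uniqueness step: ties in $v_i$ across distinct $o_i$ could arise only from ties in $V$ across distinct profiles, and non-indifference rules exactly these out. This is what makes every principal's individually chosen $\phi_i^*(t)$ coordinate on the same profile, which is the contrast with Example \ref{e1+}, where the tie $v_1(a,t_1)=v_1(a',t_1)$ allowed mismatched selections.
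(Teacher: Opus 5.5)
Your proof is correct and is essentially the paper's: the paper argues by contradiction (if the on-path profile were not in the argmax, some coordinate $\bar o_i\neq\phi_i^*(t)$ of a maximizer would give $v_i(\bar o_i,t\mid\mathscr{M}_{-i})>v_i(\phi_i^*(t),t\mid\mathscr{M}_{-i})$ by non-indifference, violating IC), whereas you run the same comparison directly through uniqueness of the maximizer; both then take $o_i^t=\phi_i^*(t)$ in \eqref{2+} and invoke Proposition \ref{p1}. One small fix: compactness of the menus is not assumed. Attainment of the maximum over the product instead follows from IC itself, since $\phi_i^*(t)$ maximizes $v_i(\cdot,t\mid\mathscr{M}_{-i})$ over $\mathscr{M}_i$ and the inner maximum defining $v_i$ is attained under the paper's standing convention.
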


The condition in Corollary \ref{c3} is $(\phi_1^*,...,\phi_n^*)$-specific. A more general and stronger version of the condition is the following: for any $t \in T, o,o' \in \prod \limits_{i=1}^n O_i,$
\begin{equation} \label{eq:noindifference} \tag{non-indifference} o \neq o' \Rightarrow V(o,t) \neq V(o',t). \end{equation}

\eqref{eq:noindifference} says that the agent is never indifferent between distinct outcome profiles. The condition is arguably not demanding in settings where both the type space and the outcome space are finite. In continuous-type settings, however, non-indifference may fail ---for example, in the delegation application (Section \ref{sdelegation}), the agent's utility depends only on the sum of the principals' outcomes, so profiles with the same sum are tied. In such cases, \eqref{2+} is verified directly or alternatively through the following corollaries. 

When \eqref{eq:noindifference} holds, \eqref{2+} is naturally satisfied (there are no ties to break), and moreover any PBE must involve a pure agent strategy. To see the latter, note that if $o \neq o'$ both belong to $\supp \sigma_A(\mathscr{M}_1,...,\mathscr{M}_n,t)$, then agent optimality requires $V(o,t)=V(o',t)$, contradicting \eqref{eq:noindifference}.

\begin{corollary}
\label{c0}
    If $V$ is additive separable, i.e., for any $i \in \mathcal{N},$ there exists $ v^i: O_i \times T \rightarrow \mathbb{R}$ such that
    \begin{equation*}
        \text{ for any } (o_1,...,o_n) \in \prod \limits_{j=1}^n O_j, t \in T, V((o_1,...,o_n),t)=\sum \limits_{i=1}^n v^i(o_i,t),
    \end{equation*}
    then for any $(\phi_1^*,...,\phi_n^*)$ such that for any $i, \phi_i^*$ solves \eqref{P3}, there exists $\sigma_A$ such that $(\mathscr{M}_1,...,\mathscr{M}_n,\sigma_A)$ is a PBE, in which for any $i, \mathscr{M}_i=\{\phi_i^{*}(s)\}_{s \in T}$.

    In addition, the result can be strengthen to the case where $V$ is weakly separable \`{a} la \cite{AttarMajumdarPiaserPorteriro2008}. Formally, $V$ is weakly separable if for any $i,t, o_i,o_i',o_{-i},o_{-i}',$
\[
V(o_i,o_{-i},t)
> V(o_i',o_{-i},t)
\Rightarrow
V(o_i,o_{-i}',t)
> V(o_i',o_{-i}',t).\footnote{Weak separability as defined implies the weak-inequality
version: $V(o_i,o_{-i},t) \geq V(o_i',o_{-i},t)$ implies
$V(o_i,o_{-i}',t) \geq V(o_i',o_{-i}',t)$. Suppose not: there exist
$i, t, o_i, o_i', o_{-i}, o_{-i}'$ such that
$V(o_i,o_{-i},t) \geq V(o_i',o_{-i},t)$ but
$V(o_i,o_{-i}',t) < V(o_i',o_{-i}',t)$. Applying the weak-separability
condition with $o_i'$ and $o_i$ swapped yields
$V(o_i',o_{-i}',t) > V(o_i,o_{-i}',t) \Rightarrow V(o_i',o_{-i},t) >
V(o_i,o_{-i},t)$, contradicting the first inequality.}
\]
\end{corollary}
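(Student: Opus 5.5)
The plan is to reduce Corollary \ref{c0} to Proposition \ref{p1}: I will show that, under weak separability (additive separability is a special case), any profile $(\phi_1^*,\dots,\phi_n^*)$ in which each $\phi_i^*$ solves \eqref{P3} automatically satisfies \eqref{2+}. Fix $t$ and write $\mathscr{M}_j=\{\phi_j^*(s)\}_{s\in T}$. The natural candidate for the profile in \eqref{2+} is the intended allocation itself, $(o_1^t,\dots,o_n^t)=(\phi_1^*(t),\dots,\phi_n^*(t))$. Utility preservation then holds trivially, so everything hinges on showing that this profile lies in $\argmax_{o_j\in\mathscr{M}_j,\forall j}V(o,t)$.

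\textbf{Step 1: weak separability gives coordinatewise improvement.} Let $o^*=(o_1^*,\dots,o_n^*)$ be any maximizer of $V(\cdot,t)$ over $\prod_j\mathscr{M}_j$; it exists under the standing convention that $\max$ is well defined. I will swap coordinates one at a time, replacing $o_i^*$ by $\phi_i^*(t)$ for $i=1,\dots,n$, and show that $V$ never decreases at any swap. Consider the swap of coordinate $i$ while the other coordinates are held at some fixed profile $o_{-i}\in\mathscr{M}_{-i}$. The IC constraint in \eqref{P3} for type $t$, applied with $t'$ such that $\phi_i^*(t')=o_i^*$, reads
\[
v_i(\phi_i^*(t),t\mid\mathscr{M}_{-i})\ge v_i(o_i^*,t\mid\mathscr{M}_{-i}).
\]
Let $\hat o_{-i}$ attain the maximum defining $v_i(o_i^*,t\mid\mathscr{M}_{-i})$. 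Then
\[
V(\phi_i^*(t),\hat o_{-i}',t)\ge V(o_i^*,\hat o_{-i},t)\ge V(o_i^*,\hat o_{-i}',t)
\]
for a suitable maximizer $\hat o_{-i}'$ of $V(\phi_i^*(t),\cdot,t)$. This chain does not immediately compare $\phi_i^*(t)$ and $o_i^*$ at a common $o_{-i}$, which is the subtle point.

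\textbf{Step 2: resolving the mismatch via weak separability.} I argue by contradiction. Suppose that for some $o_{-i}$ we had $V(o_i^*,o_{-i},t)>V(\phi_i^*(t),o_{-i},t)$. Weak separability transfers this strict ranking to every $o_{-i}'$, in particular to the maximizer of $V(\phi_i^*(t),\cdot,t)$ over $\mathscr{M}_{-i}$. That yields
\[
v_i(o_i^*,t\mid\mathscr{M}_{-i})>v_i(\phi_i^*(t),t\mid\mathscr{M}_{-i}),
\]
contradicting IC. Hence $V(\phi_i^*(t),o_{-i},t)\ge V(o_i^*,o_{-i},t)$ for every $o_{-i}\in\mathscr{M}_{-i}$. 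The footnote's weak-inequality version of weak separability would serve equally well here.

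\textbf{Step 3: conclude and invoke Proposition \ref{p1}.} Chaining the swaps gives $V(\phi^*(t),t)\ge V(o^*,t)$, so $\phi^*(t)$ is a maximizer and \eqref{2+} holds with $o_i^t=\phi_i^*(t)$. Proposition \ref{p1} then delivers $\sigma_A$ and the PBE. The additive case is immediate, since additive separability implies weak separability; alternatively, $v_i(o_i,t\mid\mathscr{M}_{-i})=v^i(o_i,t)+\text{const}$, so IC directly ranks $v^i$.

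The main obstacle is Step 2, where the indirect utility compares the two items at different rival selections. Weak separability is precisely the assumption that removes this mismatch, so the step should go through. The only care needed is to state the strict-to-strict transfer in the correct direction.
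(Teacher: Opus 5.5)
Your proof is correct and follows essentially the same route as the paper. Both arguments verify \eqref{2+} with $o^t=\phi^*(t)$ by swapping coordinates one at a time, use weak separability to turn the IC constraint on $v_i(\cdot,t\mid\mathscr{M}_{-i})$ into a comparison at a common rival profile, and then invoke Proposition \ref{p1}. The only difference is organizational: the paper argues by contradiction, building the chain from a maximizer $\bar o$ with the weak-inequality form and using IC only at the first strict link, whereas you derive coordinatewise dominance of $\phi_i^*(t)$ directly from IC with only the strict form and then chain.
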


\begin{corollary}
\label{c2}
    For any $(\phi_1^*,...,\phi_n^*)$ such that for any $i, \phi_i^*$ solves \eqref{P3}, if there exists $\mathcal{N}' \subseteq \mathcal{N}$ such that 
    \begin{enumerate}
    \item $|\mathcal{N}'|=n-1$, and 
    \item for any $i \in \mathcal{N}', \card(\{\phi_i^*(s)\}_{s \in T})=1$, 
    \end{enumerate}
    then there exists $\sigma_A$ such that $(\mathscr{M}_1,...,\mathscr{M}_n,\sigma_A)$ is a PBE, in which for any $i, \mathscr{M}_i=\{\phi_i^{*}(s)\}_{s \in T}$.
    
\end{corollary}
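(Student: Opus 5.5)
The plan is to derive Corollary \ref{c2} directly from Proposition \ref{p1}: I will show that \eqref{2+} holds automatically when all principals but one offer singleton menus. Without loss of generality let $\mathcal{N}'=\mathcal{N}\setminus\{k\}$. Each $i\neq k$ then has $\phi_i^*(s)=\bar o_i$ for all $s$. The menu profile $\prod_{i}\{\phi_i^*(s)\}_{s\in T}$ therefore equals $\{\phi_k^*(s)\}_{s\in T}\times\prod_{i\neq k}\{\bar o_i\}$. Every outcome profile in the agent's feasible set has the form $(o_k,\bar o_{-k})$ with $o_k$ ranging over the range of $\phi_k^*$.

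Fix $t$. I take the candidate recombination to be the intended allocation itself, $(o_1^t,\dots,o_n^t)=(\phi_1^*(t),\dots,\phi_n^*(t))=(\phi_k^*(t),\bar o_{-k})$. Since the rival menus are singletons, the indirect utility is simply $v_k(o_k,t\mid\prod_{j\neq k}\{\bar o_j\})=V(o_k,\bar o_{-k},t)$. The incentive constraint in \eqref{P3} for principal $k$ then reads $V(\phi_k^*(t),\bar o_{-k},t)\ge V(\phi_k^*(t'),\bar o_{-k},t)$ for all $t'$. This says exactly that $(\phi_k^*(t),\bar o_{-k})$ belongs to $\argmax_{o_i\in\{\phi_i^*(s)\}_{s\in T},\forall i}V((o_1,\dots,o_n),t)$, because the argmax is taken over precisely these profiles.

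The utility-preservation part of \eqref{2+} is then trivial, since $o_i^t=\phi_i^*(t)$ for every $i$, including every $i\neq k$, where $o_i^t=\bar o_i=\phi_i^*(t)$. Hence \eqref{2+} holds, and Proposition \ref{p1} delivers the $\sigma_A$ and the PBE with $\mathscr{M}_i=\{\phi_i^*(s)\}_{s\in T}$.

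The only point requiring care is that the IC constraints of the non-screening principals $i\neq k$ are not needed. Their feasibility is already part of the hypothesis that $\phi_i^*$ solves \eqref{P3}, and \eqref{2+} only needs principal $k$'s IC, which transfers verbatim because singleton rival menus make $v_k$ coincide with $V$. No step here is a real obstacle; the substantive work is contained in Proposition \ref{p1}.
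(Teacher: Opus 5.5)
Your proposal is correct and follows essentially the same route as the paper's proof. Singleton rival menus collapse the screening principal's indirect utility to $V$, so her IC constraint in \eqref{P3} places the intended allocation $(\phi_1^*(t),\dots,\phi_n^*(t))$ in the agent's argmax over the menu profile; \eqref{2+} then holds trivially with $o^t=\phi^*(t)$, and Proposition \ref{p1} delivers the PBE.
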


Corollary \ref{c2} says that \eqref{2+} is automatically satisfied whenever at most one principal offers a non-trivial menu. The intuition is simple: if $n-1$ principals each offer a single outcome, then the only ``choice'' the agent makes is from the remaining principal's menu. With only one menu to choose from, the single-principal logic applies and ties can be broken in that principal's favor without affecting the others.

This result shares the spirit of \cite{Peters2003Negotiation}, which investigates the existence of equilibria where every principal is restricted to a take-it-or-leave-it offer (a singleton menu). My result is less restrictive in two respects: it requires only $n-1$ (rather than all $n$) principals to offer singletons, and it establishes equilibrium existence through the decomposition program \eqref{P3} rather than through primitive restrictions on utility functions (the \emph{no-externalities assumption} in \cite{Peters2003Negotiation}).

\subsection{Characterization of Constructed PBE}
\label{subsec:pareto}

Which equilibria can be recovered through the decomposition? The following definition formalizes the class of equilibria that the decomposition method produces.

\begin{definition}
    \label{def:p3induced}
     $(\mathscr{M}_1,...,\mathscr{M}_n)$ is a \eqref{P3}-induced menu profile if there exists $(\phi_1^*,...,\phi_n^*)$ such that for any $i, \phi_i^*$ solves \eqref{P3}, and \eqref{2+} is satisfied, and for any $i, \mathscr{M}_i=\{\phi_i^*(s)\}_{s \in T}$. Correspondingly, a PBE $(\mathscr{M}_1,...,\mathscr{M}_n, \sigma_A)$ is a \eqref{P3}-induced PBE if $(\mathscr{M}_1,...,\mathscr{M}_n)$ is a \eqref{P3}-induced menu profile. 
\end{definition}

A natural question is whether \eqref{P3}-induced PBE are ``special'' or whether the decomposition captures a broad class of equilibria. The answer depends on the richness of the agent's preferences. Under \eqref{eq:noindifference}, the characterization is sharp:

\begin{proposition}
\label{prop:p3induced}
    If \eqref{eq:noindifference} holds, then a PBE $(\mathscr{M}_1,...,\mathscr{M}_n, \sigma_A)$ is a \eqref{P3}-induced PBE if and only if for any $i \in \mathcal{N}, o_i \in \mathscr{M}_i$, there exists $t^{o_i} \in T$ such that  
    \[ o_i =\sigma_A(\mathscr{M}_1,...,\mathscr{M}_n,t^{o_i})\big|_{O_i}.\]
\end{proposition}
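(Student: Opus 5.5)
Under \eqref{eq:noindifference} the agent's strategy is pure, so $\sigma_A(\mathscr{M}_1,\dots,\mathscr{M}_n,t)$ is a single profile $(o_1,\dots,o_n)$. This makes $\phi_i(t):=\sigma_A(\mathscr{M}_1,\dots,\mathscr{M}_n,t)\big|_{O_i}$ well defined, and it also means \eqref{2+} has a unique candidate profile at each $t$. I prove the two directions separately.

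\emph{Necessity.} Let the PBE be \eqref{P3}-induced with $\mathscr{M}_i=\{\phi_i^*(s)\}_{s\in T}$. By \eqref{2+} and non-indifference, the agent's unique maximizer at $t$ over $\prod_j \mathscr{M}_j$ is $(o_1^t,\dots,o_n^t)$, so $\sigma_A(\mathscr{M},t)=(o^t_1,\dots,o^t_n)$. Fix $o_i\in\mathscr{M}_i$, say $o_i=\phi_i^*(s)$; I must show it is selected by some type. I claim $o_i^s=\phi_i^*(s)$. The incentive constraints in \eqref{P3} say $\phi_i^*(s)$ maximizes $v_i(\cdot,s\mid\mathscr{M}_{-i})$ over $\mathscr{M}_i$. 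The $i$-th component of the agent's joint maximizer at $s$ is also such a maximizer.

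Non-indifference gives uniqueness. If $o_i'\neq o_i''$ both attained $\max_{o_i}v_i(o_i,s\mid\mathscr{M}_{-i})$, then with their respective best rival profiles $o_{-i}',o_{-i}''$ we would get two distinct profiles with equal $V$, a contradiction. So the maximizer of $v_i(\cdot,s\mid\mathscr{M}_{-i})$ over $\mathscr{M}_i$ is unique, hence $o_i^s=\phi_i^*(s)=o_i$ and $t^{o_i}=s$ works. In fact this shows that $(\phi_1^*(t),\dots,\phi_n^*(t))$ is itself the agent's choice at every $t$.

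\emph{Sufficiency.} Suppose every item of every menu is chosen by some type, and define $\phi_i$ as above. Then $\{\phi_i(s)\}_{s\in T}=\mathscr{M}_i$ exactly, by the hypothesis together with $\phi_i(t)\in\mathscr{M}_i$. \eqref{2+} holds trivially by taking $o^t=\sigma_A(\mathscr{M},t)=(\phi_1(t),\dots,\phi_n(t))$. The incentive constraints in \eqref{P3} hold because agent optimality gives $v_i(\phi_i(t),t\mid\mathscr{M}_{-i})=V(\sigma_A(\mathscr{M},t),t)\ge V(\phi_i(t'),o_{-i},t)$ for all $o_{-i}\in\mathscr{M}_{-i}$, hence the right-hand side is at least $v_i(\phi_i(t'),t\mid\mathscr{M}_{-i})$.

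The main step is optimality of $\phi_i$ in \eqref{P3}. Take any feasible $\phi_i'$ and let $\mathscr{M}_i'=\{\phi_i'(s)\}_{s\in T}$. Consider the deviation to $\mathscr{M}_i'$. By non-indifference, $\sigma_A(\mathscr{M}_i',\mathscr{M}_{-i},t)$ is the unique maximizer, and its $i$-component is the unique maximizer of $v_i(\cdot,t\mid\mathscr{M}_{-i})$ over $\mathscr{M}_i'$. IC of $\phi_i'$ says $\phi_i'(t)$ is such a maximizer, so the two coincide. Hence the deviation payoff equals $\mathbb{E}_t[u_i(\phi_i'(t),t)]$. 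PBE condition \ref{cd2} gives $\mathbb{E}_t[u_i(\phi_i(t),t)]\ge\mathbb{E}_t[u_i(\phi_i'(t),t)]$, since the equilibrium payoff is $\mathbb{E}_t u_i(\phi_i(t),t)$.

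Thus $(\phi_1,\dots,\phi_n)$ solves the programs with rival ranges $\mathscr{M}_{-i}$, satisfies \eqref{2+}, and has ranges $\mathscr{M}_i$, so the PBE is \eqref{P3}-induced. The obstacle I expect is exactly this identification of the deviation outcome with the direct mechanism $\phi'_i$. It relies on the uniqueness argument for maximizers of the indirect utility; without non-indifference, off-path tie-breaking could differ from $\phi_i'$.
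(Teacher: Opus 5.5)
Your proof is correct and follows the paper's argument essentially step for step. For necessity, the incentive constraints in \eqref{P3} together with non-indifference force the $i$-component of the agent's unique choice at $s$ to be $\phi_i^*(s)$. For sufficiency, you set $\phi_i(t):=\sigma_A(\mathscr{M}_1,\ldots,\mathscr{M}_n,t)\big|_{O_i}$ and show that any better feasible $\phi_i'$ would make a deviation to its range profitable, since non-indifference pins down the agent's choice there as $\phi_i'(t)$. You are slightly more explicit than the paper on two points it leaves implicit: why the maximizer of $v_i(\cdot,s\mid\mathscr{M}_{-i})$ is unique, and that the constructed $\phi_i$ satisfies the incentive constraints.
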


The condition in Proposition \ref{prop:p3induced} requires that every item in every principal's menu is actually chosen by some type of agent in equilibrium---there are no ``unused'' menu items. Under \eqref{eq:noindifference}, this is the \emph{only} distinction between \eqref{P3}-induced PBE and general PBE. In other words, the decomposition captures all equilibria except those in which some principal's menu contains items that no type selects on the equilibrium path.

At first glance, equilibria with unused menu items may seem pathological. However, they can play a substantive economic role: an unused item in principal $i$'s menu can affect the agent's \emph{indirect utility} for principal $-i$'s outcomes, thereby tightening the incentive constraints that principal $-i$ faces. In this sense, unused items function as \textbf{strategic shields}---they have no direct effect on the equilibrium allocation, but their presence in the menu deters the rival from profitable deviations by redirecting the agent's choices when the rival deviates. The \eqref{P3} decomposition, by construction, works with minimal menus (the ranges of direct mechanisms) and therefore cannot capture equilibria sustained by such shields.

I now turn to the welfare properties of \eqref{P3}-induced PBE. A natural concern is whether the decomposition method, by working with minimal menus, systematically misses welfare-superior equilibria.

\begin{definition}
\label{def:paretooptimal}
    Given a type distribution. A PBE $(\mathscr{M}_1,...,\mathscr{M}_n,\sigma_A)$ is Pareto optimal if there does not exist PBE $(\mathscr{M}_1',...,\mathscr{M}_n',\sigma_A')$ such that for any $i \in \mathcal{N}$,
    \[ \mathbb{E}_t[u_i(\sigma_A'(\mathscr{M}_1',...,\mathscr{M}_n',t),t)] \geq \mathbb{E}_t[u_i(\sigma_A(\mathscr{M}_1,...,\mathscr{M}_n,t),t)] ,\]
    and there exists $j \in \mathcal{N}$ such that the inequality holds strictly. Correspondingly, if such $(\mathscr{M}_1',...,\mathscr{M}_n',\sigma_A')$ exists, then $(\mathscr{M}_1',...,\mathscr{M}_n',\sigma_A')$ Pareto dominates $(\mathscr{M}_1,...,\mathscr{M}_n,\sigma_A)$. 
\end{definition}


The characterization in Proposition \ref{prop:p3induced} immediately constraints where Pareto improvements can come from:  under \eqref{eq:noindifference}, the \emph{only} potential source of Pareto improvement over the decomposition is through equilibria sustained by strategic shields---unused menu items whose presence deters rival deviations. An example that shows this potential source of Pareto improvement can indeed make a difference is demonstrated in Online Appendix \ref{sec:paretoimprovements}. 

Beyond the non-indifference condition, the following results establish Pareto optimality of \eqref{P3}-induced PBE in two additional settings. 

\begin{proposition}
\label{prop:PO}
The followings hold:
\begin{enumerate}
    \item If $\card(T)=1$, then any PBE that is Pareto optimal among all \eqref{P3}-induced PBE is Pareto optimal. 
\item If $n=2$, and there exists $i \in \{1,2\}$ such that for any $t,t'$, \[\argmax \limits_{o_i} u_i(o_i,t) =\argmax \limits_{o_i} u_i(o_i,t') := \{o_i^*\},\]  then for all type distributions, any PBE that is Pareto optimal among all \eqref{P3}-induced PBE is Pareto optimal. If \eqref{eq:noindifference} is further satisfied, then any \eqref{P3}-induced PBE is Pareto optimal.\footnote{This statement can be extended \`{a} la Corollary \ref{c2}, i.e., the result still holds if there exists $\mathcal{N}' \subseteq \mathcal{N}$ such that $|\mathcal{N}'|=n-1$, and for any $i \in \mathcal{N}', t,t', $ \[\argmax \limits_{o_i} u_i(o_i,t) =\argmax \limits_{o_i} u_i(o_i,t') := \{o_i^*\}.\]} 
\end{enumerate}
\end{proposition}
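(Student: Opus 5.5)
The plan is to sandwich payoffs. First I would show that, in each case, every \eqref{P3}-induced PBE that is Pareto optimal among \eqref{P3}-induced PBE gives the principals a payoff vector that is an upper bound in the relevant coordinates. Then I would show that any PBE which weakly improves on that vector must coincide with it in payoffs.

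\emph{Part 1 ($\card(T)=1$).} With a single type $t$, the incentive constraint in \eqref{P3} reads $v_i(\phi_i(t),t\mid\cdot)\ge v_i(\phi_i(t),t\mid\cdot)$, which is vacuous. Hence $\phi_i^*$ solves \eqref{P3} if and only if $\phi_i^*(t)\in\argmax_{o_i}u_i(o_i,t)$; this set is nonempty by compactness of $O_i$ and the paper's standing convention that ``$\max$'' is well-defined. I would take such a profile of ideal points. Each menu is a singleton, so \eqref{2+} holds trivially: the only available profile is $(\phi_1^*(t),\dots,\phi_n^*(t))$. Corollary \ref{c2}, or Proposition \ref{p1} directly, then yields a \eqref{P3}-induced PBE. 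Indeed every \eqref{P3}-induced PBE gives each principal $\max_{o_i}u_i(o_i,t)$, which is an upper bound on her payoff in any PBE whatsoever. So no PBE can strictly improve any coordinate, and Pareto optimality among \eqref{P3}-induced PBE implies Pareto optimality.

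\emph{Part 2.} Fix the principal $i$ with type-independent ideal point $o_i^*$, and let $j$ be the other principal. I would proceed in four steps.

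(a) In any \eqref{P3}-induced profile, the constant mechanism $\phi_i\equiv o_i^*$ is feasible in principal $i$'s program, since its incentive constraints hold with equality. It attains the global maximum pointwise in $t$. So any solution gives $i$ the payoff $\bar U_i:=\mathbb{E}_t[u_i(o_i^*,t)]$, and uniqueness of the argmax forces $\phi_i^*=o_i^*$ almost surely.

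(b) Let $W_j$ be the value of principal $j$'s screening problem \eqref{P3separate} against the rival menu $\{o_i^*\}$, and let $\phi_j^*$ attain it. Then $(\{o_i^*\},\{\phi_j^*(s)\}_{s\in T})$ is a \eqref{P3}-induced PBE by Corollary \ref{c2}, with payoffs $(\bar U_i,W_j)$.

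(c) Now take any PBE $E'$ with payoffs $(U_i',U_j')$ and $U_i'\ge \bar U_i$. Since $\bar U_i$ is the pointwise maximum, $U_i'=\bar U_i$, so on path the agent selects $o_i^*$ from $M_i'$ for almost every $t$. For every $t$ and every $o_j$ in the support of the on-path choice, agent optimality gives
\[
V(o_j,o_i^*,t)=v_j(o_j,t\mid M_i')\ge v_j(o_j',t\mid M_i')\ge V(o_j',o_i^*,t)
\]
for all $o_j'\in M_j'$. Choosing a measurable selection $\phi_j$ from these supports therefore produces a mechanism that is feasible in \eqref{P3separate} against $\{o_i^*\}$. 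When the agent mixes, I would use expected utility to pick, type by type, a support point that is weakly best for $u_j$. This gives $U_j'\le W_j$.

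(d) Suppose a \eqref{P3}-induced PBE $E^*$ is Pareto optimal among \eqref{P3}-induced PBE, and suppose a PBE $E'$ Pareto dominates it. By (a), $E^*$ gives $i$ the payoff $\bar U_i$, so $U_i'=\bar U_i$. By (c), $U_j'\le W_j$. Hence $E'$ is weakly dominated by the \eqref{P3}-induced PBE from (b). If $E^*$'s $j$-payoff were below $W_j$, the profile in (b) would dominate $E^*$ within the \eqref{P3}-induced class, which is a contradiction. So $E^*$'s $j$-payoff equals $W_j$, and then $E'$ cannot strictly improve either coordinate.

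For the \eqref{eq:noindifference} strengthening, I would use (a) to conclude that in any \eqref{P3}-induced PBE principal $i$'s menu is essentially $\{o_i^*\}$. Principal $j$ therefore solves exactly the problem defining $W_j$, so every \eqref{P3}-induced PBE has payoffs $(\bar U_i,W_j)$ and is Pareto optimal by the argument in (d).

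The main obstacle I expect is step (c), together with the measure-zero caveat in the last step. In (c) one must handle mixed agent selection rigorously, pass from supports to a measurable selection, and ensure the selected $o_j$ is paired with $o_i^*$ rather than with another item of $M_i'$. In the last step, a null set of types could select items other than $o_i^*$ from principal $i$'s menu; I would treat this by noting it does not affect payoffs, or by working almost surely throughout.
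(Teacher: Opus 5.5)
Your argument for the \eqref{eq:noindifference} strengthening has a genuine gap: it never uses \eqref{eq:noindifference}. The step ``principal $j$ therefore solves exactly the problem defining $W_j$, so every \eqref{P3}-induced PBE has payoffs $(\bar U_i,W_j)$'' treats the value of $j$'s program as if it were $j$'s realized payoff in the PBE. A \eqref{P3}-induced PBE is any PBE whose menu profile is \eqref{P3}-induced. Its $\sigma_A$ is therefore an arbitrary optimal selection, not necessarily the favorable one built in Proposition \ref{p1}; \eqref{2+} only says a favorable selection exists.

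Without non-indifference your claim is false. Let $O_i=\{o_i^*\}$, $O_j=\{x,y\}$, and $T=\{t_1,t_2\}$ with equal probabilities. Let $V\equiv 0$, $u_j(x,t_1)=u_j(y,t_2)=1$, and $u_j(y,t_1)=u_j(x,t_2)=0$. Then $\phi_j^*(t_1)=x$, $\phi_j^*(t_2)=y$ uniquely solves $j$'s program, $W_j=1$, and \eqref{2+} holds. Yet an agent who mixes $\tfrac12$--$\tfrac12$ at both types gives $j$ only $\tfrac12$. The deviations $\{x\}$ and $\{y\}$ also yield $\tfrac12$, so this is a \eqref{P3}-induced PBE that is Pareto dominated.

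The missing step is exactly where the paper uses the hypothesis. Under \eqref{eq:noindifference}, the agent's optimal choice from $\{o_i^*\}\times\{\phi_j^*(s)\}_{s\in T}$ is unique. Incentive compatibility of $\phi_j^*$ then forces it to be $(o_i^*,\phi_j^*(t))$, so $j$'s realized payoff is $\mathbb{E}_t[u_j(\phi_j^*(t),t)]=W_j$.

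The same program-value versus realized-payoff slip appears in (d), where you cite (a) for $E^*$'s realized payoff to $i$. There the conclusion survives, because in any PBE principal $i$ can deviate to $\{o_i^*\}$ and secure $\bar U_i$. That observation also lets (d) cover any PBE not dominated by a \eqref{P3}-induced PBE, which is how the paper reads the statement.

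Apart from this, your argument is correct, and for the first claim of Part 2 it takes a different route from the paper. The paper argues by contradiction. It assumes the dominating PBE $E'$ can be taken Pareto optimal and restricts to PBE in which $i$ literally offers $\{o_i^*\}$. It builds $\phi_{-i}'$ from the $u_{-i}$-best support points of $E'$ and shows $\phi_{-i}'$ solves \eqref{P3}; otherwise a solution would give a \eqref{P3}-induced PBE beating the Pareto optimal $E'$. It then uses $\phi_{-i}'$ to build a \eqref{P3}-induced PBE dominating the original.

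Your value sandwich works differently. You bound $j$'s payoff by $W_j$ in every PBE where $i$ gets $\bar U_i$, using the same support-selection device, and you attain $(\bar U_i,W_j)$ with a \eqref{P3}-induced PBE. This does not need a Pareto optimal dominator, whose existence the paper never establishes. It also covers dominating PBE in which $i$'s menu contains unused items. Both routes need a maximizer of $j$'s program and argmaxes over supports.

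Part 1 is essentially the paper's argument, with the ideal points chosen directly rather than read off $E'$. The null-set caveat you flag is shared with the paper, which simply takes $\phi_i^*\equiv o_i^*$.
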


\begin{remark}
    The second statement of Proposition \ref{prop:PO} cannot be proved through the following seemingly natural argument: if a \eqref{P3}-induced PBE $(\{o_i^*\}, \mathscr{M}_{-i}, \sigma_A)$ is Pareto dominated by another PBE $(\{o_i^*\}, \mathscr{M}_{-i}', \sigma_A')$, then $\mathscr{M}_{-i}'$ is a profitable deviation for $-i$, contradicting the original profile being a PBE. The flaw is that the agent's strategy $\sigma_A$ may respond differently to $\mathscr{M}_{-i}'$ than $\sigma_A'$ does:
    \[ \mathbb{E}_t[u_{-i}(\sigma_A(\{o_i^*\}, \mathscr{M}_{-i}',t),t)] \leq \mathbb{E}_t[u_{-i}(\sigma_A(\{o_i^*\}, \mathscr{M}_{-i},t),t)] < \mathbb{E}_t[u_{-i}(\sigma_A'(\{o_i^*\}, \mathscr{M}_{-i}',t),t)].\]
    Under $\sigma_A$, the deviation to $\mathscr{M}_{-i}'$ is not profitable; under $\sigma_A'$, it would be. The Pareto improvement is possible precisely because the dominating PBE uses a \emph{different} agent strategy that responds more favorably to the new menu. This subtlety---that Pareto comparisons across PBE involve different agent strategies, while deviation checks within a PBE hold the agent's strategy fixed---is a distinctive feature of the multi-principal setting.
\end{remark}


\subsection{Outside Options}
\label{section:independent}

I now specialize to the setting where $u_i: \Delta(O_i) \rightarrow \mathbb{R}$ for each principal $i$. Without outside options, every solution to the decomposition program automatically constitutes a PBE. Formally, for all $(\phi_1^*,\ldots,\phi_n^*)$ such that for any $i$, $\phi_i^*$ solves
\begin{equation}
\label{P4}
\tag{P2}
\begin{aligned}
\max_{\phi_i: T \rightarrow O_i}\;
&\mathbb{E}_t\!\left[u_i(\phi_i(t))\right] \\
\text{s.t. }\;
v_i\!\left(\phi_i(t),\, t \mid \prod_{j \neq i} \{\phi_j^*(s)\}_{s \in T}\right)
&\ge
v_i\!\left(\phi_i(t'),\, t \mid \prod_{j \neq i} \{\phi_j^*(s)\}_{s \in T}\right),
\quad \forall\, t,t',
\end{aligned}
\end{equation}
there exists $\sigma_A$ such that $(\mathscr{M}_1,\ldots,\mathscr{M}_n,\sigma_A)$ is a PBE with $\mathscr{M}_i = \{\phi_i^*(s)\}_{s \in T}$. The argument is immediate: since $u_i$ is type-independent, the constant mechanism $\phi_i(t) = \argmax_{\tilde{o}_i} u_i(\tilde{o}_i)$ is always feasible, so any solution $\phi_i^*$ satisfies $u_i(\phi_i^*(t)) = \max_{\tilde{o}_i} u_i(\tilde{o}_i)$ for all $t$, making \eqref{2+} hold trivially.

Outside options restore a meaningful screening problem. I distinguish two forms. Under \emph{intrinsic} common agency, the agent must contract with all principals or none:
\begin{equation} \label{i} v_i(\phi_i(t),t|\mathscr{M}_{-i}) \geq 0. \tag{I} \end{equation}
Under \emph{delegated} common agency, the agent can selectively opt out of individual principals:
\begin{equation} \label{d} v_i(\phi_i(t),t|\textstyle\prod_{j \neq i} [\mathscr{M}_j \cup \{\underline{o}_j\}]) \geq v_i(\underline{o}_i, t |\textstyle\prod_{j \neq i} [\mathscr{M}_j \cup \{\underline{o}_j\}]), \tag{D} \end{equation}
where $\underline{o}_i \in O_i$ is the outside option at principal $i$. The agent's optimality condition adjusts correspondingly: under intrinsic common agency,
\begin{equation}
\label{1i}
\sigma_A(\mathscr{M}_1,\ldots,\mathscr{M}_n,t) = \begin{cases} \argmax_{o \in \times_i \mathscr{M}_i} V(o,t) & \text{if } \max_{o \in \times_i \mathscr{M}_i} V(o,t) \geq 0, \\ quit & \text{otherwise;} \end{cases}
\tag{1I}
\end{equation}
under delegated common agency,
\begin{equation}
\label{1d}
(o_1,\ldots,o_n) \in \supp \sigma_A(\mathscr{M}_1,\ldots,\mathscr{M}_n,t) \;\Rightarrow\; (o_1,\ldots,o_n) \in \argmax_{\tilde{o}_i \in \mathscr{M}_i \cup \{\underline{o}_i\},\, \forall i} V(\tilde{o}_1,\ldots,\tilde{o}_n,t).
\tag{1D}
\end{equation}

The decomposition extends directly, with adapted UPR conditions:

\begin{proposition}
\label{prop:intrinsicca}
Under intrinsic common agency, for all $(\phi_1^*,\ldots,\phi_n^*)$ such that for any $i$, $\phi_i^*$ solves
\begin{equation}
\label{P4'}
\tag{P2-I}
\begin{split}
\max_{\phi_i: T \rightarrow O_i \cup \{quit\}} \;&\mathbb{E}_t[u_i(\phi_i(t))], \\
\text{s.t. } v_i(\phi_i(t),t|\prod_{j \neq i} \{\phi_j^*(s)\}_{s \in T} \setminus \{quit\}) &\geq v_i(\phi_i(t'),t|\prod_{j \neq i} \{\phi_j^*(s)\}_{s \in T} \setminus \{quit\}), \;\forall t,t', \\
v_i(\phi_i(t),t|\prod_{j \neq i} \{\phi_j^*(s)\}_{s \in T} \setminus \{quit\}) &\geq 0, \;\forall t,
\end{split}
\end{equation}
and \eqref{eq:4i} holds, there exists $\sigma_A$ such that $(\mathscr{M}_1,\ldots,\mathscr{M}_n,\sigma_A)$ is a PBE with $\mathscr{M}_i = \{\phi_i^*(s)\}_{s \in T} \setminus \{quit\}$.

Under delegated common agency, for all $(\phi_1^*,\ldots,\phi_n^*)$ such that for any $i$, $\phi_i^*$ solves
\begin{equation}
\label{P4''}
\tag{P2-D}
\begin{split}
\max_{\phi_i: T \rightarrow O_i} \;&\mathbb{E}_t[u_i(\phi_i(t))], \\
\text{s.t. } v_i(\phi_i(t),t|\textstyle\prod_{j \neq i} [\{\phi_j^*(s)\}_{s \in T} \cup \{\underline{o}_j\}]) &\geq v_i(\phi_i(t'),t|\textstyle\prod_{j \neq i} [\{\phi_j^*(s)\}_{s \in T} \cup \{\underline{o}_j\}]), \;\forall t,t', \\
v_i(\phi_i(t),t|\textstyle\prod_{j \neq i} [\{\phi_j^*(s)\}_{s \in T} \cup \{\underline{o}_j\}]) &\geq v_i(\underline{o}_i,t|\textstyle\prod_{j \neq i} [\{\phi_j^*(s)\}_{s \in T} \cup \{\underline{o}_j\}]), \;\forall t,
\end{split}
\end{equation}
and \eqref{4D} holds, there exists $\sigma_A$ such that $(\mathscr{M}_1,\ldots,\mathscr{M}_n,\sigma_A)$ is a PBE with $\mathscr{M}_i = \{\phi_i^*(s)\}_{s \in T}$.
\end{proposition}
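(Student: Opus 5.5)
The plan is to mirror the two-step architecture of Lemma \ref{prop:separate} and Proposition \ref{p1}, carrying the outside option along. Since the conditions \eqref{eq:4i} and \eqref{4D} are referenced but not displayed at this point, I read them as the natural analogues of \eqref{2+}. For \eqref{eq:4i}: for every $t$ with $\phi_i^*(t)\neq quit$ for all $i$, there is an $\argmax$ profile over $\prod_i \mathscr{M}_i$ giving each principal $u_i(\phi_i^*(t))$; for every $t$ with $\phi_i^*(t)=quit$ for some $i$, we need $\max_{o\in\prod_i\mathscr{M}_i}V(o,t)\le 0$, with ties broken toward quitting. For \eqref{4D} the same recombination requirement is imposed, but over $\prod_i(\mathscr{M}_i\cup\{\underline{o}_i\})$.

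\textbf{Step 1 (single-principal reduction with outside options).} Fix $i$ and the rival menus $\mathscr{M}_{-i}$. Take an arbitrary deviation $\tilde{\mathscr{M}}_i$ and any agent strategy satisfying \eqref{1i} (resp. \eqref{1d}). Let $\tilde\phi_i(t)$ be the $O_i$-component of the chosen outcome, or $quit$ if the agent quits. I need three facts about $\tilde\phi_i$:
\begin{itemize}
\item It satisfies the IC constraints in \eqref{P4'}, because the agent's choice maximizes $V$ over the product, so its $i$-component maximizes $v_i(\cdot,t\mid\mathscr{M}_{-i})$ over $\tilde{\mathscr{M}}_i$.
\item It satisfies the participation constraint \eqref{i}, because the agent participates only when the maximal $V$ is nonnegative.
\item Under delegation, the relevant indirect utility is computed against $\prod_{j\neq i}(\mathscr{M}_j\cup\{\underline{o}_j\})$, and the constraint \eqref{d} follows because $\underline{o}_i$ is always available.
\end{itemize}
Mixed selections are handled as in Lemma \ref{prop:separate}: each realized component is itself a feasible pure selection, so the expected payoff is a convex combination of payoffs from feasible direct mechanisms and is bounded by the optimum. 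Hence no deviation beats the value of \eqref{P4'} or \eqref{P4''}.

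\textbf{Step 2 (assembling the PBE).} Define $\sigma_A$ on the equilibrium menu profile by the recombination profile $(o_1^t,\dots,o_n^t)$ from \eqref{eq:4i}/\eqref{4D}, or by quitting where prescribed. At every other menu profile, $\sigma_A$ is any optimal selection, chosen as follows. On unilateral deviations by principal $i$, it breaks ties against $i$; equivalently, it uses the selection that minimizes $i$'s payoff among optimal choices. Step 1 already shows that even the most favorable selection cannot beat the program's value. At the equilibrium profile, the recombination condition gives each principal exactly $\mathbb{E}_t[u_i(\phi_i^*(t))]$, which is the program's optimal value. Since every deviation yields at most this value, no principal has a profitable deviation. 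Because unilateral deviations by different principals produce distinct menu profiles, these specifications never conflict, and we obtain a single $\sigma_A$.

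\textbf{Main obstacle.} The delicate step is intrinsic agency with types who quit. In \eqref{P4'}, a quitting type imposes no IC constraint toward the remaining items, yet in the actual game that type faces the full product $\prod_i\mathscr{M}_i$ and might prefer to participate. The recombination condition must therefore guarantee $\max V\le 0$ for such types. Indifference at zero must also be resolved toward quitting without harming any principal. There is a second subtlety: principal $i$'s program uses rivals' menus with $quit$ removed, so a type that quits because of a rival $j$ still appears in $i$'s program as a participant. I would verify that payoffs still match by observing that under intrinsic agency $u_i(quit)$ applies to that type for all principals. This requires the version of \eqref{eq:4i} to align quitting across principals, that is, $\phi_i^*(t)=quit$ for one $i$ implies the same for all $i$. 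I would make this the first thing to check against the precise statement of \eqref{eq:4i}.
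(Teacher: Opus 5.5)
Your two-step skeleton matches the paper's proof. The paper likewise takes a profitable deviation $\mathscr{M}_i'$ and builds $\phi_i'$, which quits where $\max V$ over the deviation profile is negative and otherwise picks a $u_i$-maximal element of the $O_i$-projection of the agent's support. It shows $\phi_i'$ is feasible for \eqref{P4'} (resp.\ \eqref{P4''}) and weakly beats the deviation, and it fixes $\sigma_A$ at the equilibrium profile via the recombination profile. Because $\phi_i'$ takes the $u_i$-maximizer in the support type by type, the bound holds for every selection consistent with \eqref{1i} or \eqref{1d}. Your off-path tie-breaking against the deviator is therefore unnecessary, and this pointwise selection is cleaner than a convex-combination argument over a continuum of types. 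Your delegated case, including your reading of \eqref{4D}, is fine.

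The gap is the intrinsic quitting analysis, which rests on a guessed \eqref{eq:4i} and two incorrect claims.

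First, quitting types do face IC in \eqref{P4'}. With $v_i(quit,t\mid\cdot)=0$ and the constraint imposed for all $t,t'$, $\phi_i^*(t)=quit$ gives $0\ge v_i(\phi_i^*(t'),t\mid\mathscr{M}_{-i})$ for all $t'$, hence $\max_{o\in\prod_j\mathscr{M}_j}V(o,t)\le 0$ automatically. This is exactly the inequality the paper uses when checking feasibility of $\phi_i'$. Likewise, IR forces every $\phi_j^*(t)=quit$ whenever that maximum is negative.

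Second, your remedy for the zero tie (quit at indifference, plus cross-principal alignment of quitting) is not available. Condition \eqref{1i} makes the agent participate whenever $\max V\ge 0$. A quit type can have $\max V(\cdot,t)=0$; this is forced, for instance, whenever one principal assigns $quit$ and another assigns a participating outcome. At such a type the agent must select some argmax profile, and your conditions do not ensure the quitting principal then receives $u_i(quit)$. The equilibrium payoff can then differ from the program value.

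The paper's \eqref{eq:4i} closes exactly this hole. Whenever $\max V\ge 0$, it demands an argmax profile $(o_1^t,\dots,o_n^t)$ with $u_i(o_i^t)=u_i(\phi_i^*(t))$ for every $i$, including principals with $\phi_i^*(t)=quit$. Set $\sigma_A$ to quit when $\max V<0$ and to this profile otherwise. Each principal then gets $\mathbb{E}_t[u_i(\phi_i^*(t))]$, with no alignment assumption needed.
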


The UPR conditions are:
\begin{equation}
\label{eq:4i}
\tag{UPR-I}
\begin{split}
&\max_{o_i \in \{\phi_i^{*}(s)\} \setminus \{quit\},\, \forall i} V((o_1,\ldots,o_n),t) < 0 \;\Rightarrow\; \forall i,\; \phi_i^{*}(t) = quit; \\
&\max_{o_i \in \{\phi_i^{*}(s)\} \setminus \{quit\},\, \forall i} V((o_1,\ldots,o_n),t) \ge 0 \;\Rightarrow\; \exists (o_1^t,\ldots,o_n^t) \in \argmax V \text{ s.t. } \forall i,\; u_i(o_i^t) = u_i(\phi_i^{*}(t)),
\end{split}
\end{equation}
\begin{equation}
\label{4D}
\tag{UPR-D}
\forall t,\; \exists (o_1^t,\ldots,o_n^t) \in \argmax_{o_i \in \{\phi_i^{*}(s)\}_{s \in T} \cup \{\underline{o}_i\},\, \forall i} V((o_1,\ldots,o_n),t) \text{ s.t. } \forall i,\; u_i(o_i^t) = u_i(\phi_i^{*}(t)).
\end{equation}

The logic is the same as Proposition \ref{p1}: the UPR conditions ensure aligned tie-breaking, now accounting for the quit option (intrinsic) or outside options (delegated). An example demonstrating that violations of \eqref{eq:4i} or \eqref{4D} can preclude equilibrium is in Online Appendix \ref{appendixegoo}. 

It is worth mentioning that Corollary \ref{c2}, where at least all but one principal offer singleton menus, still applies to both intrinsic common agency and delegated common agency. Formally, for intrinsic common agency: 

\begin{corollary}
\label{cor:n-1singletonoo}
For any $(\phi_1^*,\ldots,\phi_n^*)$ that solves \eqref{P4'}, if there exists $\mathcal{N}' \subseteq \mathcal{N}$ with $|\mathcal{N}'|=n-1$ such that for any $i \in \mathcal{N}'$, $\card(\{\phi_i^*(s)\}_{s \in T} \setminus \{quit\})=1$, and all principals assign the same types to the quit option (i.e., $\{t: \phi_{n_1}(t)=quit\} = \{t: \phi_{n_2}(t)=quit\}$ for any $n_1, n_2 \in \mathcal{N}$), then there exists $\sigma_A$ such that $(\mathscr{M}_1,\ldots,\mathscr{M}_n,\sigma_A)$ is a PBE with $\mathscr{M}_i = \{\phi_i^*(s)\}_{s \in T} \setminus \{quit\}$.

\end{corollary}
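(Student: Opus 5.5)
The strategy is to reduce Corollary \ref{cor:n-1singletonoo} to the intrinsic part of Proposition \ref{prop:intrinsicca}. It suffices to show that, under the two hypotheses, \eqref{eq:4i} holds automatically. Fix a solution $(\phi_1^*,\ldots,\phi_n^*)$ of \eqref{P4'}. Relabel so that $\mathcal{N}'=\{1,\ldots,n-1\}$, and write $\{\hat o_j\}=\{\phi_j^*(s)\}_{s\in T}\setminus\{quit\}$ for $j<n$. Let $Q=\{t:\phi_j^*(t)=quit\}$, which by hypothesis is common to all principals.

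For the second line of \eqref{eq:4i}, fix $t$ with $\max V\ge 0$ over the product of nonquit menus. There are two cases.

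\emph{Case $t\notin Q$.} Every principal's intended outcome is nonquit, and principal $j<n$'s intended outcome must equal $\hat o_j$. The product menu is $\{\hat o_{-n}\}\times(\{\phi_n^*(s)\}\setminus\{quit\})$. The agent's maximization over it therefore coincides with maximizing $v_n(\cdot,t\mid\{\hat o_{-n}\})$ over principal $n$'s nonquit menu. Principal $n$'s IC constraint in \eqref{P4'} gives that $\phi_n^*(t)$ attains this maximum. Hence $(\hat o_1,\ldots,\hat o_{n-1},\phi_n^*(t))$ lies in the argmax, and it reproduces every $\phi_i^*(t)$ exactly, so utility is trivially preserved.

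\emph{Case $t\in Q$.} All principals intend quit, yet the agent's best nonquit profile gives $V\ge 0$. This case must be excluded or handled, and I expect it to be the main obstacle. The plan is to use principal $n$'s participation and IC constraints. For $t\in Q$ we have $\phi_n^*(t)=quit$, and I need to argue either that this is feasible only if $v_n(o_n,t\mid\{\hat o_{-n}\})<0$ for all nonquit $o_n$ in the menu, or that ties resolve harmlessly. Here I would interpret the IC in \eqref{P4'} as including quit as a report with value $0$, i.e.\ $v_n(quit,\cdot)=0$. Under that reading, IC for $t$ only requires $0\ge v_n(o_n,t)$, so there is possibly a tie at exactly $0$.

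I would first try to show that if some nonquit $o_n$ in the menu gives $v_n(o_n,t)\ge0$, then one can pick $o^t=(\hat o_{-n},o_n)$ with $u_i(o_i^t)=u_i(quit)$ for all $i$. If that fails, I would instead argue that the convention in \eqref{1i} (quit only when the max is $<0$) together with \eqref{P4'} forces the strict inequality. The common-quit-set assumption is exactly what makes the principals $j<n$ consistent here. Without it, some $j<n$ would intend $\hat o_j$ while $n$ intends quit, and no recombination could match both.

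For the first line of \eqref{eq:4i}, take $t$ with $\max V<0$. Principal $n$'s participation constraint fails for every nonquit item, so feasibility of \eqref{P4'} for principal $n$ forces $\phi_n^*(t)=quit$, i.e.\ $t\in Q$. The common-quit hypothesis then gives $\phi_i^*(t)=quit$ for all $i$. With \eqref{eq:4i} established, Proposition \ref{prop:intrinsicca} delivers the PBE with $\mathscr{M}_i=\{\phi_i^*(s)\}_{s\in T}\setminus\{quit\}$, mirroring the proof of Corollary \ref{c2}.
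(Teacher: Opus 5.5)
\emph{Verdict: the approach is the intended one, but the case you flagged is a genuine gap, and as stated it cannot be closed in general.}

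Your route is to establish \eqref{eq:4i} and invoke Proposition \ref{prop:intrinsicca}, as in Corollary \ref{c2}. That is what the paper intends: it gives no separate proof, only the remark that Corollary \ref{c2} ``still applies''. Your handling of the first line of \eqref{eq:4i} and of types $t\notin Q$ is correct.

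The gap is the case you flagged, and neither proposed fix closes it. Principal $n$'s IC with $v_n(quit,\cdot)=0$ gives $\max V\le 0$ on $Q$, so the case is exactly $\max V=0$ at some $t\in Q$.
\begin{itemize}
\item Your second fix fails. \eqref{P4'} allows such a tie because IR and IC are weak, and \eqref{1i} restricts the agent, not $\phi^*$, so it cannot force strictness.
\item Your first fix needs more than feasibility. Every profile in the argmax is non-quit, so you need $u_j(\hat o_j)=u_j(quit)$ for every $j<n$ and some maximizer $o_n$ with $u_n(o_n)=u_n(quit)$. IC and IR say nothing about this.
\end{itemize}

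What can close the case is \emph{optimality} combined with positive probability.
\begin{itemize}
\item Suppose such a $t\in Q$ has positive probability. Reassigning it from $quit$ to $\hat o_j$ (for $j<n$), or to a maximizer $o_n$ (for $n$), is feasible in \eqref{P4'}, since it gives the agent $0$ just as quitting does. Optimality then gives $u_j(quit)\ge u_j(\hat o_j)$ and $u_n(quit)\ge u_n(o_n)$.
\item Suppose $T\setminus Q$ has positive probability. Comparing with the all-quit mechanism (payoffs are type-independent here) gives $u_j(\hat o_j)\ge u_j(quit)$.
\item Delete from principal $n$'s menu every item $o$ with $u_n(o)<u_n(quit)$, reassigning those types to their best remaining item or to quit. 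This is feasible and weakly improving, so $u_n(o)\ge u_n(quit)$ for every item chosen with positive probability.
\end{itemize}
With finitely many types, all of positive probability, these inequalities deliver exactly the equalities your first fix requires.

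Without positive probability the case cannot be closed, and the statement itself fails. Take the following instance.
\begin{itemize}
\item $n=2$, $T=[0,1]$ with $F$ uniform, $O_1=\{a,a'\}$, $O_2=\{b\}$.
\item $V((a,b),t)=0$ and $V((a',b),t)=-1$ for all $t$.
\item $u_1(a)=-1$, $u_1(a')=-5$, $u_1(quit)=0$, and $u_2\equiv 0$.
\item $\phi_1^*=\phi_2^*=quit$ on $[0,1)$, with $\phi_1^*(1)=a$ and $\phi_2^*(1)=b$.
\end{itemize}
Both mechanisms solve \eqref{P4'}: any feasible $\phi_1$ earns $-\Pr(\phi_1(t)=a)\le 0$ because $a'$ violates IR, and principal 2's objective is constant. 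Principal 1's non-quit range is $\{a\}$, and the quit sets coincide, so all hypotheses hold.

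Yet at $(\{a\},\{b\})$ every type has $\max V=0$. So \eqref{1i} forces $(a,b)$ at every type and principal 1 earns $-1$. Deviating to $\{a'\}$ makes every type quit and earns her $0$. Hence no PBE with these menus exists, and \eqref{eq:4i} fails at every $t<1$.

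The culprit is a mismatch: \eqref{P4'} lets a principal send zero-surplus types to $quit$, while \eqref{1i} forces the agent to participate at zero surplus. To finish your argument you need an extra hypothesis, for example:
\begin{itemize}
\item $\max V<0$ at every $t\in Q$. Then your two cases exhaust everything. This is what holds in the bundling application, where $\max V<0$ strictly below $t_*$.
\item A finite type space with full support, together with the optimality argument above.
\end{itemize}
With a continuum of types, note also that \eqref{eq:4i} can fail on a null set of tie types even when a PBE exists. For instance, take $V((a,b),t)=t-\tfrac12$, $u_1(a)=u_2(b)=1>0=u_i(quit)$, and common quit set $[0,\tfrac12]$. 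So in that setting the reduction should target an almost-everywhere version of \eqref{eq:4i}, not \eqref{eq:4i} verbatim.
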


\section{Delegation}
\label{sdelegation}

The two applications that follow---delegation (this section) and competitive bundling (Section \ref{sec:multiproductduopoly})---both feature solving the decomposition program \eqref{P3} using the envelope theorem. This requires the envelope formula to hold for the indirect utility function. Since the indirect utility function is defined as the upper envelope of a family of functions (Definition \ref{d1}), it is generically not differentiable everywhere, and the standard envelope theorem does not apply directly. In Online Appendix \ref{s21}, I establish that the envelope formula remains valid for upper envelopes of $C^1$ functions, provided that all kinks are ``upward''---that is, the right derivative weakly exceeds the left derivative at every point (Lemma \ref{l2})---and show that this condition is automatically satisfied for upper envelopes of $C^1$ functions (Lemma \ref{l3+}). I also show that such upper envelopes are absolutely continuous and admit integrable derivative bounds (Lemma \ref{l1}), which ensures that the ``upper envelope of upper envelopes'' structure arising from the indirect utility is well-behaved. 

For notational simplicity, I focus on two principals, indexed by $i \in \{1,2\}$, with compact outcome spaces $O_1, O_2 \subseteq \mathbb{R}$. The agent's private type $t$ is drawn from $T:=[\underline{t},\bar{t}]$ according to a distribution $F$ with continuous, strictly positive density $f$. Utilities are:
\[ 
V((o_1,o_2),t) = -(t - o_1 - o_2)^2, \qquad u_i(o_i,t) = -r_i(t)(o_i - o_i^*(t))^2,
\]
where $r_i: T \rightarrow \mathbb{R}_{+}$ and $o_i^*:T \rightarrow \mathbb{R}$ represent principal $i$'s weight and ideal point. The agent has no outside options, so the setting aligns with \eqref{P3}.

The quadratic structure has a transparent economic interpretation: the agent wants total policy $o_1 + o_2$ to match her type $t$, while each principal wants her own outcome $o_i$ close to her ideal $o_i^*(t)$. The tension between the agent's preference for adaptation and each principal's preference for control is the source of the screening problem.

\subsection{A Type-Independent-Peaked Principal}
\label{sec:singlepeaked}

I first consider the case where one principal has a type-independent ideal point: there exists $i$ such that
\begin{equation} \label{eq:single-peaked} o_i^*(t)=o_i^* \text{ for all } t \in T. \tag{type-independent-peaked} \end{equation}
Since $\argmax_{o_i} u_i(o_i,t)=\{o_i^*\}$ for all $t$, principal $i$ always wants the same outcome regardless of the agent's type. Consequently, $\phi_i^*(t)=o_i^*$ for all $t$ solves \eqref{P3} for principal $i$ regardless of $\phi_{-i}^*$, and Corollary \ref{c2} ensures that \eqref{2+} is automatically satisfied. In addition, by Proposition \ref{prop:PO}, any PBE that is Pareto optimal among \eqref{P3}-induced PBE is Pareto optimal in this setting.

The problem reduces entirely to \eqref{P3} for principal $-i$:
\begin{equation*}
\begin{split}
    \max_{o_{-i}: T \rightarrow O_{-i}}\; \mathbb{E}_t[-&r_{-i}(t)(o_{-i}(t)-o^*_{-i}(t))^2], \\
    \text{s.t. }\; 
    -(t-o_i^*-o_{-i}(t))^2 
    &\ge 
    -(t-o_i^*-o_{-i}(t'))^2 ,\quad \forall t,t'.
\end{split}
\end{equation*}
This is a standard single-principal delegation problem with the agent's bliss point shifted by $o_i^*$. Applying the envelope formula converts the IC constraint into an integral condition.\footnote{Here the envelope formula can be applied directly without the justifications in Online Appendix \ref{s21}, as the indirect utility $v_{-i}(o_{-i}, t | \{o_i^*\}) = -(t - o_i^* - o_{-i})^2$ is $C^1$ in $t$.}  To solve this, I follow the Lagrangian approach (\cite{AmadorBagwell2013OptimalDelegation}, \cite{KartikKleinerVanWeelden2021Delegation}): I construct a relaxed problem that penalizes violations of the envelope condition in the objective with weight $\kappa := \frac{\min_t r_{-i}(t)}{2}$, assuming it is well-defined. The choice of $\kappa$ ensures that the relaxed objective is a concave functional of $o_{-i}(\cdot)$, so that any solution satisfying the original envelope condition also solves the original problem. The details of the Lagrangian construction and the proof are in Appendix \ref{appendix:finitemenus}.

\begin{proposition}
\label{prop:singlepeaked}
Suppose \eqref{eq:single-peaked} is satisfied for principal $i$. The following are \eqref{P3}-induced menu profiles:
\begin{enumerate}
    \item \textbf{Full delegation} for principal $-i$: if for any $t, t-o_i^* \in O_{-i}$, and 
    \begin{itemize}
    \item $\kappa F(t)+2r_{-i}(t)(t-o_i^*-o^*_{-i}(t))f(t)$ is right-continuous and increasing on $[\underline{t}, \bar{t})$, and  
    \item $\underline{t}-o_i^*=o^*_{-i}(\underline{t})$, and $\bar{t}-o_i^*-o^*_{-i}(\bar{t}) \leq 0$,
    \end{itemize}
    then $\{\phi_i^*(s)\}_{s \in T}=\{o_i^*\}, \{\phi_{-i}^*(s)\}_{s \in T}=\{s-o_i^*\}_{s \in T}$.
    \item \textbf{No compromise} for principal $-i$: if for any $t$,
    \[(t-o_i^*)o^*_{-i}(t)-\frac{o^*_{-i}(t)^2}{2}+\frac{o^*_{-i}(\underline{t})^2}{2}-(\underline{t}-o_i^*)o^*_{-i}(\underline{t})
    =
    \int_{\underline{t}}^t o^*_{-i}(s)\mathbf{d}s,\]
    then $\{\phi_i^*(s)\}_{s \in T}=\{o_i^*\}, \{\phi_{-i}^*(s)\}_{s \in T}=\{o^*_{-i}(s)\}_{s \in T}$.\footnote{A special case to check: for all $t,t', o^*_{-i}(t)=o^*_{-i}(t'):=o^*_{-i}$, i.e., the case where \eqref{eq:single-peaked} is also satisfied for principal $-i$. In this case, $\underline{t}-o_i^*-o^*_{-i}=0$ and $\bar{t}-o_i^*-o^*_{-i} \leq 0$ cannot be satisfied at the same time, so that Proposition \ref{prop:singlepeaked} does not suggest Full Delegation for principal $-i$ is part of a \eqref{P3}-induced menu profile. Meanwhile,
\[(t-o_i^*)o^*_{-i}-\frac{(o^*_{-i})^2}{2}+[\frac{(o^*_{-i})^2}{2}-(\underline{t}-o_i^*)o^*_{-i}]
    =
    \int_{\underline{t}}^t o^*_{-i}\mathbf{d}s,\]
so that Proposition \ref{prop:singlepeaked} suggests that $(\{o_i^*\}, \{o^*_{-i}\})$ is a \eqref{P3}-induced menu profile, which is straightforward. } 
\end{enumerate}
\end{proposition}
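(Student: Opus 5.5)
The plan is to exploit the structure already noted before the statement. Under \eqref{eq:single-peaked} for principal $i$, the constant rule $\phi_i^*(t)=o_i^*$ attains $\max_{o_i}u_i(o_i,t)$ pointwise in $t$. It is trivially incentive compatible, so it solves \eqref{P3} for principal $i$ whatever $\phi_{-i}^*$ is. Since principal $i$'s menu is then a singleton, Corollary \ref{c2} applies with $\mathcal{N}'=\{i\}$ and delivers \eqref{2+} and the supporting $\sigma_A$, so each candidate profile is \eqref{P3}-induced. What remains is to show that, in each case, the proposed $\phi_{-i}^*$ solves principal $-i$'s problem against $\{o_i^*\}$. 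That problem is the single-principal delegation problem displayed before the statement. Write $b(t):=t-o_i^*$ for the agent's shifted bliss point and $x(t):=o_{-i}(t)$. I will use the standard reduction for quadratic-loss delegation: IC is equivalent to $x$ being nondecreasing together with the envelope condition $U(t)=U(\underline t)+\int_{\underline t}^t 2(b(s)-x(s))\,\mathbf{d}s$, where $U(t)=-(b(t)-x(t))^2$. No appeal to Online Appendix \ref{s21} is needed, since the indirect utility is $C^1$ in $t$.

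\emph{No compromise.} Here $x=o_{-i}^*$ attains the pointwise maximum of principal $-i$'s objective, so only feasibility must be checked. I expand $U(t)=-b(t)^2+2b(t)x(t)-x(t)^2$ and note that $b(t)^2-b(\underline t)^2=\int_{\underline t}^t 2b(s)\,\mathbf{d}s$. Substituting this into the envelope condition turns it into exactly the displayed identity (after dividing by $2$). To get monotonicity, I differentiate the identity and obtain $(b(t)-o_{-i}^*(t))\,o_{-i}^{*\prime}(t)=0$ a.e. So $o_{-i}^*$ is locally constant wherever the agent is not fully adapted, and it coincides with $b$, which is increasing, elsewhere. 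I then plan to verify IC directly from the identity. For $t'<t$ I would write $U(t)-\bigl[-(b(t)-x(t'))^2\bigr]$ as an integral over $[t',t]$ of $2(x(t')-x(s))$ and sign it using the structure just derived. I expect this monotonicity/IC verification to be the main obstacle, because the identity only pins down the envelope condition. If the direct argument stalls, the fallback is to show that jumps of $x$ across flat pieces can only go upward: a downward jump would violate the identity at the jump point, since $U$ must be continuous.

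\emph{Full delegation.} This is the harder optimization part; feasibility of $x=b$ is immediate, since $b(t)\in O_{-i}$, $x$ is increasing, and $U\equiv 0$. For optimality I follow the Lagrangian method of \cite{AmadorBagwell2013OptimalDelegation} and \cite{KartikKleinerVanWeelden2021Delegation}, as indicated in the text. I relax IC to the envelope condition and attach a nondecreasing multiplier function $\Lambda$ on $[\underline t,\bar t]$. I also add the penalty term with weight $\kappa=\min r_{-i}/2$, which makes the Lagrangian a concave functional of $x(\cdot)$. After integrating the envelope term by parts, the Gateaux derivative at $x=b$ in any feasible direction $h$ takes the form $\int h(t)\,\mathbf{d}\bigl[\kappa F(t)+2r_{-i}(t)(b(t)-o_{-i}^*(t))f(t)-\Lambda(t)\bigr]$ plus boundary terms. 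I choose $\Lambda(t)$ equal to this bracketed increasing, right-continuous function, which is admissible by the first bulleted hypothesis. The boundary conditions $b(\underline t)=o_{-i}^*(\underline t)$ and $b(\bar t)-o_{-i}^*(\bar t)\le 0$ are used to kill, or sign correctly, the endpoint terms. Concavity then shows that $x=b$ maximizes the Lagrangian. Since $x=b$ satisfies the envelope condition with complementary slackness, it solves the original problem. Combined with the first paragraph, both menu profiles are \eqref{P3}-induced.
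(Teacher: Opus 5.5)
Your handling of principal $i$ (the constant rule plus Corollary \ref{c2}) and of full delegation matches the paper's proof. One detail needs pinning down there. The paper sets $\Lambda(\bar t):=\kappa F(\bar t)$ instead of the bracket's value at $\bar t$. This kills the boundary term $(\Lambda(\bar t)-\kappa F(\bar t))\int h$ for every direction $h$, and the hypothesis $\bar t-o_i^*-o^*_{-i}(\bar t)\le 0$ is exactly what keeps that last jump upward.

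The no-compromise part has a genuine gap, and it cannot be closed from the stated hypothesis. You rightly note that the displayed identity is only the envelope condition, but monotonicity does not follow from it. Your fallback is also false. Continuity of $U$ at a jump of $x$ from $c_1$ to $c_2$ at $t_0$ only forces $b(t_0)=\tfrac12(c_1+c_2)$, which is equally compatible with $c_2<c_1$; the a.e.\ relation $(b-x)x'=0$ says nothing about jumps.

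Concretely, let $o_i^*=0$, $T=[0,2]$, $F$ uniform, $r_{-i}\equiv 1$, $O_{-i}=[0,2]$, and set $o^*_{-i}=2$ on $[0,1)$ and $o^*_{-i}=0$ on $[1,2]$. Both sides of the identity equal $2t$ for $t<1$ and $2$ for $t\ge 1$. Yet every type $t<1$ strictly prefers $0$ to $2$, so $o^*_{-i}$ is not IC. The only IC schedules with range $\{0,2\}$ give principal $-i$ expected payoff $-4$, while the constant schedule $1$ gives $-1$. Hence $(\{0\},\{0,2\})$ is not \eqref{P3}-induced. The paper's own argument for this part (take $\Lambda=\kappa F$, note the envelope condition holds, conclude optimality in the original problem) passes over the same point: it in effect treats the envelope identity as equivalent to IC.

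What is missing is an added hypothesis, such as $o^*_{-i}$ nondecreasing. With it, your direct route is the right one, since the Lagrangian is superfluous for a pointwise maximizer. You also need to fix a sign. Since $\partial_t[-(b(t)-x)^2]=-2(b(t)-x)$, the envelope condition is $U(t)=U(\underline t)+\int_{\underline t}^t 2(x(s)-b(s))\,\mathbf{d}s$; only this sign reproduces the displayed identity. Then for $t'<t$,
\[
U(t)+\bigl(b(t)-x(t')\bigr)^2=\int_{t'}^{t}2\bigl(x(s)-x(t')\bigr)\,\mathbf{d}s\ \ge 0,
\]
and for $t'>t$ the symmetric expression is $\int_t^{t'}2(x(t')-x(s))\,\mathbf{d}s$. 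With your sign the integrand flips, and IC would seem to require a nonincreasing $x$.
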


The full delegation result says that principal $-i$ grants the agent complete discretion---the agent chooses $o_{-i} = t - o_i^*$, exactly matching her bliss point. This is optimal when principal $-i$'s ideal point is sufficiently aligned with the agent's, formalized by the monotonicity condition on the Lagrangian. The no-compromise result says that principal $-i$ implements her own ideal point $o^*_{-i}(t)$ for every type, with no concession to the agent's preferences. This is optimal when the envelope condition is satisfied at the principal's ideal---intuitively, when the principal's ideal point moves with $t$ in a way that is already incentive-compatible.

\subsection{The General Case}
I now consider the general case where both principals may have
type-dependent ideal points. This is the substantive delegation
problem: neither principal can implement her ideal by default, and
the screening structure must be derived endogenously. I present two
results. The first (Proposition \ref{prop:delegationeasy}) identifies
a special alignment under which both principals achieve no
compromise; the second (Proposition \ref{prop:delegationhard})
characterizes equilibria in which one principal offers a coarse
discrete menu while the other exercises fine discretion within each
segment, which is the central delegation structure produced by the
decomposition. 

Given principal $i$ offering menu $\{\phi_i^*(s)\}_{s \in T}$, define the agent's optimal selection from this menu as:
\[
\tilde{o}_i(t,t'\mid o_{-i})\in\argmax_{o_i\in \{\phi_i^*(s)\}_{s \in T}}-(t-o_i-o_{-i}(t'))^2.
\]
When $S_i := \{\phi_i^*(s)\}_{s \in T}$ is compact, this equals the nearest-point projection $\Pi_{S_i}(t - o_{-i}(t'))$. Since $\Pi_{S_i}$ is monotone non-decreasing on $\mathbb{R}$ (by a direct optimality argument), it is differentiable almost everywhere by Lebesgue's theorem, with $\Pi_{S_i}'(\cdot) \in \{0,1\}$ almost everywhere.

Applying the envelope formula to the indirect utility transforms principal $-i$'s IC constraint into an integral condition. I then construct a relaxed problem similar to the one in Section \ref{sec:singlepeaked} The Lagrangian and Gateaux differential for this relaxed problem are developed in Appendix \ref{appendix:finitemenus}.

I begin with a result that does not require the Lagrangian machinery.

\begin{proposition}
\label{prop:delegationeasy}
 If $o_{i}^*(t)+o^*_{-i}(t)=t$ for all $t$,
        then No Compromise for both principals is a \eqref{P3}-induced menu profile: $\{\phi_i^*(s)\}_{s \in T}=\{o_i^*(s)\}_{s \in T}, \{\phi_{-i}^*(s)\}_{s \in T}=\{o^*_{-i}(s)\}_{s \in T}$.
\end{proposition}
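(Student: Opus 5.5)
Each principal obtaining her ideal point type by type is the unconstrained first-best, so the plan is to show this allocation is feasible in \eqref{P3} for both principals and then verify \eqref{2+}; Proposition \ref{p1} then delivers the PBE with $\mathscr{M}_i=S_i:=\{o_i^*(s)\}_{s\in T}$ and $\mathscr{M}_{-i}=S_{-i}:=\{o_{-i}^*(s)\}_{s\in T}$, and by Definition \ref{def:p3induced} this is a \eqref{P3}-induced menu profile.

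\textbf{Step 1 (optimality given feasibility).} Since $u_i(o_i,t)=-r_i(t)(o_i-o_i^*(t))^2\le 0$ with equality at $o_i=o_i^*(t)$, the candidate $\phi_i^*=o_i^*$ attains the pointwise upper bound $0$ of $\mathbb{E}_t[u_i(\phi_i(t),t)]$ over all $\phi_i:T\to O_i$. Hence it solves \eqref{P3} as soon as it satisfies the IC constraint given the rival menu $S_{-i}$, and symmetrically for $-i$. I will assume the ideal points take values in the respective $O_i$, which the statement implicitly requires.

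\textbf{Step 2 (IC via the indirect utility).} Principal $i$'s IC constraint reads $v_i(o_i^*(t),t\mid S_{-i})\ge v_i(o_i^*(t'),t\mid S_{-i})$, where $v_i(o_i,t\mid S_{-i})=\max_{o_{-i}\in S_{-i}}-(t-o_i-o_{-i})^2\le 0$. At the truthful report, choose $o_{-i}=o_{-i}^*(t)\in S_{-i}$; the hypothesis $o_i^*(t)+o_{-i}^*(t)=t$ gives $v_i(o_i^*(t),t\mid S_{-i})=0$, the global maximum of the agent's utility, so every deviation $t'$ is weakly worse. The same argument gives IC for $-i$. No envelope or Lagrangian machinery is needed, since the agent reaches her bliss point on path.

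\textbf{Step 3 (UPR).} For each $t$, set $(o_i^t,o_{-i}^t)=(o_i^*(t),o_{-i}^*(t))$. This profile lies in $S_i\times S_{-i}$ and attains $V=0$, so it belongs to the argmax in \eqref{2+}, and it trivially preserves $u_i(\phi_i^*(t),t)$ for both principals.

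The only conceivable obstacle is that ties exist: the argmax may contain other pairs with $o_i+o_{-i}=t$ that hurt one principal, so \eqref{eq:noindifference} fails. However, \eqref{2+} only requires existence of one utility-preserving selection, which the diagonal pair provides. Proposition \ref{p1} therefore applies and completes the argument.
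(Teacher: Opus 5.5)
Your proposal is correct and follows the paper's own argument. The appendix proof likewise notes that $\phi_i^*=o_i^*$ solves the unconstrained version of \eqref{P3}, and that because $(o_i^*(t),o_{-i}^*(t))$ attains the agent's bliss point and hence lies in $\argmax_{o\in O_1\times O_2}V(o,t)$, both feasibility and \eqref{2+} follow immediately. Your remark that this maximizer is not unique (any pair summing to $t$ ties) is more careful than the main text's ``unique optimum'' phrasing, and you are right that the existence of the diagonal selection is all \eqref{2+} requires.
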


When the principals' ideal points sum to the agent's type, each principal can implement her ideal without violating IC, because the agent's bliss point is automatically hit. The proof is immediate: $\phi_i^*(t) = o_i^*(t)$ solves the \emph{unconstrained} version of \eqref{P3}, and feasibility and UPR both follow from the fact that the intended allocation $(o_i^*(t), o^*_{-i}(t))$ is the agent's unique optimum.

\begin{remark}
If the condition is weakened to: for any $t$, there exist $t',t''$ such that $t-o_{i}^*(t)=o^*_{-i}(t')$ and $t-o^*_{-i}(t)=o_i^*(t'')$, then $\phi_i^*(t)=o_i^*(t)$ still mutually solves \eqref{P3}, but \eqref{2+} may fail. The weaker condition ensures that the ``right'' outcome for each type exists in the rival's menu, but does not prevent the agent from selecting a ``wrong'' cross-pairing that hurts one of the principals.
\end{remark}

The next result uses the Lagrangian approach to characterize equilibria
where principal 1 offers a finite menu of ``regimes'' and principal 2
receives piecewise full delegation within each regime.

\begin{proposition}\label{prop:delegationhard}
    Suppose there exist $K \in \mathbb{N}$, $t^1,...,t^K$ such that $\underline{t}=t^0<t^1<\dots<t^K<t^{K+1}=\bar{t}$, and $o_1^1,\dots,o_1^{K+1} \in O_1$ such that for all $x \in \{1,\dots,K+1\}$ and $t \in [t^{x-1}, t^x)$, $o_1^*(t)=o_1^x$ and $t-o_1^x \in O_2$. Furthermore, assume:
    \begin{enumerate}
        \item[(i)] $\kappa F(t)+2r_2(t)(t-o_1^x-o_2^*(t))f(t)$ is right-continuous and increasing on $[t^{x-1}, t^x)$ for all $x \in \{1,\dots,K+1\}$; and 
        \item[(ii)] for all $x \in \{1,\dots,K\}$,
        \[ \kappa F(t^x)+2r_2(t^x)(t^x-o_1^{x+1}-o_2^*(t^x))f(t^x) \geq \lim_{t \to (t^x)^{-}} \big[ \kappa F(t)+2r_2(t)(t-o_1^{x}-o_2^*(t))f(t) \big]; \text{ and } \]
        \item[(iii)] $\underline{t}-o_1^1-o_2^*(\underline{t})=0$ and $\bar{t}-o_1^{K+1}-o_2^*(\bar{t}) \leq 0$,
    \end{enumerate}
    then $\big(\{o_1^1,\dots,o_1^{K+1}\}, \bigcup_{x=1}^{K+1} \{s-o_1^x\}_{s \in [t^{x-1}, t^x)}\big)$ is a \eqref{P3}-induced PBE menu profile.\footnote{The interval $[t^{K}, t^{K+1})=[t^K, \bar{t})$ should be regarded as the closed $[t^K, \bar{t}]$.}
\end{proposition}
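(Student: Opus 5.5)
To show the profile is \eqref{P3}-induced, I will exhibit the pair of direct mechanisms
\[
\phi_1^*(t)=o_1^x,\qquad \phi_2^*(t)=t-o_1^x \qquad \text{for } t\in[t^{x-1},t^x),
\]
and verify three things. First, $\phi_1^*$ solves \eqref{P3} given $\phi_2^*$. Second, $\phi_2^*$ solves \eqref{P3} given $\phi_1^*$. Third, \eqref{2+} holds. Proposition \ref{p1} then delivers the PBE, and Definition \ref{def:p3induced} gives the conclusion.

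\textbf{Principal 1.} Under the candidate profile, every type $t$ receives $o_1^*(t)+\phi_2^*(t)=t$. This is the agent's bliss point, so $V=0$ is attained. Principal 1's payoff is $0$ at every type because $\phi_1^*(t)=o_1^*(t)$. Thus $\phi_1^*$ attains the unconstrained maximum of $\mathbb{E}_t[u_1]$. It remains to check feasibility. Since $v_1(\cdot,t\mid S_2)\le 0$ everywhere and $v_1(\phi_1^*(t),t\mid S_2)=0$, the IC constraints for principal 1 hold trivially, and $\phi_1^*$ solves its \eqref{P3}.

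\textbf{Principal 2.} Principal 2 faces the finite menu $S_1=\{o_1^1,\dots,o_1^{K+1}\}$, so her indirect utility is
\[
v_2(o_2,t\mid S_1)=\max_x -(t-o_1^x-o_2)^2,
\]
an upper envelope of finitely many $C^1$ functions. By Lemmas \ref{l2}, \ref{l3+}, and \ref{l1} in Online Appendix \ref{s21}, IC is then equivalent to monotonicity of the selection together with the envelope integral condition. The derivative with respect to $t$ is $-2(t-\tilde o_1(t)-o_2(t))$, where $\tilde o_1$ is the projection selection $\Pi_{S_1}$. I will then mimic the Lagrangian argument behind Proposition \ref{prop:singlepeaked} (Appendix \ref{appendix:finitemenus}):
\begin{enumerate}
\item Relax the problem by penalizing the envelope condition with weight $\kappa=\min_t r_2(t)/2$, which makes the relaxed objective a concave functional of $o_2(\cdot)$.
\item Choose as multiplier the measure $\Lambda$ whose distribution function is $\kappa F(t)+2r_2(t)(t-o_1^x-o_2^*(t))f(t)$ on each segment $[t^{x-1},t^x)$.
\item Use condition (i) to get monotonicity of $\Lambda$ within segments, condition (ii) to make the jumps at the boundaries $t^x$ nonnegative, and condition (iii) to handle the endpoint terms.
\end{enumerate}
Together these make $\Lambda$ a nonnegative measure.

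The Gateaux derivative of the Lagrangian at the candidate (piecewise full delegation) must be nonpositive in every feasible direction. For a full-delegation candidate, the first-order term reduces, after integration by parts, to $\int h\,\mathbf{d}\Lambda$ against directions $h$ that respect the envelope structure. Nonnegativity of $\Lambda$ then gives optimality of the relaxed problem. Complementary slackness is checked by noting that the candidate gives $V=0$ for every type, so the envelope condition binds with equality.

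\textbf{Main obstacle.} The hardest step will be the Gateaux computation at the regime boundaries. Around $t^x$ the selection $\tilde o_1(t,t'\mid o_2)$ switches between $o_1^x$ and $o_1^{x+1}$, so $v_2$ has a kink there. I first need the projection derivative $\Pi'_{S_1}\in\{0,1\}$ to vanish almost everywhere along perturbed paths, so that perturbations of $o_2$ do not change which regime the agent selects to first order. Then the jump in the integrand $\kappa F+2r_2(\cdot)f$, caused by the change from $o_1^x$ to $o_1^{x+1}$, enters exactly as the boundary atom that condition (ii) signs. I would treat each segment as a separate delegation problem, linked only through continuity of $v_2$ at $t^x$, which holds since the candidate yields $V=0$ on both sides.

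\textbf{UPR.} For each $t\in[t^{x-1},t^x)$, the pair $(o_1^x,t-o_1^x)$ lies in $S_1\times S_2$ and attains the agent's maximum value $0$. It coincides with $(\phi_1^*(t),\phi_2^*(t))$, so \eqref{2+} holds with $o_i^t=\phi_i^*(t)$, even though other cross-pairings may also attain $0$. Proposition \ref{p1} then yields the PBE.
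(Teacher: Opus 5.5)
Your plan matches the paper's proof step for step. The shared ingredients are: the same candidate pair; principal 1 at her pointwise ideal, where your explicit check that $v_1(\phi_1^*(t),t\mid S_2)=0$ supplies an IC verification the paper leaves implicit; the same piecewise multiplier $\Lambda$, made increasing by (i)--(ii), with (iii) fixing the endpoints ($\Lambda-\kappa F$ zero at $\underline{t}$ and an upward final jump to $\Lambda(\bar t)=\kappa F(\bar t)$); the Kartik--Kleiner--Van Weelden sufficiency conditions; $\Xi\equiv 0$; and UPR from the bliss point.

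The only corrections are of emphasis. The step you flag as the main obstacle is immediate, because at the candidate the Gateaux differential vanishes identically: the $\mathbf{d}t$-coefficient $-2r_2(o_2-o_2^*)f+\Lambda-\kappa F$ is zero by the very definition of $\Lambda$; the $\mathbf{d}[\Lambda-\kappa F]$ term drops because $t-o_2(t)-\tilde o_1(t,t\mid o_2)=0$; and $\Pi_{S_1}'=0$ because projection onto the finite set $S_1$ is constant near each $o_1^x$. So monotonicity of $\Lambda$ is used inside the sufficiency theorem, not to sign an $\int h\,\mathbf{d}\Lambda$ term, and no segment-by-segment treatment is needed. Likewise, the lemmas in Online Appendix \ref{s21} give only necessity of the envelope formula, not the IC equivalence with monotonicity you assert, which need not hold for this upper envelope. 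You never need that equivalence, since IC of the candidate is immediate from $V=0$.
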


Proposition \ref{prop:delegationhard} produces equilibria where
principal 1 partitions the type space into $K+1$ segments and offers a
distinct outcome for each, while principal 2 has full discretion within
each segment---choosing $o_2 = t - o_1^x$ to match the agent's bliss
point given principal 1's regime $o_1^x$. Since the agent achieves her
bliss point under the equilibrium allocation, there are no ties and
\eqref{2+} is automatically satisfied.

The conditions are piecewise generalizations of Proposition
\ref{prop:singlepeaked}'s full delegation conditions. Within each
segment $[t^{x-1}, t^x)$, the first condition is identical to the
full delegation condition in Proposition \ref{prop:singlepeaked},
applied to principal 2 facing a type-independent-peaked principal 1
with ideal $o_1^x$. The second condition governs the transitions
between segments: the Lagrange multiplier must jump upward at each
boundary $t^x$, ensuring that the piecewise-constructed multiplier
remains increasing globally. The third condition is the same boundary
condition as in Proposition \ref{prop:singlepeaked}.

When $K = 0$, there are no sub-segments, the conditions reduce exactly
to Proposition \ref{prop:singlepeaked} (full delegation for principal
2 under a type-independent-peaked principal 1), and principal 1 offers
a singleton menu. The result thus strictly generalizes the
type-independent-peaked case to settings where principal 1's ideal
point is piecewise constant: she wants different outcomes for
different segments of the type space, but her preference is rigid
within each segment.

The structure captures environments in which the agent combines a
discrete outcome chosen from one principal's menu with a continuous
outcome chosen from another's---one principal supplying a coarse
component of the final allocation, the other supplying a fine
adjustment. The combination is familiar in hierarchical economic
settings. In coalition policymaking, one actor may offer one of a
few discrete policy positions while another offers a continuous
fiscal adjustment; the implemented policy combines the two selected
components. Under layered regulation, a federal authority chooses
among a few compliance regimes while a state authority sets a
continuous enforcement or pricing adjustment within the chosen
regime. In franchise networks, the franchisor selects among a
discrete set of store formats while the franchisee adapts pricing
and product mix locally within the format. Proposition
\ref{prop:delegationhard} shows that this coarse-plus-fine division
of discretion is not merely a descriptive pattern but an equilibrium
outcome of the delegation game under the stated conditions.

 \section{Bundling}
    \label{sec:multiproductduopoly}

I now apply the decomposition to a duopoly extension of the multi-product bundling model with non-additive values and one-dimensional types (\cite{ghili2023characterization}, \cite{yang2023nested}, \cite{feng2025}). There are $r$ goods indexed by $\{1,\ldots,r\}$, and two firms (principals) each sell bundles $b \in \mathcal{B} := 2^{\{1,\ldots,r\}}$. A consumer (the agent) with private type $t \sim F$ on $T = [\underline{t}, \bar{t}]$ chooses from both firms' menus simultaneously.

The outcome space for each firm is $O_i = \mathcal{B} \times [0, \bar{p}]$, where $\bar{p}$ is a maximum transfer set to ensure compactness, and the consumer's utility from buying bundle $b_i$ at price $p_i$ from each firm is
\[
V((b_1,p_1),(b_2,p_2),t) = U((b_1,b_2),t) - p_1 - p_2,
\]
where $U: \mathcal{B}^2 \times T \to \mathbb{R}$ is the gross valuation, continuously differentiable in $t$ and linear in probabilities over bundles. Firms maximize revenue (costs normalized to zero). I assume \emph{intrinsic common agency}: the consumer either buys from both firms or neither, corresponding to \eqref{P4'} with the outside option of buying nothing.

Two features of this setting are worth noting. First, in any \eqref{P4'}-induced menu profile, each firm's menu is finite: incentive compatibility precludes offering the same bundle at two different prices, so each bundle appears at most once. This ensures that the indirect utilities are well-defined and the envelope formula (Online Appendix \ref{s21}) applies. Second, because firms' payoffs are type-independent (equal to the transfer), the analysis falls under the framework of Section \ref{section:independent}.

\subsection{Market Splitting}

How do competing firms divide a multi-product market? The
decomposition yields a class of equilibria exhibiting a 
market-splitting structure: firms partition the product space into
complementary segments, with each firm's bundles pairing with the
rival's to form combinations that jointly maximize consumer utility
and virtual surplus. The following examples illustrate the structure
before I state the general result.

\begin{example}
\label{e1}
Let $U((b,b'),t) = t \cdot g(b \cup b')$, where $g: \mathcal{B} \to \mathbb{R}_+$ is strictly increasing in the set inclusion order. The consumer values the \emph{union} of goods purchased from both firms---there is no penalty for overlap. If $t - \frac{1-F(t)}{f(t)}$ is strictly increasing and positive, and $\frac{\underline{t}}{2} < \frac{1}{f(\underline{t})}$, then $(\{(b_1,p_1),\ldots,(b_l,p_l)\}, \{(b_1',p_1'),\ldots,(b_m',p_m')\})$ is a \eqref{P4'}-induced menu profile if and only if every bundle from firm 1 can be paired with some bundle from firm 2 to form the grand bundle $b^*$ (and vice versa), and all prices equal $\frac{t_*}{2} g(b^*)$, where $t_*$ satisfies $\frac{t_*}{2} = \frac{1-F(t_*)}{f(t_*)}$.

In this setting, any pair of bundles whose union covers all goods is optimal. Firms split the market by offering complementary bundles that jointly cover $b^*$, but the specific partition is indeterminate---many equilibrium menus coexist.
\end{example}

\begin{example}
\label{e2}
Let $U((b,b'),t) = t \cdot [g(b \cup b') - g(b \cap b')]$, where $g$ is as above. Now overlap is costly: buying the same good from both firms wastes surplus. Under the same distributional conditions, a menu profile is \eqref{P4'}-induced if and only if every bundle from firm 1 is the \emph{complement} of some bundle from firm 2 (i.e., $b_i = (b_j')^\complement$), and all prices equal $\frac{t_*}{2} g(b^*)$.

This is a sharper form of market splitting: firms must partition the goods with no overlap. The overlap penalty forces complete specialization. Note that the complementarity condition implies $l = m$: the two menus have the same number of items.
\end{example}

\begin{remark}
In Example \ref{e2}, if production costs differ across bundles, the equilibrium sharpens further: each firm offers a single bundle, and the two bundles are mutual complements. Heterogeneous costs break the indifference across different partitions of $b^*$, selecting the cost-minimizing split.
\end{remark}

The following proposition generalizes both examples. Define the set of \emph{jointly optimal} bundle pairs:
\begin{equation*}
\mathcal{B}_*^2 := \bigcap_{t \in T} \argmax_{(\tilde{b},\tilde{b}')} U((\tilde{b},\tilde{b}'),t) \;\;\bigcap\;\; \bigcap_{t \in T} \argmax_{(\tilde{b},\tilde{b}')} \left[ U((\tilde{b},\tilde{b}'),t) - \frac{1-F(t)}{f(t)} U_t((\tilde{b},\tilde{b}'),t) \right].
\end{equation*}
These are the bundle pairs that simultaneously maximize gross utility \emph{and} virtual surplus for every type---the analogue of ``efficient bundles'' in the monopoly problem, extended to pairs.

\begin{proposition}
\label{t1}
If for any $(b,b') \in \mathcal{B}^2$, both $U((b,b'),t)$ and $U((b,b'),t) - \frac{1-F(t)}{f(t)} U_t((b,b'),t)$ are strictly increasing in $t$, and $\mathcal{B}_*^2 \neq \emptyset$, then a menu profile $(\{(b_1,p_1),\ldots,(b_l,p_l)\}, \{(b_1',p_1'),\ldots,(b_m',p_m')\})$ is \eqref{P4'}-induced if
\begin{equation}
\label{eq:marketsplitting}
\begin{aligned}
&\forall i,\; \exists j \text{ s.t. } (b_i,b_j') \in \mathcal{B}_*^2, \qquad \forall j',\; \exists i' \text{ s.t. } (b_{i'},b_{j'}') \in \mathcal{B}_*^2, \\[0.5ex]
&p_i = p_j' = \tfrac{1}{2}\max_{(b,b')} U((b,b'),t_*), \quad \forall i,j,
\end{aligned}
\end{equation}
where $t_*$ satisfies $\frac{1}{2}\max_{(b,b')} U((b,b'),t_*) = \max_{(b,b')} U((b,b'),t_*) - \frac{1-F(t_*)}{f(t_*)} U_t((b,b'),t_*)$. If additionally, for any $b \in \mathcal{B}$ there exists $b'$ with $(b,b') \in \mathcal{B}_*^2$, the converse also holds.
\end{proposition}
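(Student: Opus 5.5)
We follow the definition of a \eqref{P4'}-induced menu profile. We must exhibit $(\phi_1^*,\phi_2^*)$ such that each $\phi_i^*$ solves \eqref{P4'} given the rival's range, \eqref{eq:4i} holds, and the ranges equal the proposed menus (minus \emph{quit}). For the sufficiency direction, the candidate is as follows. Fix a pair $(\hat b,\hat b')\in\mathcal{B}_*^2$ with $\hat b$ in firm 1's menu and $\hat b'$ in firm 2's menu. Types $t<t_*$ quit from both firms. Types $t\ge t_*$ are assigned to some pair $(b_i,b_j')\in\mathcal{B}_*^2$ drawn from the menus at the common price $p:=\tfrac12\max U(\cdot,t_*)$. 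The covering conditions in \eqref{eq:marketsplitting} guarantee that every menu item is used by some type on a set of positive measure, or at least by some type, so the ranges are exactly the stated menus. Since all pairs in $\mathcal{B}_*^2$ give the same $U(\cdot,t)$ for every $t$, the agent is indifferent among them at every type. This gives incentive compatibility within $[t_*,\bar t]$ at once. Participation holds exactly for $t\ge t_*$, because $\max U(\cdot,t)-2p\ge 0$ iff $t\ge t_*$, using strict monotonicity of $U$ in $t$.

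The core step is optimality of $\phi_1^*$ in \eqref{P4'} given firm 2's menu $\mathscr{M}_2$, where every item is priced at $p$. Firm 1 faces the indirect utility $v_1((b,p_1),t\mid\mathscr{M}_2)=\max_{j}U((b,b_j'),t)-p-p_1=:W(b,t)-p-p_1$. This is a single-principal bundling problem with gross value $W$, which is an upper envelope of $C^1$ functions. By Online Appendix \ref{s21} (Lemmas \ref{l2}, \ref{l3+}, \ref{l1}), the envelope formula applies, so revenue equals expected virtual surplus
\[
\mathbb{E}\Big[\big(W(b(t),t)-\tfrac{1-F(t)}{f(t)}W_t(b(t),t)\big)\mathbf{1}\{\text{served}\}\Big] - \big(W(b(\underline t),\underline t)-p\big)\text{-type boundary term},
\]
where $W_t$ is the a.e. derivative along the selected rival bundle. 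We then bound this pointwise. For any $b$ and $t$, $W(b,t)-\tfrac{1-F}{f}W_t(b,t)$ equals the virtual surplus of some pair $(b,b_j')$. This is at most $\max_{(b,b')}[U-\tfrac{1-F}{f}U_t]$, and the bound is attained by pairs in $\mathcal{B}_*^2$, which belong to the menu by the covering condition. We also use $W(b,t)\le \max U(\cdot,t)$, so the participation threshold cannot fall below the one generated by a single price. The remaining problem is to choose the served set and a price. Strict monotonicity of virtual surplus makes the optimum a threshold $\tau$ with price $\max U(\cdot,\tau)-p$, and the first-order condition together with the definition of $t_*$ gives $\tau=t_*$ and price $p$. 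By symmetry, the same argument applies to firm 2.

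\eqref{eq:4i} then follows directly. For $t<t_*$ the maximal joint utility is negative, so both firms assign \emph{quit}. For $t\ge t_*$ every argmax pair pays $p$ to each firm, so every principal's utility is preserved. For the converse, under the extra richness assumption, we take an induced profile. Each firm's program is then as above, with the rival prices possibly unequal. We show that optimality forces every item to be part of a pair in $\mathcal{B}_*^2$; otherwise firm 1 could replace a bundle by its $\mathcal{B}_*^2$-partner, which is available by richness, and strictly raise virtual surplus. We also show that prices must be equal and equal to $p$, via the two firms' first-order conditions at the common threshold.

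The main obstacle is the optimality step. Firm 1's relaxed virtual-surplus bound must be tight even though $W$ has kinks and $W_t$ switches between rival bundles. Both the envelope representation and the monotonicity (ironing-free) argument on the upper envelope must be justified. I would lean on Lemma \ref{l3+} for the upward-kink property, and on the fact that along $\mathcal{B}_*^2$ the utility and the virtual surplus coincide across the relevant pairs, which makes $W$ effectively smooth on the relevant region.
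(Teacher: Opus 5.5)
\textbf{Sufficiency.} Your sufficiency argument is essentially the paper's. It uses the same candidate: quit below $t_*$, and every menu item at $p=\tfrac12\max U(\cdot,t_*)$ above $t_*$. It uses the same pointwise bound on the virtual surplus of the indirect utility, $\psi(t):=\max_{(b,b')}\bigl[U((b,b'),t)-\tfrac{1-F(t)}{f(t)}U_t((b,b'),t)\bigr]$. The bound is attained because the covering condition makes $W(b_i,\cdot)$ coincide with a $\mathcal{B}_*^2$ pair.

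No first-order condition is needed. Since $\psi$ is strictly increasing, $\mathbb{E}[(\psi(t)-p)^+]$ is an upper bound on revenue, and the candidate attains it exactly. Your check of \eqref{eq:4i}, that every argmax pair pays $p$ to each firm, is cleaner than the paper's. The paper asserts that the independently drawn pair $(\phi_1^*(t),\phi_2^*(t))$ is itself an argmax, which need not hold.

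\textbf{The converse has a genuine gap.} Your key step is: otherwise firm 1 could replace a bundle by its $\mathcal{B}_*^2$-partner and strictly raise virtual surplus. This needs a pair outside $\mathcal{B}_*^2$ to lose surplus on the types that actually buy it. But $\mathcal{B}_*^2$ demands optimality at \emph{every} $t$, including excluded types $t<t_*$, whose allocation never enters revenue, IC, or \eqref{eq:4i}.

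Here is a counterexample. Take $T=[0,1]$, $F$ uniform, $x\neq y$ two bundles, and
\[
U((b,b),t)=1+t\ \ (b\in\mathcal{B}),\qquad U((x,y),t)=1+t-\varepsilon\bigl[(\tfrac13-t)^+\bigr]^2,\qquad U=t\ \text{for all other pairs},
\]
with $\varepsilon>0$ small. All hypotheses of the proposition hold. We get $\mathcal{B}_*^2=\{(b,b):b\in\mathcal{B}\}$, so richness holds, along with $t_*=\tfrac13$ and $p=\tfrac23$.

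Yet $(\{(x,\tfrac23)\},\{(y,\tfrac23)\})$ is \eqref{P4'}-induced. Let each firm quit below $\tfrac13$ and sell its single item above. Each firm earns $\tfrac49=\int_{1/3}^1(2t-\tfrac23)\,dt$. This equals the virtual-surplus upper bound, since the diagonal pair gives surplus $2t$ and $(x,y)$ matches it on $[\tfrac13,1]$. IC, IR and \eqref{eq:4i} hold trivially. But $(x,y)\notin\mathcal{B}_*^2$.

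So the converse as stated cannot be proved. The paper's own converse breaks at the same point. One break is its strict inequality $v_1(\phi_1(t''),t''\mid\cdot)<\max U(\cdot,t'')-p-p'$. The other is its inference from ``one solution of firm 2's program pairs into $\mathcal{B}_*^2$'' to ``every solution does.'' An extra hypothesis is needed to exclude this. One example: any pair maximizing $U$ and virtual surplus at some $t\ge t_*$ already lies in $\mathcal{B}_*^2$, which holds in Examples \ref{e1} and \ref{e2}.

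\textbf{Two smaller problems in your converse sketch.} First, equal prices ``via first-order conditions at the common threshold'' only ties together the prices paid by the marginal type. Equality across all items in a menu needs the paper's IC-dominance step: an item priced above another that already attains $\max U$ with a rival item is chosen by no type, so it cannot lie in the range.

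Second, your replacement step requires richness in the second coordinate, i.e.\ a firm-1 bundle pairing with a given rival item. The stated condition is in the first coordinate. The paper uses it with firm 2 best-responding to firm 1's cheapest item.
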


Proposition \ref{t1} says that when the conditions hold, firms'
equilibrium menus exhibit a cross-matching structure into
$\mathcal{B}_*^2$ and symmetric pricing at $t_*$. The economic
prediction is structural rather than pointwise: the decomposition
pins down the form of the split (complementary pairing, equal
markups) but not the specific partition across firms.

\begin{remark}
In Example \ref{e1}, $\mathcal{B}_*^2 = \{(b,b') \in \mathcal{B}^2 : b \cup b' = b^*\}$; in Example \ref{e2}, $\mathcal{B}_*^2 = \{(b,b') \in \mathcal{B}^2 : b = (b')^\complement\}$. The overlap penalty in Example \ref{e2} shrinks $\mathcal{B}_*^2$ from all pairs covering $b^*$ to only disjoint partitions, producing the sharper specialization noted there.
\end{remark}

\subsection{Base Product $+$ Complementary Upgrades}

In many markets, competition takes an asymmetric form: one firm acts as a gatekeeper providing a fundamental ``base'' good, while another provides a menu of differentiated ``upgrades''. A canonical example is the mobile telecommunications market: a network carrier offers a standard service plan (the base product), while a device manufacturer offers a tiered lineup of smartphones (the upgrades), ranging from budget models to premium flagships. This structure fits perfectly into the \emph{intrinsic} common agency setting: the consumer generally views the SIM card and the handset as a single functional unit—buying the service plan without the phone is useless, and vice versa. 

The following results characterize the equilibrium where the ``base'' principal offers a single, take-it-or-leave-it contract that extracts the baseline surplus, while the ``upgrade'' principal performs the screening task, offering a sophisticated, possibly multi-upgrade-line menu to discriminate among consumer types.
    
\begin{proposition}
\label{t2}
If $U$ is strictly increasing in $t$ and satisfies:
\begin{enumerate}
    \item \label{p10cd1} for any $b_1 \in \mathcal{B}$: the virtual surplus $\varphi^{b_1}(b,t) := U((b_1,b),t) - \frac{1-F(t)}{f(t)} U_t((b_1,b),t)$ satisfies MD$^\star$ (\cite{kartik2023});\footnote{For any $a, a' \in \Delta(\mathcal{B})$,
    the difference $\varphi^{b_1}(a,t) - \varphi^{b_1}(a',t)$ is
    monotonic in $t$. An additional assumption is that for
    any $a \in \Delta(\mathcal{B})$, $\varphi^{b_1}(a,t)$ is
    monotonic in $t$. This is implied by MD$^\star$ if there exists
    $a$ with $\varphi^{b_1}(a,t) = 0$ for all $t$.} the complement $b_1^\complement$ uniquely maximizes the lowest type's virtual surplus; and for any $b \not\supset b_1^\complement$, $U((b_1,b),\bar{t}) < U((b_1,b_1^\complement),\bar{t})$,
    \item \label{p10cd2} for any $b_2 \in \mathcal{B}$: the complement $b_2^\complement$ maximizes both gross utility and virtual surplus for all types, and $U((b_2^\complement, b_2),t) - \frac{1-F(t)}{f(t)} U_t((b_2^\complement, b_2),t)$ is strictly increasing in $t$,
    \item \label{p10cd3} partition symmetry: $U((b_1,b_2),\cdot) = U((b_1',b_2'),\cdot)$ whenever $b_1 \cup b_2 = b_1' \cup b_2' = b^*$ and $b_1 \cap b_2 = b_1' \cap b_2' = \emptyset$,
\end{enumerate}
then for any $b \in \mathcal{B}$, there exists a \eqref{P4'}-induced menu profile $(\{(b,p)\}, \{(b_1^*,p_1^*),\ldots,(b_m^*,p_m^*)\})$, where $\{b_1^*,\ldots,b_m^*\}$ is a nested or tree menu rooted at $b^\complement$ in the sense of \cite{feng2025}, and prices satisfy $p = \min_x p_x^* = \frac{1}{2} U((b, b^\complement), t_*)$, where $t_*$ satisfies $\frac{1}{2} U((b,b^\complement),t_*) = \frac{1-F(t_*)}{f(t_*)} U_t((b,b^\complement),t_*)$.
\end{proposition}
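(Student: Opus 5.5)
The plan is to build the profile in the two-step way of Proposition \ref{prop:intrinsicca}: specify both direct mechanisms, check that each solves \eqref{P4'} against the other's range, and then verify \eqref{eq:4i}. Because the base firm's range is a singleton, Corollary \ref{cor:n-1singletonoo} reduces the UPR check to a single requirement: both firms must send the same set of types to \emph{quit}. Fix $b$ and let firm 1 play $\phi_1^*(t)=(b,p)$ for $t\ge t_*$ and \emph{quit} otherwise. Facing $\{(b,p)\}$, firm 2's indirect utility is $v_2((b',p'),t)=U((b,b'),t)-p-p'$. Its problem \eqref{P4'} is then a monopoly bundling problem with gross value $U((b,\cdot),t)$ and a participation threshold shifted by $p$.

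For firm 2, condition \ref{p10cd1} (MD$^\star$ of $\varphi^{b}$, with $b^\complement$ uniquely maximizing the lowest type's virtual surplus) puts us in the framework of \cite{feng2025}. I would invoke its characterization to obtain an optimal menu that is nested or a tree rooted at $b^\complement$, with the lowest served type buying $b^\complement$. The price at the root is pinned by the binding participation constraint at the cutoff: $p+p^*_{\min}=U((b,b^\complement),t_*)$. The last part of condition \ref{p10cd1}, namely $U((b,b'),\bar t)<U((b,b^\complement),\bar t)$ for $b'\not\supset b^\complement$, guarantees that every menu item contains $b^\complement$. Hence the menu really consists of upgrades of the root, and $p^*_{\min}$ is the lowest price.

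For firm 1, facing firm 2's menu $\mathscr M_2$, its indirect utility is $\max_{x}[U((b',b_x^*),t)-p_x^*]-p'$. I would argue that firm 1 cannot gain by offering anything other than a single posted bundle, since its payoff is just the transfer. Partition symmetry (condition \ref{p10cd3}) together with condition \ref{p10cd2} makes the pair $(b,b^\complement)$ jointly optimal in both gross utility and virtual surplus. This reduces firm 1's problem to choosing a cutoff price against the root item $(b^\complement,p^*_{\min})$. That is a one-dimensional posted-price problem whose first-order condition, using the strict monotonicity of the virtual surplus in condition \ref{p10cd2}, yields the cutoff $t_*$. Imposing the symmetric split $p=p^*_{\min}=\tfrac12U((b,b^\complement),t_*)$ and combining it with firm 2's binding participation at $t_*$ gives the stated equation $\tfrac12 U=\tfrac{1-F}{f}U_t$. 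The two firms then quit exactly the same types, $t<t_*$, so Corollary \ref{cor:n-1singletonoo} delivers a PBE.

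The main obstacle is firm 1's optimality. Firm 1's indirect utility is an upper envelope over the whole upgrade menu, so deviations to other bundles $b'$ could in principle recombine with higher upgrades $b_x^*$. I would handle this by combining the envelope formula of Online Appendix \ref{s21} (Lemmas \ref{l1}--\ref{l3+}) with the \emph{for all types} maximality in condition \ref{p10cd2}. The aim is to bound firm 1's revenue from any direct mechanism by the expected virtual surplus of the pair $(b',b'^\complement)$ above the cutoff, net of firm 2's price. Partition symmetry then shows that no choice of $b'$ beats $b$. A second, smaller issue is fixed-point consistency: firm 2's menu must be optimal given $p$, and $p$ must be optimal given that menu. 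For this I would verify directly that the common $t_*$ solves both participation-binding conditions.
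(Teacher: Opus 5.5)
Your proposal is correct and follows essentially the same route as the paper. Firm 2's problem against the singleton $\{(b,p)\}$ is solved by \cite{feng2025}, giving a nested or tree menu rooted at $b^\complement$ with $\min_x p_x^*=\tfrac12 U((b,b^\complement),t_*)$. Firm 1's revenue from any direct mechanism is bounded pointwise by the virtual surplus of a complementary pair net of firm 2's \emph{minimum} price, using condition \ref{p10cd2} and partition symmetry; this bound is attained by posting $b$ with cutoff $t_*$. Finally, \eqref{eq:4i} follows from Corollary \ref{cor:n-1singletonoo}, since both firms quit exactly on $[\underline{t},t_*)$. One step should be tightened: the remark that firm 1's ``payoff is just the transfer'' does not by itself rule out screening by firm 1. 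It is your pointwise virtual-surplus bound that does this work, as in the paper.
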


Condition 1 ensures that firm 2's screening problem---conditional on firm 1's base product $b$---satisfies the regularity conditions of \cite{feng2025}, so the optimal upgrade menu takes a nested or tree structure rooted at the complement $b^\complement$. Condition 2 ensures that firm 1's singleton contract is a best response: for any base product $b$, the complement $b^\complement$ is always optimal for the rival, so firm 1 cannot profitably deviate by changing its bundle. Condition 3 (partition symmetry) ensures that the consumer's valuation depends only on which goods are covered and which are duplicated, not on which firm sells which good.

Under the utility specification of Example \ref{e2}---$U((b,b'),t) =
t \cdot [g(b \cup b') - g(b \cap b')]$---all three
conditions hold but the equilibrium degenerates: the upgrade menu
reduces to the singleton $\{(b^\complement, p)\}$, so both firms
post take-it-or-leave-it contracts. Conditions \ref{p10cd2} and
\ref{p10cd3} together preclude any bundle strictly dominating
$b^\complement$ at the top type. The next corollary identifies
conditions under which the upgrade menu is genuinely
non-degenerate.

\begin{corollary}
\label{c1}
If the conditions of Proposition \ref{t2} hold with condition \ref{p10cd1} strengthened to: there exists $b_0 \in \mathcal{B}$ with $b_0' \supsetneq b_0^\complement$ such that $U((b_0, b_0'), \bar{t}) > U((b_0, b_0^\complement), \bar{t})$, and condition \ref{p10cd3} replaced by: for any $b' \subsetneq b_0$, $\mathbb{E}_t[\varphi^{b_0}(b_0^\complement, t)] \geq \mathbb{E}_t[\varphi^{b'}((b')^\complement, t)]$, then $(\{(b_0,p_0)\}, \{(b_1^*,p_1^*),\ldots,(b_m^*,p_m^*)\})$ is a \eqref{P4'}-induced menu profile with $m \geq 2$---the upgrade menu is non-degenerate.
\end{corollary}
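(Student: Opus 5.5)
The proof relies on Proposition \ref{t2}. I would take firm 1's base contract to be the singleton $\{(b_0,p_0)\}$ and let firm 2 solve its screening problem \eqref{P4'} against that singleton, where $p_0=\frac12 U((b_0,b_0^\complement),t_*)$. Because firm 1 offers a singleton, Corollary \ref{cor:n-1singletonoo} delivers \eqref{eq:4i} once both firms quit the same set of types. That set is $\{t<t_*\}$, pinned down by the common participation threshold. So the work reduces to three things: firm 2's optimal menu is non-degenerate, firm 1's singleton is a best response to firm 2's menu, and the two cutoffs agree.

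\textbf{Firm 2's problem.} Against $\{(b_0,p_0)\}$, firm 2's indirect utility is $U((b_0,b),t)-p_0-p$. This is a monopoly bundling problem with gross value $U((b_0,\cdot),t)$ and outside option shifted by $p_0$. Condition \ref{p10cd1} (MD$^\star$ for $\varphi^{b_0}$, with $b_0^\complement$ uniquely maximizing the lowest type's virtual surplus) lets me invoke \cite{feng2025}, so the optimal menu is nested or tree-shaped and rooted at $b_0^\complement$. The menu's lowest price then satisfies $\min_x p^*_x=\frac12U((b_0,b_0^\complement),t_*)$ at the joint cutoff, as in Proposition \ref{t2}.

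\textbf{Non-degeneracy.} The strengthened hypothesis gives some $b_0'\supsetneq b_0^\complement$ with $U((b_0,b_0'),\bar t)>U((b_0,b_0^\complement),\bar t)$. Since virtual surplus equals gross utility at $\bar t$, offering only the root $b_0^\complement$ would leave a profitable upgrade for types near $\bar t$. Adding $(b_0',p')$ with $p'$ slightly above the indifference price of type $\bar t$ raises revenue for a positive-measure set of top types. This follows from continuity in $t$ and MD$^\star$, which single out an upper interval of switchers. Hence the optimum has $m\ge2$.

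\textbf{Firm 1's best response (main obstacle).} Without partition symmetry, the original argument in Proposition \ref{t2} is no longer available, so I must show directly that no deviation by firm 1 beats $\{(b_0,p_0)\}$. I would proceed in three steps.

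1. By Lemma \ref{prop:separate}, it suffices to bound firm 1's value in \eqref{P4'} against firm 2's menu $M_2$. For any IC and IR direct mechanism of firm 1, the envelope formula from Online Appendix \ref{s21} bounds revenue by expected virtual surplus. That virtual surplus is evaluated at the bundle pair the agent selects, net of firm 2's prices.

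2. I would bound this virtual surplus by $\max_{b'}\mathbb{E}_t[\varphi^{b'}((b')^\complement,t)]$ minus firm 2's payments, using condition \ref{p10cd2}: $b_2^\complement$ maximizes both gross utility and virtual surplus against any $b_2$.

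3. Deviations to $b'\subsetneq b_0$ are then excluded by the new inequality $\mathbb{E}_t[\varphi^{b_0}(b_0^\complement,t)]\ge\mathbb{E}_t[\varphi^{b'}((b')^\complement,t)]$. Deviations to bundles that are not subsets of $b_0$ should be excluded because firm 2's menu elements all contain $b_0^\complement$, so extra goods from firm 1 overlap and lower $U$ by condition \ref{p10cd2}.

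I expect step 3 to be the hardest. Firm 2's menu is not a singleton, so the agent's selection from $M_2$ shifts with firm 1's deviation. I would handle this by taking the upper envelope over $M_2$ in the indirect utility and bounding it pointwise by the complement pair, which attains the maximal value by condition \ref{p10cd2}. Finally, \eqref{eq:4i} holds because, as noted above, both cutoffs equal $t_*$.
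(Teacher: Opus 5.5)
Your outline follows the paper's proof step for step: \cite{feng2025} for firm 2 against the singleton, the top-type comparison for $m\ge 2$, the envelope bound plus condition~\ref{p10cd2} for firm 1, and Corollary~\ref{cor:n-1singletonoo} for \eqref{eq:4i}. The firm-2 and non-degeneracy parts are fine, except that in your perturbation the upgrade must be priced slightly \emph{below} type $\bar t$'s indifference price, not above. The genuine gap is in firm 1's best response. It is the same step the paper takes in the last equality of its rewritten chain \eqref{eq:p1objective}, so following the paper more closely will not repair it.

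After the envelope formula and condition~\ref{p10cd2}, what you actually get for any deviation of firm 1 is
\[
\text{revenue}\;\le\;\mathbb{E}_t\Big[\big(\varphi^{b'(t)}\big((b'(t))^\complement,t\big)-p^*_{i(t)}\big)\,\mathbf{1}\{t\text{ participates}\}\Big],\qquad b'(t):=(b^*_{i(t)})^\complement\subseteq b_0,
\]
where $b^*_{i(t)}$ is the item type $t$ picks from firm 2's menu under the deviation. So $b'(t)$ varies with $t$ and with the deviation.

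Your step 2 replaces this by $\max_{b'}\mathbb{E}_t[\varphi^{b'}((b')^\complement,t)]$. That interchanges $\max$ and $\mathbb{E}$ in the invalid direction. In addition, the replacement hypothesis compares fixed bundles in an untruncated expectation over all of $T$, whereas firm 1's equilibrium revenue is $\mathbb{E}_t[(\varphi^{b_0}(b_0^\complement,t)-p_1^*)\mathbf{1}\{t\ge t_*\}]$.

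Closing the step needs a comparison that holds pointwise in $t$, and the corollary's own hypotheses rule out dominance. For $b':=(b_0')^\complement\subsetneq b_0$, condition~\ref{p10cd2} and the strengthened condition~\ref{p10cd1} give
\[
\varphi^{b'}(b_0',\bar t)=U((b',b_0'),\bar t)\ \ge\ U((b_0,b_0'),\bar t)\ >\ U((b_0,b_0^\complement),\bar t)=\varphi^{b_0}(b_0^\complement,\bar t).
\]
So the two virtual-surplus curves must cross, and $b'$ wins exactly at the high types that survive truncation.

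This is a real obstruction, not a missing estimate. In the candidate, top types buy some upgrade $b_k^*\supsetneq b_0^\complement$, which overlaps $b_0$; this is precisely the overlap your step 3 says lowers $U$. Condition~\ref{p10cd2} only gives $U(((b_k^*)^\complement,b_k^*),\bar t)\ge U((b_0,b_k^*),\bar t)$. If the inequality is strict, firm 1 can add $((b_k^*)^\complement,p_0+\delta)$ to its menu for small $\delta>0$. Nobody stops participating, every participant still pays at least $p_0$, and a positive-measure set of top types switches and pays $p_0+\delta$.

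Nothing in the statement excludes this. Take two goods, $F$ uniform on $[0,1]$, $b_0=\{1\}$, $b_0'=\{1,2\}$, and set $U((b_1,b_2),t)$ as follows:
\[
\begin{aligned}
1+t &\ \text{ for } (\{1\},\{2\}),\ (\{2\},\{1\}),\ (\{1,2\},\emptyset);\\
0.95+2t &\ \text{ for } (\emptyset,\{1,2\});\\
0.5+2t &\ \text{ for } (\{1\},\{1,2\}),\ (\{2\},\{1,2\});\\
2t &\ \text{ for } (\{1,2\},\{1,2\});\\
0.5t &\ \text{ for } (\{2\},\{2\});\\
0.4+0.5t &\ \text{ for } (\{1,2\},\{1\}),\ (\{1,2\},\{2\});\\
1.5t &\ \text{ for all remaining pairs.}
\end{aligned}
\]
One checks that every hypothesis of the corollary holds. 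Here $t_*=1/3$, $p_0=2/3$, firm 2's optimal menu is $\{(\{2\},2/3),(\{1,2\},11/12)\}$, and firm 1 earns $4/9$. Adding $(\emptyset,29/30)$ to firm 1's menu raises its revenue to $127/225\approx 0.56$.

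So the firm-1 step cannot be completed from the stated hypotheses. It needs an extra pointwise assumption, for instance
\[
\varphi^{(b_i^*)^\complement}(b_i^*,t)-p_i^*\le\max\{\varphi^{b_0}(b_0^\complement,t)-p_1^*,0\}
\]
for every $t$ and every item of firm 2's menu. That is exactly what turns your step-2 bound into firm 1's equilibrium revenue.
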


The additional condition says that some upgrade bundle strictly dominates the base complement for the highest type, so firm 2 has a genuine incentive to screen. The expected virtual surplus condition ensures that $b_0$ is the most profitable base product for firm 1. Together, these produce the asymmetric equilibrium: firm 1 sells a standardized base product $b_0$ at a markup that extracts half the baseline surplus, while firm 2 offers a tree of upgrades rooted at $b_0^\complement$---from a basic complement for low types to premium bundles for high types.

\medskip\noindent\textbf{Discussion.} Propositions \ref{t1} and
\ref{t2} identify two classes of equilibria in the duopoly bundling
game. Market splitting is symmetric: both firms offer comparable
menus that cross-match into $\mathcal{B}_*^2$. The base-plus-upgrades
structure is asymmetric: one firm's singleton menu reduces the
rival's problem to monopoly screening, which is then solved by the
tools of \cite{feng2025}. I do not claim these exhaust the
equilibrium set; they are the classes the decomposition constructs
under the stated conditions.

A natural next step is to characterize \emph{all} \eqref{P4'}-induced menu profiles of the form \[(\{(b_1,p_1)\}, \{(b_2,p_2),\ldots,(b_m,p_m)\}).\] Under MD$^\star$, the upgrade firm's optimal menu is determined by the upper envelope of $\varphi^{b_1}$. The remaining step---verifying that the base contract $(b_1,p_1)$ is a best response to that menu---is more subtle, as it depends on the interaction between the consumer's selection from the upgrade menu and the base firm's virtual surplus.

\section{Extensions}

The preceding analysis relies on principals' payoffs being the forms of $u_i(o_i,t)$, which makes each principal's problem self-contained once the rivals' menus are fixed. When principals' payoffs depend on the full allocation profile, the agent's
equilibrium selection rule $\sigma_A$ enters each principal's
objective directly. The decomposition does not adapt cleanly to this
setting: the indirect utility can no longer absorb all dependence on
rivals, and each principal's problem depends on the specific
selection rule the agent uses off-path.

I identify regularity conditions on $\sigma_A$ under which a modified
decomposition delivers a best-response characterization. These
conditions are substantive restrictions on the
agent's off-path behavior, not implications of rational choice in
the underlying game. Results in this section should be read
accordingly: conditional on an off-path choice rule of the
specified form, a decomposition-like construction remains possible.

\label{sec:discussions}
\subsection{Characterizing PBE under General Utilities}
I begin with two regularity conditions on the agent's strategy. The first govern pure strategies: 
\begin{equation}
\label{piia}
    \text{ for any } t \in T, \mathcal{M}' \subseteq \mathcal{M} \in \prod_{i=1}^n 2^{O_i}, \text{ if } \supp \sigma_A(\mathcal{M},t) \subseteq \mathcal{M}', \text{ then } \sigma_A(\mathcal{M}',t)=\sigma_A(\mathcal{M},t).
    \tag{IIA-1}
\end{equation}

In words: if the agent's chosen outcome under the larger menu profile $\mathcal{M}$ happens to lie in the smaller profile $\mathcal{M}'$, then the agent makes the same choice when facing $\mathcal{M}'$. This is a consistency requirement: removing options the agent would not have chosen anyway does not change her behavior.

The second condition governs mixed strategies:
\begin{equation}
\label{contraction}
\tag{IIA-2}
\parbox{0.9\textwidth}{
\centering
$\text{ for any } t \in T,\;
\mathcal{M}' \subseteq \mathcal{M} \in \prod_{i=1}^n 2^{O_i},\;
\text{for any measurable } E \subseteq \supp \sigma_A(\mathcal{M},t) \cap \mathcal{M}' \neq \emptyset,$\\[4pt]
$\sigma_A(\mathcal{M}',t)(E)
= \dfrac{ \sigma_A(\mathcal{M},t)(E) }
{ \displaystyle \int_{O} 1_{\{ o \in \supp \sigma_A(\mathcal{M},t) \cap \mathcal{M}' \}} \, \sigma_A(\mathcal{M},t)(\mathbf{d} o) }.$
}
\end{equation}

This says: when the menu contracts, the agent's randomization over surviving outcomes is obtained by conditioning-rescaling the original probabilities proportionally. \eqref{contraction} coincides with the choice axiom of \cite{luce1959}, long
studied in the decision-theoretic literature. While this gives the
condition a familiar form, it does not make it a weak assumption:
Luce-style stochastic choice is a specific functional form for
menu-sensitive choice probabilities, not a generic consequence of
rational optimization.

\begin{remark}
\eqref{piia} is implied by \eqref{contraction}, and the two are equivalent when $\sigma_A$ is a pure strategy. Under \eqref{eq:noindifference}, any PBE involves a pure agent strategy (as shown in Section \ref{sec:mainresults}), and \eqref{piia} is automatically satisfied: if the agent's uniquely optimal outcome under $\mathcal{M}$ lies in $\mathcal{M}' \subseteq \mathcal{M}$, it remains uniquely optimal under $\mathcal{M}'$. Thus, in the non-indifference case, no additional regularity conditions on $\sigma_A$ are needed beyond agent optimality.
\end{remark}

The decomposition program under general utilities takes the following form. Because principal $i$'s payoff depends on outcomes at other principals, the agent's selection rule $\sigma_A$ enters the objective directly:

\begin{proposition}
\label{p0}
    If $\sigma_A$ is a pure strategy that satisfies condition \ref{cd1} in Definition \ref{d0} and 
\eqref{piia}, and for any $i, \phi_i^*$ solves
   \begin{equation}
\label{P1}
\begin{aligned}
\max_{\phi_i: T \rightarrow O_i}\;& 
\mathbb{E}_t\!\left[u_i\!\left(\sigma_A\big(\{\phi_1^*(s)\}_{s\in T},\ldots,\{\phi_i(s)\}_{s\in T},\ldots,\{\phi_n^*(s)\}_{s\in T},t\big),t\right)\right] \\[4pt]
\text{s.t.}\;& 
v_i\!\left(\phi_i(t),t\,\big|\,\{\phi_{-i}^*(s)\}_{s\in T}\right)
\;\ge\;
v_i\!\left(\phi_i(t'),t\,\big|\,\{\phi_{-i}^*(s)\}_{s\in T}\right),
\quad \forall t,t',
\end{aligned}
\tag{P3}
\end{equation}

    then $(\mathscr{M}_1,...,\mathscr{M}_n,\sigma_A)$ is a PBE, in which for any $i, \mathscr{M}_i=\{\phi_i^{*}(s)\}_{s \in T}$.

    Alternatively, if $\sigma_A$ is a mixed strategy that satisfies condition \ref{cd1} in Definition \ref{d0} and \eqref{contraction}, and for any $i, \phi_i^*: T \rightarrow O_i$ solves 
    \begin{equation}
\label{P1'}
\tag{P3'}
\begin{aligned}
    \max_{\phi_i: T \rightarrow \Delta(O_i)}\;
    &\mathbb{E}_t\!\left[
        \int_{O_i}
        u_i\!\left(
            \sigma_A\!\left(
                \{\phi_1^*(s)\}_{s \in T},
                \ldots,
                \tilde{o}_i,
                \ldots,
                \{\phi_n^*(s)\}_{s \in T},
                t
            \right),
            t
        \right)
        \phi_i(t)(\mathbf{d}\tilde{o}_i)
    \right] \\[4pt]
    \text{s.t. }\;
    &v_i(\phi_i(t), t \mid \prod_{j \neq i} \{\phi_j^*(s)\}_{s \in T})
    \;\ge\;
    v_i(\phi_i(t'), t \mid \prod_{j \neq i} \{\phi_j^*(s)\}_{s \in T}),
    \quad \forall t,t'.
\end{aligned}
\end{equation}

    and for any $i$,
    \begin{equation}
    \label{2*}
        \mathbb{E}_t[u_i(\sigma_A(\{\phi_1^*(s)\}_{s \in T},...,\{\phi_i^*(s)\}_{s \in T},...,\{\phi_n^*(s)\}_{s \in T},t),t)] \geq \mathbb{E}_t[u_i(\sigma_A(\{\phi_1^*(s)\}_{s \in T},...,\{\phi_i^*(t)\},...,\{\phi_n^*(s)\}_{s \in T},t),t)], 
        \tag{MEN}
    \end{equation}
    then $(\mathscr{M}_1,...,\mathscr{M}_n,\sigma_A)$ is a PBE, in which for any $i, \mathscr{M}_i=\{\phi_i^{*}(s)\}_{s \in T}$.
\end{proposition}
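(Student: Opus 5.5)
The argument follows the logic of Lemma \ref{prop:separate}, with the agent's strategy $\sigma_A$ now fixed in advance rather than constructed. Condition \ref{cd1} holds by hypothesis, so only condition \ref{cd2} needs checking: fixing $\mathscr{M}_{-i}=\prod_{j\neq i}\{\phi_j^*(s)\}_{s\in T}$, no menu $\tilde{\mathscr{M}}_i\in 2^{O_i}$ gives principal $i$ a higher expected payoff than $\mathscr{M}_i=\{\phi_i^*(s)\}_{s\in T}$. Mixed deviations are ruled out by linearity once pure deviations are, as the paper notes after Definition \ref{d0}.

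\textbf{Pure case.} Take any deviation $\tilde{\mathscr{M}}_i$ and define the induced direct mechanism $\tilde\phi_i(t):=\sigma_A(\tilde{\mathscr{M}}_i,\mathscr{M}_{-i},t)|_{O_i}$, which is well defined because $\sigma_A$ is pure. The steps are as follows.
\begin{enumerate}
\item \emph{$\tilde\phi_i$ is feasible in \eqref{P1}.} By agent optimality, type $t$ attains $\max_{o_i\in\tilde{\mathscr{M}}_i} v_i(o_i,t|\mathscr{M}_{-i})$ at $\tilde\phi_i(t)$. Every $\tilde\phi_i(t')$ lies in $\tilde{\mathscr{M}}_i$, so the IC constraint holds.
\item \emph{The choice is preserved on the smaller menu.} Let $\tilde{\mathscr{M}}_i':=\{\tilde\phi_i(s)\}_{s\in T}\subseteq\tilde{\mathscr{M}}_i$. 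The chosen profile $\sigma_A(\tilde{\mathscr{M}}_i,\mathscr{M}_{-i},t)$ lies in $\tilde{\mathscr{M}}_i'\times\mathscr{M}_{-i}$. By \eqref{piia}, $\sigma_A(\tilde{\mathscr{M}}_i',\mathscr{M}_{-i},t)=\sigma_A(\tilde{\mathscr{M}}_i,\mathscr{M}_{-i},t)$ for every $t$.
\item \emph{Conclusion.} The objective of \eqref{P1} evaluated at $\tilde\phi_i$ therefore equals principal $i$'s payoff from deviating to $\tilde{\mathscr{M}}_i$. The objective at $\phi_i^*$ is exactly her equilibrium payoff from $\mathscr{M}_i$. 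Optimality of $\phi_i^*$ gives the inequality.
\end{enumerate}
The one thing to be careful about is the direction in which \eqref{piia} is applied: from the larger deviation menu down to the range of the induced mechanism.

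\textbf{Mixed case.} This is the main obstacle. The agent's choice from $\tilde{\mathscr{M}}_i$ is now random, so the object that plays the role of $\tilde\phi_i$ is the stochastic mechanism $\tilde\phi_i(t):=$ the $O_i$-marginal of $\sigma_A(\tilde{\mathscr{M}}_i,\mathscr{M}_{-i},t)$. This is why \eqref{P1'} optimizes over $T\to\Delta(O_i)$.
\begin{enumerate}
\item \emph{Feasibility.} By expected utility and agent optimality, every $o_i$ in the support of $\tilde\phi_i(t)$ maximizes $v_i(\cdot,t|\mathscr{M}_{-i})$ over $\tilde{\mathscr{M}}_i$. Hence $\tilde\phi_i$ satisfies the IC constraint of \eqref{P1'}.
\item \emph{Rewriting the deviation payoff via \eqref{contraction}.} Decompose the deviation payoff by conditioning on principal $i$'s realized outcome $\tilde o_i$. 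Apply \eqref{contraction} with $\mathcal{M}'=\{\tilde o_i\}\times\mathscr{M}_{-i}$. This says the agent's conditional distribution over $o_{-i}$, given $\tilde o_i$, equals $\sigma_A(\{\tilde o_i\},\mathscr{M}_{-i},t)$. Integrating over $\tilde o_i$ against $\tilde\phi_i(t)$ then reproduces exactly the objective of \eqref{P1'} at $\tilde\phi_i$. Optimality of $\phi_i^*$ implies this is at most the objective of \eqref{P1'} at $\phi_i^*$, which equals $\mathbb{E}_t[u_i(\sigma_A(\ldots,\{\phi_i^*(t)\},\ldots,t),t)]$.
\item \emph{Closing the gap.} The objective of \eqref{P1'} evaluates $\phi_i^*$ as if principal $i$ offered the singleton $\{\phi_i^*(t)\}$, not the whole menu. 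Condition \eqref{2*} bridges exactly this gap: it states that the equilibrium menu payoff is at least the singleton-evaluated value. Chaining the inequalities from steps 2 and 3 completes the argument.
\end{enumerate}
The measure-theoretic conditioning in step 2 needs the denominator in \eqref{contraction} to be positive. This holds on the relevant $\tilde o_i$ in the support of $\tilde\phi_i(t)$, and null sets of $\tilde o_i$ can be ignored.
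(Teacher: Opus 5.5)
Your proposal is correct and follows essentially the same route as the paper's proof. In both, you take the (possibly stochastic) direct mechanism induced by the agent's choice from the deviation menu and verify its feasibility via agent optimality. You then identify its objective value with the deviation payoff, via \eqref{piia} (trimming to the range) in the pure case or \eqref{contraction} (conditioning on $\tilde o_i$) in the mixed case, and in the mixed case use \eqref{2*} to connect the singleton-evaluated objective at $\phi_i^*$ to the equilibrium menu payoff.

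One small caveat, which the paper's own computation shares: the conditioning via \eqref{contraction} determines $\sigma_A(\{\tilde o_i\},\mathscr{M}_{-i},t)$ only at atoms of the marginal $\tilde\phi_i(t)$. Your remark that the remaining $\tilde o_i$ form an ignorable null set therefore holds for purely atomic marginals (e.g., finite deviation menus), but not for general mixed choices.
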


The role of \eqref{piia} in the pure-strategy case can be understood as follows. Suppose principal $i$ deviates to some menu. Given the deviating menu and other principals' menus, the agent's optimal choice trace out a type-contingent selection from the deviating menu, which defines a direct mechanism, with its range consists of a trimmed menu. \eqref{piia} ensures that the agent's behavior is identical under the deviating and the trimmed menu: since the agent's chosen outcome at each type already belongs to the trimmed menu, removing the unchosen items does not change her choices. 

Consequently, principal $i$'s payoff from offering the deviating menu equals her payoff from offering the trimmed menu. But the trimmed menu is the range of a direct mechanism, and the decomposition program \eqref{P1} already optimizes over all such mechanisms. So the deviation cannot be profitable. 

The mixed-strategy case requires a separate compatibility condition: \eqref{2*}, which I call \textbf{menu-expansion nonharmfulness}. Note that \eqref{2*} plays the same role as \eqref{2+} in Section \ref{sec:mainresults}---both ensure that the agent's behavior at the full menu profile is compatible with all principals' interests---but \eqref{2*} is stated in terms of $\sigma_A$ directly, while \eqref{2+} is stated in terms of the utility functions.

\begin{remark} \eqref{piia} is related to but different from \cite{luce1959}'s axiom of regularity, whereas \eqref{contraction} is \cite{luce1959}'s choice axiom. Replacing \eqref{piia} with regularity---which requires $\supp \sigma_A(\mathcal{M}',t) \subseteq \supp \sigma_A(\mathcal{M},t) \cap \mathcal{M}'$ whenever the intersection is nonempty---also makes Proposition \ref{p0} hold.

\end{remark}

\begin{remark}
\label{rmk2}
    If principals are not restricted to menus of deterministic outcomes but can offer menus of stochastic outcomes, then in \eqref{P1}, each $\phi_i^*$ solves the same problem except over $\max \limits_{\phi_i: T \rightarrow \Delta(O_i)}$, and both \eqref{piia} and \eqref{contraction} are imposed on $\mathcal{M}' \subseteq \mathcal{M} \in \prod_{i=1}^n 2^{\Delta(O_i)}$. 
\end{remark}

\subsection{Partial Converse}

A natural question is the converse: given a PBE, can it be recovered through the decomposition program \eqref{P1}? The answer is generally no---in particular, when $\card(T)< \card(O_i)$, some menus cannot be written as the range of a direct mechanism. The following result establishes a partial converse: within any PBE, one can find a \emph{sub-menu} for each principal that solves the decomposition program.

\begin{proposition}
\label{p2}
For any PBE $(\mathscr{M}_1,...,\mathscr{M}_n, \sigma_A)$, if $\sigma_A$ is a pure strategy that satisfies \eqref{piia}, then there exists $\{\phi_i^*(t)_{t \in T}\}_{i=1}^n$ such that for any $i, \{\phi_i^*(s)\}_{s \in T} \subseteq \mathscr{M}_i$, and $\phi_i^*$ solves:
\begin{equation}
\label{P2}
\tag{P4}
\begin{aligned}
    &\max_{\phi_i: T \rightarrow O_i}\;
    \mathbb{E}_t\!\left[
        u_i\!\left(
            \sigma_A(
                \mathscr{M}_1,
                \ldots,
                \{\phi_i(s)\}_{s \in T},
                \ldots,
                \mathscr{M}_n,
                t
            ),
            t
        \right)
    \right] \\[6pt]
    &\text{s.t. }\;
    v_i(\phi_i(t), t \mid \mathscr{M}_{-i})
    \;\ge\;
    v_i(\phi_i(t'), t \mid \mathscr{M}_{-i}),
    \quad \forall t, t'.
\end{aligned}
\end{equation}

Alternatively, if $\sigma_A$ is a mixed strategy that satisfies \eqref{contraction}, and for any i, there exists $\{o_i^t\}_{t \in T}$ such that for any $t$,
\begin{equation}
\label{3*}
\begin{aligned}
    &o_i^t \in 
    \argmax_{o_i \in \supp \sigma_A(\mathscr{M}_1,\ldots,\mathscr{M}_n,t)\big|_{O_i}}
    \int_{O_{-i}}
        u_i(o_i, o_{-i}, t)\,
        \sigma_A(\mathscr{M}_1,\ldots,\mathscr{M}_n,t)(\mathbf{d}o_{-i}\mid o_i),
    \\[6pt]
    \text{and for any } t \neq t',
    \\[-2pt]
    &\hspace{-4.5em}
    o_i^t \notin 
        \supp \sigma_A(\mathscr{M}_1,\ldots,\mathscr{M}_n,t')\big|_{O_i}
        \setminus
        \argmax_{o_i \in \supp \sigma_A(\mathscr{M}_1,\ldots,\mathscr{M}_n,t')\big|_{O_i}}
        \int_{O_{-i}}
            u_i(o_i, o_{-i}, t')\,
            \sigma_A(\mathscr{M}_1,\ldots,\mathscr{M}_n,t')(\mathbf{d}o_{-i}\mid o_i),
\end{aligned}
\end{equation}

then there exists $\{\phi_i^{**}(s)_{s \in T}\}_{i=1}^n$ such that for any $i, \{\phi_i^{**}(s)\}_{s \in T} \subseteq \mathscr{M}_i$, and $\phi_i^{**}$ solves \eqref{P2}.

\end{proposition}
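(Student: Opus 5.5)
For the pure-strategy part I would build $\phi_i^*$ directly from the equilibrium selection: set $\phi_i^*(t) := \sigma_A(\mathscr{M}_1,\ldots,\mathscr{M}_n,t)\big|_{O_i}$. Then $\{\phi_i^*(s)\}_{s\in T}\subseteq \mathscr{M}_i$ holds by construction. The proof then has three steps.

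\emph{Feasibility.} Fix $t,t'$. Condition \ref{cd1} says the profile $\sigma_A(\mathscr{M},t)$ maximizes $V(\cdot,t)$ over $\prod_j \mathscr{M}_j$. Hence
\[
v_i(\phi_i^*(t),t\mid\mathscr{M}_{-i}) = \max_{o\in\prod_j\mathscr{M}_j}V(o,t) \ge \max_{o_{-i}\in\mathscr{M}_{-i}} V(\phi_i^*(t'),o_{-i},t) = v_i(\phi_i^*(t'),t\mid \mathscr{M}_{-i}),
\]
where the inequality uses $\phi_i^*(t')\in\mathscr{M}_i$. So $\phi_i^*$ satisfies the constraints of \eqref{P2}.

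\emph{Value at $\phi_i^*$ equals the equilibrium payoff.} Let $\mathscr{M}_i^*:=\{\phi_i^*(s)\}_{s\in T}$. For each $t$, $\supp\sigma_A(\mathscr{M},t)$ is the single profile $(\phi_i^*(t),\sigma_A(\mathscr{M},t)|_{O_{-i}})$. This profile lies in $\mathscr{M}_i^*\times\mathscr{M}_{-i}\subseteq\mathscr{M}$. By \eqref{piia}, $\sigma_A(\mathscr{M}_i^*,\mathscr{M}_{-i},t)=\sigma_A(\mathscr{M},t)$. Therefore the objective of \eqref{P2} at $\phi_i^*$ equals principal $i$'s equilibrium payoff.

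\emph{Optimality.} Take any feasible $\phi_i:T\to O_i$ with range $\tilde{\mathscr{M}}_i$. The objective of \eqref{P2} at $\phi_i$ is exactly principal $i$'s payoff from deviating to the pure menu $\tilde{\mathscr{M}}_i$ against $\mathscr{M}_{-i}$ under the fixed $\sigma_A$. Condition \ref{cd2} says no such deviation beats the equilibrium payoff. Hence $\phi_i^*$ attains the maximum.

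Feasibility of $\phi_i$ is not even needed here. The only nontrivial ingredient is that \eqref{piia} lets us trim $\mathscr{M}_i$ to the used items without changing the agent's choice.

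For the mixed-strategy part, $\sigma_A(\mathscr{M},t)|_{O_i}$ is no longer a point, so I would set $\phi_i^{**}(t):=o_i^t$ from \eqref{3*}. Feasibility follows as before: $o_i^t$ lies in the support, so $(o_i^t,o_{-i})$ is $V$-optimal for some $o_{-i}$ in the support. The IC inequality then follows exactly as in the pure case.

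Optimality again reduces to condition \ref{cd2}, since the objective at any $\phi_i$ is a deviation payoff. The main obstacle is showing that the trimmed menu $\mathscr{M}_i^{**}=\{o_i^s\}_s$ does at least as well as $\mathscr{M}_i$. I would argue as follows, type by type.

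\textbf{Which items of $\mathscr{M}_i^{**}$ survive in $\supp\sigma_A(\mathscr{M},t)|_{O_i}$.} The first line of \eqref{3*} guarantees $o_i^t$ survives. The second line of \eqref{3*} says that any $o_i^{t'}$ that also survives is itself an argmax of the conditional payoff $\int u_i(o_i,o_{-i},t)\,\sigma_A(\cdot\mid o_i)$ at $t$.

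\textbf{Effect of the contraction.} By \eqref{contraction}, after contracting to $\mathscr{M}_i^{**}\times\mathscr{M}_{-i}$, the agent's choice is $\sigma_A(\mathscr{M},t)$ conditioned on the surviving $O_i$-items. This intersection is nonempty because $o_i^t$ survives. Conditional distributions over $o_{-i}$ given each surviving $o_i$ are preserved.

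\textbf{Comparison.} The contracted payoff at $t$ is therefore an average, over the surviving items, of their conditional payoffs. Each surviving item is an argmax, so this average equals the maximum conditional payoff. The original payoff at $t$ is an average over all of $\supp\sigma_A(\mathscr{M},t)|_{O_i}$, so it is weakly smaller. Taking expectations over $t$ gives the result.

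The delicate points are measurability and conditioning on possibly null sets when $\supp|_{O_i}$ is infinite. I expect to handle these with regular conditional distributions, or by treating the finite-menu case first.
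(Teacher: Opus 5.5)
Your proposal is correct and matches the paper's proof essentially step for step. In the pure case you set $\phi_i^*(t)$ to the $O_i$-component of the equilibrium choice, use \eqref{piia} to show that trimming $\mathscr{M}_i$ to this range leaves the equilibrium payoff unchanged, and then treat any feasible $\phi_i$ as a menu deviation ruled out by condition \ref{cd2} in Definition \ref{d0}. In the mixed case you take $\phi_i^{**}(t)=o_i^t$, use the two lines of \eqref{3*} to show that every surviving $O_i$-item at $t$ maximizes the conditional payoff, and use \eqref{contraction} to conclude that the trimmed menu does weakly better than $\mathscr{M}_i$; this is exactly the paper's argument. Your remarks on the nonemptiness hypothesis in \eqref{contraction} and on conditioning over null sets address points the paper leaves implicit, so they refine the argument rather than depart from it.
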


The pure-strategy case is clean: under \eqref{piia}, any PBE contains a sub-menu profile that solves the decomposition. The mixed-strategy case requires the additional condition \eqref{3*}, which is an alignment condition between the agent and the principals: if an outcome of principal $i$ is chosen by type $t$ and maximizes principal $i$'s conditional payoff, then this outcome is either not chosen by other types, or when chosen by type $t'$, it also maximizes principal $i$'s conditional payoff at $t'$. In other words, the agent's randomization does not ``misassign'' outcomes across types from any individual principal's perspective.

Note that Proposition \ref{p2} recovers a sub-menu $\{\phi_i^*(s)\}_{s \in T} \subseteq \mathscr{M}_i$, not the full menu $\mathscr{M}_i$. This gap---between the sub-menu that solves the decomposition and the full equilibrium menu---reflects the possible presence of ``unused'' menu items that affect incentives but are not part of the decomposition's range. I show below that extended direct mechanisms can close this gap under certain conditions.

 \subsection{Extended Direct Mechanisms}
 \label{sec:extendeddm}

\cite{PavanCalzolari2009Sequential, PavanCalzolari2010} propose \textit{extended direct mechanisms}, in which the agent reports not only her private type but also the outcomes with other principals: $\phi_i: T \times O_{-i} \rightarrow O_i$. The richer message space allows each principal's mechanism to condition on the agent's entire experience, not just her type.

For the forward direction (from decomposition to PBE), extended mechanisms offer no additional power:
    \begin{proposition}
    \label{prop:extendeddm1}
If $\sigma_A$ is a pure strategy that satisfies condition \ref{cd1} in Definition \ref{d0} and \eqref{piia}, and for any $i, \phi_i^*$ solves
\begin{equation*}
\begin{array}{c}
\displaystyle
\max_{\phi_i: T \times O_{-i} \to O_i}
\;
\mathbb{E}_t\!\left[
u_i\!\left(
\sigma_A(
\{\phi_1^*(s,o_{-1})\}_{s\in T,\, o_{-1}\in O_{-1}},\ldots,
\{\phi_i(s,o_{-i})\}_{s\in T,\, o_{-i}\in O_{-i}},\ldots,
\{\phi_n^*(s,o_{-n})\}_{s\in T,\, o_{-n}\in O_{-n}},
t
),
t
\right)
\right]
\\[1em]
\text{s.t.} \quad \displaystyle
v_i\!\left(\phi_i(t,o_{-i}),\, t \mid \phi_{-i}^*\right)
\;\ge\;
v_i\!\left(\phi_i(t',o_{-i}'),\, t \mid \phi_{-i}^*\right),
\\[0.5em]
\forall\, t,t'\in T,\; o_{-i},o_{-i}'\in O_{-i},
\end{array}
\end{equation*}

 then $(\mathscr{M}_1,...,\mathscr{M}_n,\sigma_A)$ is a PBE, in which for any $i, \mathscr{M}_i=\{\phi_i^{*}(s,o_{-i})\}_{s \in T, o_{-i} \in O_{-i}}$.

 \end{proposition}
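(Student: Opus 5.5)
The plan is to mirror the pure-strategy part of Proposition \ref{p0}, with the message space $T$ replaced by $T \times O_{-i}$. Principal $i$'s mechanism is then a map $\phi_i: T\times O_{-i}\to O_i$, and its range $\mathscr{M}_i=\{\phi_i^*(s,o_{-i})\}_{s,o_{-i}}$ is a menu.

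Agent optimality (condition \ref{cd1} in Definition \ref{d0}) holds by hypothesis for every menu profile, including off-path ones. So only principal optimality (condition \ref{cd2}) needs checking. By linearity in the deviator's own mixed strategy, noted after Definition \ref{d0}, it suffices to rule out profitable pure-menu deviations. Throughout, the rivals' menus are fixed at $\mathscr{M}_{-i}$; this is how I read the notation $v_i(\cdot,t\mid\phi_{-i}^*)$ in the constraint.

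Fix $i$ and an arbitrary deviation $\tilde{\mathscr{M}}_i\in 2^{O_i}$. Since $\sigma_A$ is pure, write $\sigma_A(\tilde{\mathscr{M}}_i,\mathscr{M}_{-i},t)=(\hat o_i(t),\hat o_{-i}(t))$. The key step is to build an extended direct mechanism from this selection:
\[
\hat\phi_i(t,o_{-i}) := \hat o_i(t)\quad\text{for all } o_{-i}\in O_{-i},
\]
which ignores the reported rival outcomes. Its range is $\{\hat o_i(s)\}_{s\in T}\subseteq\tilde{\mathscr{M}}_i$.

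Feasibility of $\hat\phi_i$ follows from agent optimality under the deviation. Because $(\hat o_i(t),\hat o_{-i}(t))$ maximizes $V(\cdot,t)$ over $\tilde{\mathscr{M}}_i\times\mathscr{M}_{-i}$, we get $v_i(\hat o_i(t),t\mid\mathscr{M}_{-i})\ge v_i(o_i,t\mid\mathscr{M}_{-i})$ for every $o_i\in\tilde{\mathscr{M}}_i$. In particular this holds at $o_i=\hat o_i(t')$ for any $t'$. The constraint is therefore satisfied for all $(t,o_{-i})$ and $(t',o'_{-i})$, since $o_{-i}$ plays no role.

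Next I need the payoff of $\hat\phi_i$ in the program to equal the payoff of the deviation. The trimmed profile $\{\hat o_i(s)\}_s\times\mathscr{M}_{-i}$ is contained in $\tilde{\mathscr{M}}_i\times\mathscr{M}_{-i}$ and contains $\sigma_A(\tilde{\mathscr{M}}_i,\mathscr{M}_{-i},t)$. By \eqref{piia}, $\sigma_A$ at the trimmed profile equals $\sigma_A$ at the deviation profile for every $t$. Hence principal $i$'s objective at $\hat\phi_i$ equals her deviation payoff.

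Optimality of $\phi_i^*$ then gives: deviation payoff $=$ objective at $\hat\phi_i$ $\le$ objective at $\phi_i^*$. The last quantity is exactly her equilibrium payoff at $(\mathscr{M}_1,\dots,\mathscr{M}_n,\sigma_A)$, because the objective evaluates $\sigma_A$ at the range menus. So no deviation is profitable, and the profile is a PBE.

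The main obstacle is interpretive rather than technical. One has to check that the objective in the extended program depends on $\phi_i$ only through its range menu, so that the non-extended selection $\hat\phi_i$ counts as a feasible element. One also has to check that the IC constraint's quantification over $o_{-i}$ imposes nothing beyond what agent optimality already delivers. If the constraint were read with $v_i$ conditioned on a menu that depends on the report $o_{-i}$, I would instead verify it pointwise using the same upper-envelope inequality. Constancy in $o_{-i}$ keeps that verification trivial.
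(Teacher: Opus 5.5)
Your proposal is correct and follows essentially the same route as the paper: you build the extended mechanism $\hat\phi_i(t,o_{-i})=\sigma_A(\tilde{\mathscr{M}}_i,\mathscr{M}_{-i},t)\big|_{O_i}$, which ignores the report $o_{-i}$. You then get feasibility from agent optimality under the deviation and use \eqref{piia} to equate its program value with the deviation payoff, exactly as the paper does (the paper phrases it as a contradiction rather than directly). This is also why the paper remarks that the forward direction never exploits the richer message space of extended mechanisms.
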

     
The limitation is that, in the proof, the alternative mechanism constructed to derive a contradiction is effectively a standard direct mechanism. The forward direction thus does not exploit the richer message space of extended mechanisms.

The value of extended mechanisms lies instead in the \emph{converse}: they can replicate full equilibrium menus, not just sub-menus.

     \begin{proposition}
\label{prop:extendeddm2}
     For any PBE $(\mathscr{M}_1,...,\mathscr{M}_n, \sigma_A)$, where $\sigma_A$ is a pure-strategy, and for any $i, \card(\mathscr{M}_i) \leq \card(O_{-i})$, there exists $(\phi_1^*,...,\phi_n^*)$ such that for any $i, \phi_i^*$ solves 
          \begin{equation*}
\begin{aligned}
\max_{\phi_i: T \times O_{-i} \rightarrow O_i} \quad
& \mathbb{E}_t\!\left[
u_i\!\left(
\sigma_A(\mathscr{M}_1,\ldots,\{\phi_i(s,o_{-i})\}_{s\in T,\, o_{-i}\in O_{-i}},\ldots,\mathscr{M}_n,t),
t
\right)
\right] \\
\text{s.t.} \quad
& v_i\!\left(\phi_i(t,o_{-i}),\, t \mid \mathscr{M}_{-i}\right)
\;\ge\;
v_i\!\left(\phi_i(t',o_{-i}'),\, t \mid \mathscr{M}_{-i}\right), \\
& \forall\, t,t',o_{-i}',\;
\forall\, o_{-i} \in
\supp \sigma_A(\mathscr{M}_1,\ldots,\mathscr{M}_n,t)\big|_{O_{-i}}, 
\end{aligned}
\end{equation*}

     and for any $i, \mathscr{M}_i=\{\phi_i^{*}(s,o_{-i})\}_{s \in T, o_{-i} \in O_{-i}}$.
\end{proposition}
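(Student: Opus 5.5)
The plan is to build each $\phi_i^*$ directly from the equilibrium selection, spreading the elements of $\mathscr{M}_i$ across the rival-report coordinate, and then show that optimality of $\phi_i^*$ in the extended program follows from the equilibrium condition \ref{cd2} for principal $i$.

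\textbf{Step 1: construction.} Fix $i$. By the cardinality assumption $\card(\mathscr{M}_i) \leq \card(O_{-i})$, choose a surjection $\psi_i: O_{-i} \to \mathscr{M}_i$. I want $\psi_i$ to fix the on-path pairings: for each type $t$, with the pure selection
\[
\sigma_A(\mathscr{M}_1,\ldots,\mathscr{M}_n,t) = (o_i^t, o_{-i}^t),
\]
I need $\psi_i(o_{-i}^t) = o_i^t$. This is consistent because, by agent optimality, $o_{-i}^t$ determines $o_i^t$ up to ties; I fix one selection for each $o_{-i}$. The remaining points of $O_{-i}$ are used to cover the unused items of $\mathscr{M}_i$. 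If the on-path constraints use up too many points of $O_{-i}$, I will adjust the argument in Step 3. Then I define
\[
\phi_i^*(s, o_{-i}) :=
\begin{cases}
o_i^s & \text{if } o_{-i} \in \supp \sigma_A(\mathscr{M}_1,\ldots,\mathscr{M}_n,s)\big|_{O_{-i}},\\
\psi_i(o_{-i}) & \text{otherwise.}
\end{cases}
\]
By construction, the range of $\phi_i^*$ is exactly $\mathscr{M}_i$.

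\textbf{Step 2: feasibility and objective.} The IC constraint is imposed only at $o_{-i}$ in the equilibrium support for $t$. There $\phi_i^*(t,o_{-i}) = o_i^t$, and $o_i^t$ maximizes $v_i(\cdot,t\mid\mathscr{M}_{-i})$ over all of $\mathscr{M}_i$, because $(o_i^t, o_{-i}^t)$ is a joint argmax of $V$. Every $\phi_i^*(t',o_{-i}')$ lies in $\mathscr{M}_i$, so the constraint holds. Since the range of $\phi_i^*$ is $\mathscr{M}_i$, the objective value of $\phi_i^*$ equals principal $i$'s equilibrium payoff.

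\textbf{Step 3: optimality.} This is the main obstacle. Take any feasible $\phi_i$ and let $R$ be its range. The objective of $\phi_i$ is principal $i$'s payoff from deviating to $R$ under the same $\sigma_A$, and by condition \ref{cd2} that payoff is at most the equilibrium payoff. So $\phi_i^*$ is optimal, and no IIA condition is needed here. In fact the IC constraint plays no role in the upper bound; it only has to be satisfied by $\phi_i^*$ itself.

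The delicate point is therefore the construction in Step 1, not the optimality argument. If on-path reports force $\psi_i$ to be non-surjective, I will instead attach the unused items to off-path pairs $(s,o_{-i})$ with $o_{-i}$ outside the support for $s$. On those pairs the IC constraint is vacuous as a left-hand side. As a right-hand side, those items lie in $\mathscr{M}_i$ and hence are weakly worse for every type, since each type's chosen item is optimal in $\mathscr{M}_i$. The cardinality bound guarantees there are enough pairs to host all items of $\mathscr{M}_i$.
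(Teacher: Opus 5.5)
Your proposal is correct and follows essentially the paper's route. You set $\phi_i^*(t,o_{-i}^t)=o_i^t$ on path and park the unused items of $\mathscr{M}_i$ at off-path reports; the paper does this by fixing one type $t_0$ and injecting $\mathscr{M}_i\setminus\{o_i^t\}_{t\in T}$ into $O_{-i}\setminus\{o_{-i}^{t_0}\}$, which is exactly the count your last paragraph relies on. You then get IC because each on-path value maximizes $v_i(\cdot,t\mid\mathscr{M}_{-i})$ over $\mathscr{M}_i$, and optimality directly from principal $i$'s equilibrium condition, since the objective depends only on the range.

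One small correction: $o_{-i}^t$ does not pin down $o_i^t$ across types, because two types can choose the same rival outcome but different items from principal $i$. So the requirement $\psi_i(o_{-i}^t)=o_i^t$ can be unsatisfiable. It is also unnecessary, since your $\phi_i^*$ is already defined pointwise in $(s,o_{-i})$ on path, and you can simply drop it.
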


The condition $\card(\mathscr{M}_i) \leq \card(O_{-i})$ ensures that the extended mechanism has a rich enough message space to ``encode'' each menu item through a distinct report of rivals' outcomes. When this condition holds, Proposition \ref{prop:extendeddm2} recovers the \emph{full} equilibrium menu as the range of an extended direct mechanism---closing the gap left by Proposition \ref{p2}. The economic interpretation is that extended mechanisms allow the principal to use the agent's report of rivals' outcomes as a coordination device: different reports lead to different outcomes from principal $i$, effectively replicating any menu the principal might want to offer.

\section{Conclusion}
\label{sec:conclusion}

This paper develops a decomposition methodology for common agency
games in which each principal's payoff depends on her own outcome
and the agent's type. Lemma \ref{prop:separate} reduces each
principal's best-response problem to a standard screening problem
over the agent's indirect utility; Proposition \ref{p1} gives the
compatibility condition (\eqref{2+}) under which individually
best-responding mechanisms assemble into a pure-menu PBE. The
fixed-point structure of the original game is not eliminated, but
the applications establish existence constructively: in a
quadratic-loss delegation model, equilibria feature one principal
partitioning the type space into discrete regimes while the other
adapts freely within each segment; in a bundling duopoly, the
decomposition yields market splitting and asymmetric
base-plus-upgrades structures.

Three directions remain open. Extending the decomposition to
sequential common agency (\cite{PratRustichini1998SequentialCommonAgency},
\cite{PavanCalzolari2009Sequential}) is natural, as the leader's
contract transforms the follower's indirect utility. The strategic
shields phenomenon---menu items that no type selects but that
discipline rivals' screening problems---deserves further
investigation. Finally, \cite{attar2025keeping} show that
principals can benefit from privately disclosing information
about their decision rules; extending the decomposition to such
settings would require the indirect utility to depend on private
signals.

\appendix
\addtocontents{toc}{\protect\setcounter{tocdepth}{1}}

\section{Proofs for Constructing PBE}

\subsection{Proof of Lemma \ref{prop:separate}}

\begin{proof}
    $\phi_i^*$ satisfies the IC constraints in \eqref{P3separate} implies that for any $t$, there exists $o_{-i} \in \mathscr{M}_{-i}$ such that 
    \[ (\phi_i^*(t), o_{-i}) \in \argmax \limits_{o \in (\{\phi_i^*(s)\}_{s \in T}, \mathscr{M}_{-i})} V(o,t). \]
    For any $t$, let $\sigma_A(\{\phi_i^*(s)\}_{s \in T}, \mathscr{M}_{-i},t)=(\phi_i^*(t), o_{-i})$, which does not violate condition \ref{cd1} in Definition \ref{d0} on $(\{\phi_i^*(s)\}_{s \in T}, \mathscr{M}_{-i})$. In that case,
    \[ \mathbb{E}_t[u_i(\sigma_A(\{\phi_i^*(s)\}_{s \in T}, \mathscr{M}_{-i},t),t)]=\mathbb{E}_t[u_i(\phi_i^*(t),t)]. \]
    Suppose there exists $\mathscr{M}_i'$ such that 
    \[ \mathbb{E}_t[u_i(\sigma_A(\mathscr{M}_i', \mathscr{M}_{-i},t),t)]>\mathbb{E}_t[u_i(\sigma_A(\{\phi_i^*(s)\}_{s \in T}, \mathscr{M}_{-i},t),t)],\]
    then consider $\phi_i': T \rightarrow O_i$ such that for any $t, \phi_i'(t) \in \argmax \limits_{o_i \in \supp \sigma_A(\mathscr{M}_i',\mathscr{M}_{-i},t)|_{O_i}}u_i(o_i,t)$. For any $i$,
    \begin{equation}
\label{166}
\begin{split}
    \mathbb{E}_t[u_i(\phi_i'(t),t)]
    &\geq \mathbb{E}_t[u_i(\sigma_A(\mathscr{M}_i',\mathscr{M}_{-i},t),t)] \\
    &> \mathbb{E}_t[u_i(\sigma_A(\mathscr{M}_i,\mathscr{M}_{-i},t),t)] \\
    &= \mathbb{E}_t[u_i(\phi_i^*(t),t)].
\end{split}
\end{equation}
In addition, for any $t$, there exists $o'_{-i} \in \mathscr{M}_{-i}$ such that $(\phi_i'(t), o_{-i}') \in \supp \sigma_A(\mathscr{M}_i',\mathscr{M}_{-i},t)$, which, by condition \ref{cd1} in Definition \ref{d0}, implies that $(\phi_i'(t), o_{-i}') \in \argmax \limits_{o \in \mathscr{M}_{i}' \times \mathscr{M}_{-i}} V(o,t)$.

This directly leads to
\begin{equation}
\label{177}
    \max \limits_{o \in \mathscr{M}_i' \times \mathscr{M}_{-i}} V(o,t)=v_i(\phi_i'(t),t|\mathscr{M}_{-i}) \geq v_i(\phi_i'(t'),t|\mathscr{M}_{-i}), \forall t,t'.
\end{equation}

(\ref{166}) and (\ref{177}) together imply that $\phi_i^*$ does not solve \eqref{P3separate}, a contradiction. Hence, 
\[ \{\phi_i^*(s)\}_{s \in T} \in \argmax \limits_{\tilde{\mathscr{M}}_i \in 2^{O_i}} \mathbb{E}_t[u_i(\sigma_A(\tilde{\mathscr{M}}_i, \mathscr{M}_{-i},t),t].\]
\end{proof}

\subsection{Proof of Proposition \ref{p1}}

\begin{proof}

By Lemma \ref{prop:separate}, if for any $i, \phi_i^*$ solves \eqref{P3}, then for any $i$, there exists $\sigma_A^i$ that satisfies condition \ref{cd1} in Definition \ref{d0} such that 

\begin{equation} \label{eq:ibestresponse} \{\phi_i^*(s)\}_{s \in T} \in \argmax \limits_{\tilde{\mathscr{M}}_i \in 2^{O_i}} \mathbb{E}_t[u_i(\sigma_A^i(\tilde{\mathscr{M}}_i, \prod_{j \neq i} \{\phi_j^*(s)\}_{s \in T},t),t].\end{equation}

In addition, the proof of Lemma \ref{prop:separate} shows that \eqref{eq:ibestresponse} holds if 
\[ \mathbb{E}_t[u_i(\sigma_A^i(\{\phi_i^*(s)\}_{s \in T}, \mathscr{M}_{-i},t),t)]=\mathbb{E}_t[u_i(\phi_i^*(t),t)]. \]

 For any $t$, let \begin{equation*} \sigma_A(\mathscr{M}_1,...,\mathscr{M}_n,t)=(o_1^t,...,o_n^t), \end{equation*}
where $(o_1^t,...,o_n^t)$ is the outcome profile specified by (\ref{2+}). In that case, $\sigma_A$ does not violate condition \ref{cd1} in Definition \ref{d0} on $(\mathscr{M}_1,...,\mathscr{M}_n)$, and satisfies for any $i$, $\mathbb{E}_t[u_i(\sigma_A(\mathscr{M}_1,...,\mathscr{M}_n,t),t)]= \mathbb{E}_t[u_i(\phi_i^*(t)),t)]$. Let $\sigma_A$ also satisfy condition \ref{cd1} in Definition \ref{d0} on menu profiles other than $(\mathscr{M}_1,...,\mathscr{M}_n)$ for any $t$. Combining with \eqref{eq:ibestresponse} holds for any $i$, $(\mathscr{M}_1,...,\mathscr{M}_n,\sigma_A)$ is a PBE.  

\end{proof}

\subsection{Proof of Corollary \ref{c3}}

\begin{proof}
    Suppose that there exists $t$ such that $(\phi_1^*(t),...,\phi_n^*(t)) \notin \argmax \limits_{o_i \in \{\phi_i^*(s)\}_{s \in T}, \forall i} V(o_1,...,o_n,t)$. Pick \[(\bar{o}_1,...,\bar{o}_n) \in \argmax \limits_{o_i \in \{\phi_i^*(s)\}_{s \in T}, \forall i} V(o_1,...,o_n,t).\] There exists $i$ such that $\bar{o}_i \neq \phi_i^*(t)$. In that case,
    \[ \max \limits_{o_i \in \{\phi_i^*(s)\}_{s \in T}, \forall i} V(o_1,...,o_n,t)=v_i(\bar{o}_i|\prod_{j \neq i} \{\phi_j^*(s)\}_{s \in T}) \geq v_i(\phi_i^*(t)|\prod_{j \neq i} \{\phi_j^*(s)\}_{s \in T}):=V(\phi_i^*(t), \tilde{o}_{-i},t).\]
    By the condition in Corollary \ref{c3}, $\bar{o}_i \neq \phi_i^*(t) \Rightarrow (\bar{o}_1,...,\bar{o}_n) \neq (\phi_i^*(t), \tilde{o}_{-i})$, which further implies that
    \[v_i(\bar{o}_i|\prod_{j \neq i} \{\phi_j^*(s)\}_{s \in T}) \neq v_i(\phi_i^*(t)|\prod_{j \neq i} \{\phi_j^*(s)\}_{s \in T}). \]
    Hence, $v_i(\bar{o}_i|\prod_{j \neq i} \{\phi_j^*(s)\}_{s \in T}) > v_i(\phi_i^*(t)|\prod_{j \neq i} \{\phi_j^*(s)\}_{s \in T})$, a contradiction to $\phi_i^*$ solves \eqref{P3}. Therefore, for any $t$, $(\phi_1^*(t),\ldots,\phi_n^*(t))
\in \argmax_{\substack{o_i \in \{\phi_i^*(s)\}_{s \in T}, \forall i}}
V(o_1,\ldots,o_n,t)$, which naturally satisfies \eqref{2+}. 
\end{proof}

\subsection{Proof of Corollary \ref{c0}}

\begin{proof}

For any $(\phi_1^*,...,\phi_n^*)$ such that for any $i$, $\phi_i^*$ solves \eqref{P3}. If $V$ is additive separable, then for any $t$, 
\begin{equation}
\label{eq:additiveseparable}
\begin{aligned}
(o_1^*,\ldots,o_n^*)
&\in \argmax_{\substack{o_i \in \{\phi_i^*(s)\}_{s \in T}, \forall i}}
V(o_1,\ldots,o_n,t) \\
&\Leftrightarrow
(o_1^*,\ldots,o_n^*)
\in \argmax_{\substack{o_i \in \{\phi_i^*(s)\}_{s \in T}, \forall i}}
\sum_{i=1}^n v^i(o_i,t) \\
&\Leftrightarrow
\forall i,\;
o_i^* \in \argmax_{o_i \in \{\phi_i^*(s)\}_{s \in T}} v^i(o_i,t).
\end{aligned}
\end{equation}
Note that for any $i$, $\phi_i^*$ satisfies the IC constraint in \eqref{P3} implies that
\[ \max \limits_{o_{-i} \in \prod_{j \neq i} \{\phi_j^*(s)\}_{s \in T}} V(\phi_i^*(t), o_{-i},t) \geq \max \limits_{o_{-i} \in \prod_{j \neq i} \{\phi_j^*(s)\}_{s \in T}} V(\phi_i^*(t'), o_{-i},t), \forall t,t',\]
which, by additive separability, further implies that
\[v^i(\phi_i^*(t),t) \geq v^i(\phi_i^*(t'),t), \forall t,t',\]
i.e., for any $t$, $\phi_i^*(t) \in \argmax \limits_{o_i \in \{\phi_i^*(s)\}_{s \in T}} v^i(o_i,t)$. Since this holds for any $i$, by \eqref{eq:additiveseparable}, for any $t$
\[ (\phi_1^*(t),\ldots,\phi_n^*(t))
\in \argmax_{\substack{o_i \in \{\phi_i^*(s)\}_{s \in T}, \forall i}}
V(o_1,\ldots,o_n,t), \]
which naturally satisfies \eqref{2+}.

If $V$ is weakly separable, suppose that there exists $t$ such that \begin{equation} \label{eq:notmaximizer} (\phi_1^*(t),...,\phi_n^*(t)) \notin \argmax \limits_{o_i \in \{\phi_i^*(s)\}_{s \in T}, \forall i} V(o_1,...,o_n,t).\end{equation} Pick $(\bar{o}_1,...,\bar{o}_n) \in \argmax \limits_{o_i \in \{\phi_i^*(s)\}_{s \in T}, \forall i} V(o_1,...,o_n,t)$. It is straightforward that
\[V(\bar{o}_1,\bar{o}_2,...,\bar{o}_n,t) \geq V(\phi_1^*(t), \bar{o}_2,...,\bar{o}_n,t).\]
In addition, since $V(\bar{o}_1, \bar{o}_2,...,\bar{o}_n,t) \geq V(\bar{o}_1, \phi_2^*(t),...,\bar{o}_n,t)$, by weakly separability, 
\[V(\phi_1^*(t), \bar{o}_2,...,\bar{o}_n,t) \geq V(\phi_1^*(t), \phi_2^*(t),...,\bar{o}_n,t). \]
Iterate this argument,
\begin{equation}
\label{eq:wssequence}
\begin{aligned}
V(\bar{o}_1,\bar{o}_2,\ldots,\bar{o}_n,t)
&\ge V(\phi_1^*(t),\bar{o}_2,\ldots,\bar{o}_n,t) \\
&\ge V(\phi_1^*(t),\phi_2^*(t),\ldots,\bar{o}_n,t) \\
&\ge \;\cdots\; \ge
V(\phi_1^*(t),\ldots,\phi_n^*(t),t).
\end{aligned}
\end{equation}
By \eqref{eq:notmaximizer}, at least one of the inequalities in \eqref{eq:wssequence} has to be strict. Let \[V(\phi_1^*(t),...,\phi_{i-1}^*(t), \bar{o}_i,\bar{o}_{i+1},...,\bar{o}_n,t) > V(\phi_1^*(t),...,\phi_{i-1}^*(t), \phi_i^*(t),\bar{o_{i+1}},...,\bar{o}_n,t),\]
then by weakly separability,
\[v_i(\phi_i^*(t)|\prod_{j \neq i} \{\phi_j^*(s)\}_{s \in T}) := V(\phi_i^*(t), \tilde{o}_{-i},t) < V(\bar{o}_i, \tilde{o}_{-i},t) \leq v_i(\bar{o}_i|\{\phi_{-i}^*(t)\}),\]
a contradiction to $\phi_i^*$ solves \eqref{P3}. Hence, for any $t$, $(\phi_1^*(t),\ldots,\phi_n^*(t))
\in \argmax_{\substack{o_i \in \{\phi_i^*(s)\}_{s \in T}, \forall i}}
V(o_1,\ldots,o_n,t)$, which naturally satisfies \eqref{2+}. 

\end{proof}

\subsection{Proof of Corollary \ref{c2}}

\begin{proof}
    Without loss of generality, let $\mathcal{N}'=\{1,...,n-1\}$. For any $i \in \mathcal{N}$, let $\phi_i^*(t)=\bar{o}_i, \forall t$. In that case, for any $t$, \[\argmax \limits_{o_i \in \{\phi_i^*(s)\}_{s \in T}, \forall i} V((o_1,...,o_n),t)=\argmax \limits_{o_n \in \{\phi_n^*(s)\}_{s \in T}} V((\bar{o}_1,...,\bar{o}_{n-1}, o_n),t).\] Since $(\phi_1^*,...,\phi_n^*)$ solves (\ref{P3}), 
    \begin{equation*}
        V((\bar{o}_1,...,\bar{o}_{n-1}, \phi_n^*(t)),t) \geq V((\bar{o}_1,...,\bar{o}_{n-1}, o_n),t), \forall o_n \in \{\phi_i^*(s)\}_{s \in T}, \forall t,
    \end{equation*}
    which implies that for any $t, (\bar{o}_1,...,\bar{o}_{n-1}, \phi_i^*(t)) \in \argmax \limits_{o_i \in \{\phi_i^*(s)\}_{s \in T}, \forall i} V((o_1,...,o_n),t)$, i.e., for any $t$, 
    \begin{equation*}
             (\phi_1^*(t),...,\phi_n^*(t)) \in \argmax \limits_{o_i \in \{\phi_i^*(s)\}_{s \in T}, \forall i} V((o_1,...,o_n),t),
    \end{equation*}
    so that (\ref{2+}) naturally holds. 
\end{proof}

\section{Proofs for Characterizing PBE}

\subsection{Proof of Proposition \ref{prop:p3induced}}
\begin{proof}

\textbf{``only if'': } If $(\mathscr{M}_1,...,\mathscr{M}_n, \sigma_A)$ is a \eqref{P3}-induced PBE, let it be induced be $(\phi_1^*,...,\phi_n^*)$. In that case, for any $i \in \mathcal{N}, o_i \in \mathscr{M}_i$, there exists $t^{o_i} \in T$ such that $\phi_i^*(t^{o_i})=o_i$. 

For any $i, \phi_i^*$ satisfies the IC constraints in \eqref{P3} implies that for any $t',$
\[ v_i(\phi_i^*(t^{o_i})|\{\phi^*_{-i}(s)\}_{s \in T}) \geq v_i(\phi_i^*(t')|\prod_{j \neq i} \{\phi_j^*(s)\}_{s \in T}), \]
which further implies that
\[ o_i \in \argmax \limits_{o \in \mathscr{M}_1 \times ... \times \mathscr{M}_n} V(o,t^{o_i})\big|_{O_i}.\]
With \eqref{eq:noindifference}, this directly implies that 
\[ \{o_i\}= \sigma_A(\mathscr{M}_1,...,\mathscr{M}_n,t^{o_i})\big|_{O_i}.\]

    \textbf{``if'': }For any $i$, consider $\phi_i^*: T \rightarrow O_i$ such that for any $t$, $\phi_i^*(t):=\sigma_A(\mathscr{M}_1,...,\mathscr{M}_n,t)\big|_{O_i}$, which is well-defined since $\sigma_A$ is a pure strategy. Clearly, for any $i, \{\phi_i^*(s)\}_{s \in T} \subseteq \mathscr{M}_i$. Meanwhile, for any $o_i \in \mathscr{M}_i$, the existence of $t^{o_i}$ such that $o_i \in \sigma_A(\mathscr{M}_1,...,\mathscr{M}_n,t^{o_i})\big|_{O_i}$ and the fact that $\sigma_A$ is a pure-strategy together imply that $o_i=\phi_i^*(t^{o_i})$, so that $\mathscr{M}_i \subseteq \{\phi_i^*(s)\}_{s \in T}$, which further implies that $\{\phi_i^*(s)\}_{s \in T} =\mathscr{M}_i$. 

    Claim that for $\{\phi_i^*\}_{i \in \mathcal{N}}$ defined above, for any $i, \phi_i^*$ solves \eqref{P3} and \eqref{2+} is satisfied. The latter is not hard to see as for any $t$,
    \[ (\phi_1^*(t),...,\phi_n^*(t))=\sigma_A(\mathscr{M}_1,...,\mathscr{M}_n,t).\]

    Regarding the former, suppose there exists $i$ such that $\phi_i^*$ does not solve \eqref{P3}, then there exists $\phi_i': T \rightarrow O_i$ such that $\mathbb{E}_t[u_i(\phi_i'(t),t)]>\mathbb{E}_t[u_i(\phi_i^*(t),t)]$, and $\phi_i'$ satisfies the IC constraint in \eqref{P3}. Consider $\mathscr{M}_i'=\{\phi_i'(s)\}_{s \in T}$, then 
\begin{equation*}\label{eq:deviation-chain}
\begin{aligned}
\mathbb{E}_t\!\Big[u_i\big(\sigma_A(\mathscr{M}_1,\ldots,\mathscr{M}_i',\ldots,\mathscr{M}_n,t),t\big)\Big]
&=
\mathbb{E}_t\!\Big[u_i\big(\sigma_A(\mathscr{M}_1,\ldots,\{\phi_i'(t)\},\ldots,\mathscr{M}_n,t),t\big)\Big] \\
&=
\mathbb{E}_t\!\Big[u_i\big(\phi_i'(t),t\big)\Big]
>
\mathbb{E}_t\!\Big[u_i\big(\phi_i^*(t),t\big)\Big] \\
&=
\mathbb{E}_t\!\Big[u_i\big(\sigma_A(\mathscr{M}_1,\ldots,\mathscr{M}_n,t),t\big)\Big],
\end{aligned}
\end{equation*}
where the first ``$=$'' is because for any $t$, $\phi_i'(t) \in \argmax \limits_{o \in \mathscr{M}_1 \times ... \times \mathscr{M}_i' \times ... \times \mathscr{M}_n} V(o,t) \big|_{O_i}$ implies that, under \eqref{eq:noindifference}, for any $t$,
\[ \{\phi_i'(t)\} \in \argmax \limits_{o \in \mathscr{M}_1 \times ... \times \mathscr{M}_i' \times ... \times \mathscr{M}_n} V(o,t) \big|_{O_i}, \]
which further implies that for any $t, \sigma_A(\mathscr{M}_1,...,\mathscr{M}_i',...,\mathscr{M}_n,t)|_{O_i}=\{\phi_i'(t)\}$. Meanwhile, the last ``$=$'' is because for any $i$, $\{\phi_i^*(s)\}_{s \in T}=\mathscr{M}_i$. In that case, $\mathscr{M}_i'$ is a profitable deviation for principal $i$, a contradiction to $(\mathscr{M}_1,...,\mathscr{M}_n,\sigma_A)$ is a PBE. Therefore, for any $i, \phi_i^*$ solves \eqref{P3}, so that $(\mathscr{M}_1,...,\mathscr{M}_n,\sigma_A)$ is a \eqref{P3}-induced PBE.

\end{proof}

\subsection{Proof of Proposition \ref{prop:PO}}
I first show the proof of the first statement. 
\begin{proof}
    Let $T:=\{\mathfrak{t}\}$. Suppose there exists a PBE $(\mathscr{M}_1,...,\mathscr{M}_n,\sigma_A)$ such that it is Pareto optimal among all \eqref{P3}-induced PBE but is not Pareto optimal. By Definition \ref{def:paretooptimal}, there exists PBE $(\mathscr{M}_1',...,\mathscr{M}_n',\sigma_A')$ such that for any $i \in \mathcal{N}$,
    \[u_i(\sigma_A'(\mathscr{M}_1',...,\mathscr{M}_n',\mathfrak{t}), \mathfrak{t}) \geq u_i(\sigma_A(\mathscr{M}_1,...,\mathscr{M}_n,\mathfrak{t}), \mathfrak{t}),\]
    and there exists $j \in \mathcal{N}$ such that the inequality holds strictly. 
    
    For any $i$, let $\phi_i^*(\mathfrak{t})$ satisfy
    \[ \phi_i^*(\mathfrak{t}) \in \argmax \limits_{o_i \in \supp \sigma_A'(\mathscr{M}_1',...,\mathscr{M}_n',\mathfrak{t})\big|_{O_i}} u_i(o_i, \mathfrak{t}).\footnote{It can be further observed that for any $\tilde{o}_i \in \supp \sigma_A'(\mathscr{M}_1',...,\mathscr{M}_n',\mathfrak{t}), \tilde{o}_i \in \argmax \limits_{o_i \in \supp \sigma_A'(\mathscr{M}_1',...,\mathscr{M}_n',\mathfrak{t})\big|_{O_i}} u_i(o_i, \mathfrak{t})$.}\]
    Claim that $(\{\phi_1^*(\mathfrak{t})\},...,\{\phi_n^*(\mathfrak{t})\})$ is a \eqref{P3}-induced menu profile, i.e., for any $i, \phi_i^*$ solves \eqref{P3}, and \eqref{2+} is satisfied. The IC constraint is trivially satisfied as $\card(T)=1$. \eqref{2+} is also trivially satisfied as each principal plays a singleton menu. Suppose there exists $i$ and $\phi_i'$ such that
    \[ u_i(\phi_i'(\mathfrak{t}), \mathfrak{t}) > u_i(\phi_i^*(\mathfrak{t}), \mathfrak{t}),\]
    then $(\mathscr{M}_1',...,\mathscr{M}_n',\sigma_A')$ cannot be a PBE, as principal $i$ has a profitable deviation $\{\phi_i'(\mathfrak{t})\}$:
    \[ u_i(\sigma_A'(\mathscr{M}_1',...,\{\phi_i'(\mathfrak{t})\},...,\mathscr{M}_n',\mathfrak{t}),\mathfrak{t})=u_i(\phi_i'(\mathfrak{t}),\mathfrak{t})>u_i(\phi_i^*(\mathfrak{t}), \mathfrak{t}) \geq u_i(\sigma_A'(\mathscr{M}_1',...,\mathscr{M}_n',\mathfrak{t}), \mathfrak{t}),\]
    which is a contradiction. Hence, $(\{\phi_1^*(\mathfrak{t})\},...,\{\phi_n^*(\mathfrak{t})\})$ is a \eqref{P3}-induced menu profile. However, for any $\tilde{\sigma_A}$ that makes $(\{\phi_1^*(\mathfrak{t})\},...,\{\phi_n^*(\mathfrak{t})\},\tilde{\sigma_A})$ a PBE, for any $i$,
     \[u_i(\tilde{\sigma_A}(\{\phi_1^*(\mathfrak{t})\},...,\{\phi_n^*(\mathfrak{t})\},\mathfrak{t}), \mathfrak{t}) \geq u_i(\sigma_A'(\mathscr{M}_1',...,\mathscr{M}_n',\mathfrak{t}), \mathfrak{t}) \geq u_i(\sigma_A(\mathscr{M}_1,...,\mathscr{M}_n,\mathfrak{t}), \mathfrak{t}),\]
     and there exists $j \in \mathcal{N}$ such that the inequality holds strictly, which is a contradiction to $(\mathscr{M}_1,...,\mathscr{M}_n,\sigma_A)$ is Pareto optimal among all \eqref{P3}-induced PBE. Therefore, the proposition holds. 
\end{proof}

I proceed to show the proof of the second statement. 

\begin{proof}
It is not hard to observe that in any PBE, it has to be the case that for any $t$
\[ \supp \sigma_A(\cdot,\cdot,t) \bigg|_{O_i}=\{o_i^*\},\]
otherwise $\{o_i^*\}$ is a profitable deviation for principal $i$. With this proposition, I focus on PBE where principal $i$ plays $\{o_i^*\}$. 

Suppose there exists a type distribution under which there exists a PBE $(\{o_i^*\}, \mathscr{M}_{-i}, \sigma_A)$ that is Pareto optimal among all \eqref{P3}-induced PBE, but is Pareto dominated by another, in particular, Pareto optimal PBE $(\{o_i^*\}, \mathscr{M}_{-i}',\sigma_A')$, then it has to be the case that 
\[ \mathbb{E}_t[u_{-i}(\sigma_A'(\{o_{i}^*\}, \mathscr{M}_{-i}',t),t)] > \mathbb{E}_t[u_{-i}(\sigma_A(\{o_{i}^*\}, \mathscr{M}_{-i},t),t)].\]

Consider $\phi_{-i}':T \rightarrow O_{-i}$ such that for any $t$,
\[\phi_{-i}'(t) \in \argmax \limits_{o_{-i} \in \supp \sigma_A'(\{o_i^*\}, \mathscr{M}_{-i}',t)\big|_{O_{-i}}} u_{-i}(o_{-i},t).\]

Since for any $t, \phi_{-i}'(t) \in \supp \sigma_A'(\{o_i^*\}, \mathscr{M}_{-i}',t),$
\[ V(o_i^*, \phi_{-i}'(t),t) \geq V(o_i^*, \phi_{-i}'(t'),t), \forall t,t'. \]

Suppose $\phi_{-i}'$ does not solve \eqref{P3}, then let $\phi_{-i}''$ be a solution of \eqref{P3}. It satisfies
\[ V(o_i^*, \phi_{-i}''(t),t) \geq V(o_i^*, \phi_{-i}''(t'),t), \forall t,t', \]
and 
\[ \mathbb{E}_t[u_{-i}(\phi_{-i}''(t),t)] > \mathbb{E}_t[u_{-i}(\phi_{-i}'(t),t)].\]
Consider $\sigma_A''$ such that for any $t$,
\[ \sigma_A''(\{o_i^*\}, \{\phi_{-i}''(t)\}_{t \in T}, t)=(o_i^*, \phi_{-i}''(t)),\]
and $\sigma_A''$ satisfies condition \ref{cd1} in Definition \ref{d0} on other menu profiles. By Proposition \ref{p1}, $(\{o_i^*\}, \{\phi_{-i}''(t)\}_{t \in T}, \sigma_A'')$ is a PBE, with 
\[
\begin{aligned}
\mathbb{E}_t\!\Big[u_{-i}\big(\sigma_A''(\{o_i^*\}, \{\phi_{-i}''(t)\}_{t\in T}, t),\, t\big)\Big]
&= \mathbb{E}_t\!\Big[u_{-i}\big(\phi_{-i}''(t),\, t\big)\Big] \\
&> \mathbb{E}_t\!\Big[u_{-i}\big(\phi_{-i}'(t),\, t\big)\Big] \\
&\ge \mathbb{E}_t\!\Big[u_{-i}\big(\sigma_A'(\{o_i^*\}, \mathscr{M}_{-i}', t),\, t\big)\Big],
\end{aligned}
\]
a contradiction to $(\{o_i^*\}, \mathscr{M}_{-i}',\sigma_A')$ is Pareto optimal. Hence, $\phi_{-i}'$ solves \eqref{P3}. Consider $\sigma_A'''$ such that for any $t$,
\[ \sigma_A'''(\{o_i^*\}, \{\phi_{-i}'(t)\}_{t \in T}, t)=(o_i^*, \phi_{-i}'(t)),\]
and $\sigma_A'''$ satisfies condition \ref{cd1} in Definition \ref{d0} on other menu profiles. In that case, $(\{o_i^*\}, \{\phi_{-i}'(t)\}_{t \in T}, \sigma_A''')$ is a \eqref{P3}-induced PBE,
and 
\[
\begin{aligned}
\mathbb{E}_t\!\Big[u_{-i}\big(\sigma_A'''(\{o_i^*\}, \{\phi_{-i}'(t)\}_{t\in T}, t),\, t\big)\Big]
&= \mathbb{E}_t\!\Big[u_{-i}\big(\phi_{-i}'(t),\, t\big)\Big] \\
&\ge \mathbb{E}_t\!\Big[u_{-i}\big(\sigma_A'(\{o_i^*\}, \mathscr{M}_{-i}', t),\, t\big)\Big] \\
&> \mathbb{E}_t\!\Big[u_{-i}\big(\sigma_A(\{o_i^*\}, \mathscr{M}_{-i}, t),\, t\big)\Big],
\end{aligned}
\]
a contradiction to $(\{o_i^*\}, \mathscr{M}_{-i}, \sigma_A)$ is Pareto optimal among all \eqref{P3}-induced PBE. Therefore, $(\{o_i^*\}, \mathscr{M}_{-i}, \sigma_A)$ is not Pareto dominated.

If \eqref{eq:noindifference} is further satisfied, then for any \eqref{P3}-induced PBE, let it be induced by $(\phi_i \equiv o_i^*, \phi_{-i})$. Suppose it is still Pareto dominated by $(\{o_i^*\}, \mathscr{M}_{-i}', \sigma_A')$. Following the notations above, 
\[
\begin{aligned}
\mathbb{E}_t\!\Big[u_{-i}\big(\phi_{-i}'(t),\, t\big)\Big]
&> \mathbb{E}_t\!\Big[u_{-i}\big(\sigma_A(\{o_i^*\}, \mathscr{M}_{-i}, t),\, t\big)\Big] \\
&= \mathbb{E}_t\!\Big[u_{-i}\big(\sigma_A(\{o_i^*\}, \{\phi_{-i}(t)\}, t),\, t\big)\Big] \\
&= \mathbb{E}_t\!\Big[u_{-i}\big(\phi_{-i}(t),\, t\big)\Big],
\end{aligned}
\]
where the second ``$=$'' is by \eqref{eq:noindifference} and the fact that $\phi_{-i}'(t) \in \argmax \limits_{o_{-i} \in \supp \sigma_A'(\{o_i^*\}, \mathscr{M}_{-i}',t)|_{O_{-i}}}$. This is a contradiction to $\phi_{-i}$ solves \eqref{P3}. Therefore, any \eqref{P3}-induced PBE is not Pareto dominated.  
\end{proof}

\section{Discussions and Proofs for Delegation}

\subsection{A Type-Independent-Peaked Principal}
The Lagrangian for the relaxed problem is
\[
\mathcal{L}(o_{-i}, \Lambda) :=
\int_{\underline{t}}^{\bar{t}}
\Bigl(-r_{-i}(t)(o_{-i}(t)-o^*_{-i}(t))^2 f(t)
- \kappa f(t)\, \Gamma(t)\Bigr)\mathbf{d}t
+ \int_{\underline{t}}^{\bar{t}} \Gamma(t)\,\mathbf{d}\Lambda(t),
\]
where $\Gamma(t) := (t-o_i^*)o_{-i}(t) - \frac{o_{-i}(t)^2}{2}
+ \frac{o_{-i}(\underline{t})^2}{2} - (\underline{t}-o_i^*)o_{-i}(\underline{t})
- \int_{\underline{t}}^t o_{-i}(s)\,\mathbf{d}s$ is the envelope
slack, and $\Lambda$ is a right-continuous, increasing function
satisfying $\Lambda(\bar{t}) - \Lambda(\underline{t})
= \kappa(F(\bar{t}) - F(\underline{t}))$.
Integration by parts yields
\begin{equation*}
\begin{aligned}
\mathcal{L}(o_{-i}, \Lambda)
={}&
\int_{\underline{t}}^{\bar{t}} \Bigl(
-r_{-i}(t)(o_{-i}(t)-o^*_{-i}(t))^2 f(t)
- o_{-i}(t)\,[\kappa F(t)-\Lambda(t)]
- \kappa f(t)\Bigl[(t-o_i^*)o_{-i}(t)
  -\tfrac{o_{-i}(t)^2}{2}\Bigr]
\Bigr)\mathbf{d}t \\
&+ \int_{\underline{t}}^{\bar{t}}
\Bigl((t-o_i^*)o_{-i}(t)-\tfrac{o_{-i}(t)^2}{2}\Bigr)
\mathbf{d}\Lambda(t)
+ \Bigl(\int_{\underline{t}}^{\bar{t}} o_{-i}(t)\,\mathbf{d}t\Bigr)
[\kappa F(\bar{t})-\Lambda(\bar{t})].
\end{aligned}
\end{equation*}
The pointwise second derivative of the integrand with respect to
$o_{-i}(t)$ is $f(t)[\kappa - 2r_{-i}(t)] < 0$, so $\mathcal{L}$
is concave in $o_{-i}$. By the sufficient conditions in
\cite{KartikKleinerVanWeelden2021Delegation}, $o_{-i}$ solves the
relaxed problem if $\partial \mathcal{L}(o_{-i},
\bar{o}_{-i} - o_{-i}, \Lambda) \leq 0$ for all
$\bar{o}_{-i}: T \to O_{-i}$, where the Gateaux differential, conditional on $\kappa F(\bar{t})-\Lambda(\bar{t})=0$, is
\begin{equation*}
\begin{split}
\partial \mathcal{L}(o_{-i}, \bar{o}_{-i}, \Lambda)
={}&
\int_{\underline{t}}^{\bar{t}}
\Bigl[-2r_{-i}(t)(o_{-i}(t)-o^*_{-i}(t))f(t)
- \kappa F(t)+\Lambda(t)\Bigr]
\bar{o}_{-i}(t)\,\mathbf{d}t \\
&+ \int_{\underline{t}}^{\bar{t}}
[t-o_i^*-o_{-i}(t)]\,
\bar{o}_{-i}(t)\,
\mathbf{d}[\Lambda(t)-\kappa F(t)].
\end{split}
\end{equation*}

 I proceed to demonstrate the proof of Proposition \ref{prop:singlepeaked}.

\begin{proof}
Regarding the Full Delegation result, consider \[
\Lambda(t)=
\begin{cases}
\kappa F(t) + 2r_{-i}(t)\bigl(t - o_i^* - o_{-i}^*(t)\bigr) f(t), & t \in [\underline{t}, \bar{t}) \\
\kappa F(t), & t = \bar{t}.
\end{cases}
\] By the conditions specified in Proposition \ref{prop:singlepeaked}, $\Lambda$ is right-continuous and increasing on $[\underline{t}, \bar{t}]$. In addition, $\Lambda(\bar{t})-\kappa F(\bar{t})=\Lambda(\underline{t})-\kappa F(\underline{t})=0$. Consider $o_{-i}(t)=t-o_i^*, \forall t$, then under the $\Lambda(t)$ constructed above, for any $\bar{o}_{-i}, \partial \mathcal{L}(o_{-i}, \bar{o}_{-i}-o_{-i}, \Lambda) \leq 0$. Hence $o_{-i}(t)=t-o_i^*, \forall t$ is the solution to the relaxed problem. As for the envelope condition, 
\[ \dfrac{(t-o_i^*)^2}{2}-\dfrac{(\underline{t}-o_i^*)^2}{2}=\int_{\underline{t}}^t s-o_i^* \mathbf{d}s.\]
Therefore, $o_{-i}(t)=t-o_i^*, \forall t$ is a solution to the original problem. By Corollary \ref{c2}, $(\{o_i^*\}, \{s-o_i^*\}_{s \in T})$ is a \eqref{P3}-induced menu profile.

Regarding the No Compromise result, for any $t \in [\underline{t}, \bar{t}]$, let $\Lambda(t)=\kappa F(t)$, which is naturally right-continuous and increasing. Consider $o_{-i}(t)=o_{-i}^*(t), \forall t$, then under the constructed $\Lambda(t)$, for any $\bar{o}_{-i}, \partial \mathcal{L}(o_{-i}, \bar{o}_{-i}-o_{-i}, \Lambda) \leq 0$. Hence $o_{-i}(t)=o_{-i}^*(t), \forall t$ is the solution to the relaxed problem. The envelope condition is satisfied by the condition specified in Proposition \ref{prop:singlepeaked}. Therefore, $o_{-i}(t)=o^*_{-i}(t), \forall t$ is a solution to the original problem. Again by Corollary \ref{c2}, $(\{o_i^*\}, \{o^*_{-i}(s)\}_{s \in T})$ is a \eqref{P3}-induced menu profile.
\end{proof}

\subsection{The General Case}
\label{appendix:finitemenus}

I first prove Proposition \ref{prop:delegationeasy} without the use of the Lagrangian approach.

\begin{proof}
It is straightforward that for any $i, \phi_i^*$ solves \eqref{P3}. In particular, $\phi_i^*$ solves the unconstrained \eqref{P3}, so that it suffices to check that $\phi_i^*$ is feasible. Since for any $t$,

  \[ (\phi_i^*(t), \phi_{-i}^*(t)) \in \argmax \limits_{o \in O_1 \times O_2} V(o,t), \]

  both feasibility and \eqref{2+} directly follow. Hence, $(\{o_i^*(s)\}_{s \in T}, \{o^*_{-i}(s)\}_{s \in T})$ is a \eqref{P3}-induced menu profile. 

\end{proof}

Given $\{\phi_i^*(s)\}_{s \in T}$, let $S_i := \{\phi_i^*(s)\}_{s \in T}$ be a compact
subset of $\mathbb{R}$. For the quadratic utility $V = -(t - o_i - o_{-i})^2$,
the agent's optimal selection from $S_i$ satisfies
\[
\tilde{o}_i(t, t' \mid o_{-i})
= \Pi_{S_i}\bigl(t - o_{-i}(t')\bigr),
\]
where $\Pi_{S_i}: \mathbb{R} \to S_i$ denotes the nearest-point projection onto $S_i$,
defined by
\[
\Pi_{S_i}(z) := \argmin_{s \in S_i} |s - z|.
\]
Since $S_i$ is a compact subset of $\mathbb{R}$, the minimum is attained for every
$z \in \mathbb{R}$. The minimizer is unique except when $z$ is equidistant from
two distinct points of $S_i$; in $\mathbb{R}$, this occurs only at midpoints of
gaps in $S_i$, which form an at most countable set. At such points, I adopt the
convention that $\Pi_{S_i}$ selects the smaller element.\footnote{The choice of
tie-breaking convention is immaterial for the analysis, as the set of
non-uniqueness points has measure zero and does not affect any integral.}

The projection $\Pi_{S_i}$ is monotone non-decreasing. By Lebesgue's
theorem, $\Pi_{S_i}$ is therefore differentiable almost everywhere, with
$\Pi_{S_i}'(z) \in \{0, 1\}$ for a.e.\ $z$:
\begin{itemize}
    \item if $z \in \mathrm{int}(S_i)$, then $\Pi_{S_i}(z) = z$ in a neighborhood of
    $z$, so $\Pi_{S_i}'(z) = 1$;
    \item if $z \notin S_i$, then $\Pi_{S_i}(z)$ is constant in a neighborhood of $z$
    (equal to the nearest point of $S_i$), so $\Pi_{S_i}'(z) = 0$.\footnote{In particular, when $S_i$ is a finite set, $\mathrm{int}(S_i) = \emptyset$, so $\Pi_{S_i}' = 0$ a.e. When $S_i$ is an interval $[\underline{s}, \bar{s}]$,
this recovers the projection
$\Pi_{S_i}(z) = \min\{\bar{s}, \max\{\underline{s}, z\}\}$, and the three regions
$\{z < \underline{s}\}$, $\{z \in (\underline{s}, \bar{s})\}$,
$\{z > \bar{s}\}$ correspond to $\Pi_{S_i}' = 0, 1, 0$ respectively.}
\end{itemize}

For each fixed $t$, the map $y \mapsto \Pi_{S_i}(t - y)$ is monotone
non-increasing, so by Lebesgue's theorem it is differentiable at a.e.\ $y$.
In particular, for a.e.\ $t$, the directional derivative
\[
\delta \tilde{o}_i(t,t)
:= \lim_{\varepsilon \to 0}
\frac{\tilde{o}_i(t, t \mid o_{-i} + \varepsilon h)
      - \tilde{o}_i(t, t \mid o_{-i})}{\varepsilon}
= -\Pi_{S_i}'(t - o_{-i}(t)) \cdot h(t)
\]
exists.

Let
\[
\Xi(t) := \Psi_t - \Psi_{\underline{t}}
- \int_{\underline{t}}^t
\bigl(\tilde{o}_i(s, s \mid o_{-i}) + o_{-i}(s)\bigr)\, \mathbf{d}s,
\qquad
\Psi_t := t\bigl(\tilde{o}_i(t,t \mid o_{-i}) + o_{-i}(t)\bigr)
- \frac{\bigl(\tilde{o}_i(t,t \mid o_{-i}) + o_{-i}(t)\bigr)^2}{2}.
\]
The relaxed problem penalizes violations of $\Xi(t) \geq 0$ in the objective:
\[
\max_{o_{-i}: T \to O_{-i}} \;
\int_{\underline{t}}^{\bar{t}}
\Bigl(
-r_{-i}(t)\bigl(o_{-i}(t) - o_{-i}^*(t)\bigr)^2
- \kappa\, \Xi(t)
\Bigr)\, \mathbf{d}F(t),
\qquad \text{s.t.} \quad \Xi(t) \geq 0, \; \forall t,
\]
where $\kappa := \frac{\min_t r_{-i}(t)}{2}$. The choice of $\kappa$ ensures
that the objective is a concave functional of $o_{-i}$: the second-order
contribution from perturbing $o_{-i}(t)$ is proportional to
$2\kappa(1 - \Pi_{S_i}'(t - o_{-i}(t))) - 2r_{-i}(t) < 0$.

The Lagrangian is
$\mathcal{L}(o_{-i}, \Lambda)
:= \int_{\underline{t}}^{\bar{t}}
\bigl(-r_{-i}(t)(o_{-i}(t) - o_{-i}^*(t))^2 f(t)
- \kappa f(t)\, \Xi(t)\bigr)\, \mathbf{d}t
+ \int_{\underline{t}}^{\bar{t}} \Xi(t)\, \mathbf{d}\Lambda(t)$,
where $\Lambda$ is a right-continuous, increasing function with
$\Lambda(\bar{t}) - \Lambda(\underline{t})
= \kappa(F(\bar{t}) - F(\underline{t}))$.
Integration by parts and the Gateaux differential in the direction
$\bar{o}_{-i}$ yield
\begin{equation*}
\begin{aligned}
\partial \mathcal{L}(o_{-i}, \bar{o}_{-i}, \Lambda)
= {} &
\int_{\underline{t}}^{\bar{t}}
\Bigl[
-2 r_{-i}(t)\bigl(o_{-i}(t) - o_{-i}^*(t)\bigr) f(t)
- (1 - \Pi_{S_i}')\bigl(\kappa F(t) - \Lambda(t)\bigr)
\Bigr]\,
\bar{o}_{-i}(t)\, \mathbf{d}t
\\[0.5em]
&\quad
+ \int_{\underline{t}}^{\bar{t}}
(1 - \Pi_{S_i}')
\Bigl(
t - o_{-i}(t) - \tilde{o}_i(t,t \mid o_{-i})
\Bigr)\,
\bar{o}_{-i}(t)\,
\mathbf{d}\bigl[\Lambda(t) - \kappa F(t)\bigr]
\\[0.5em]
&\quad
- \bigl(\Lambda(\bar{t}) - \kappa F(\bar{t})\bigr)
\int_{\underline{t}}^{\bar{t}}
(1 - \Pi_{S_i}')\, \bar{o}_{-i}(t)\, \mathbf{d}t.
\end{aligned}
\end{equation*}
where $\Pi_{S_i}' := \Pi_{S_i}'(t - o_{-i}(t))$.
By the sufficient conditions in \cite{KartikKleinerVanWeelden2021Delegation},
if $\partial \mathcal{L}(o_{-i}, \bar{o}_{-i} - o_{-i}, \Lambda) \leq 0$
for all $\bar{o}_{-i}: T \to O_{-i}$, then $o_{-i}$ solves the relaxed problem.

I proceed to show the proof of Proposition \ref{prop:delegationhard}. 

\begin{proof}

Regarding principal $2$'s problem, let
\[
o_2(t)=t-o_1^x,\qquad \forall t\in [t^{x-1},t^x),\; x\in\{1,\dots,K+1\}.
\]
Consider
\[
\Lambda(t)=
\begin{cases}
\kappa F(t)+2r_2(t)\bigl(t-o_1^x-o_2^*(t)\bigr)f(t),
& t\in [t^{x-1},t^x),\ x\in\{1,\dots,K+1\},\\[0.4em]
\kappa F(t), & t=\bar t.
\end{cases}
\]
By the first condition in the proposition, $\Lambda$ is right-continuous and increasing on each
interval $[t^{x-1},t^x)$. By the second condition, $\Lambda$ has no downward jump at any cutoff
$t^x$, $x\in\{1,\dots,K\}$. Hence $\Lambda$ is right-continuous and increasing on $[\underline t,\bar t]$.
Moreover, by the third condition and the construction, $\Lambda(\underline t)-\kappa F(\underline t)=\Lambda(\bar t)-\kappa F(\bar t)=0.$

Now let $S_1 = \{o_1^1, \dots, o_1^{K+1}\}$. Since $S_1$ is a finite set, its derivative is $\Pi_{S_1}' = 0$ almost everywhere. For any type $t \in [t^{x-1}, t^x)$, we can evaluate:
\[ 
\tilde o_1(t,t\mid o_2) = \Pi_{S_1}\bigl(t-o_2(t)\bigr) = \Pi_{S_1}(o_1^x) = o_1^x. 
\]
Therefore, the combined allocation simplifies to $t - o_2(t) - \tilde o_1(t,t\mid o_2) = 0$ for all $t \in [\underline t, \bar t)$.

When substituting $o_2$ into the Gateaux differential, the second integral vanishes because $t - o_2(t) - \tilde o_1(t,t\mid o_2) = 0$ almost everywhere, and the boundary term vanishes because $\Lambda(\bar t) - \kappa F(\bar t) = 0$. For the first integral, recalling that $\Pi_{S_1}' = 0$ a.e., its coefficient becomes:
\[ 
-2r_2(t)\bigl(o_2(t) - o_2^*(t)\bigr)f(t) - \bigl(\kappa F(t) - \Lambda(t)\bigr). 
\]
However, on the interval $[t^{x-1}, t^x)$, we know that $o_2(t) - o_2^*(t) = t - o_1^x - o_2^*(t)$. By the definition of our multiplier $\Lambda$, this implies the coefficient is exactly zero. Hence, for any alternative strategy $\bar o_2: T \to O_2$, we have $\partial \mathcal{L}(o_2, \bar o_2 - o_2, \Lambda) \le 0$. By the sufficient conditions in \cite{KartikKleinerVanWeelden2021Delegation}, this confirms $o_2$ solves the relaxed problem.

Turning to the envelope condition, for every $t \in [\underline t, \bar t]$, we have $\tilde o_1(t,t\mid o_2) + o_2(t) = t$ whenever $t \in [t^{x-1}, t^x)$. We can then calculate the utility parameter:
\[ 
\Psi_t = t\bigl(\tilde o_1(t,t\mid o_2) + o_2(t)\bigr) - \frac{\bigl(\tilde o_1(t,t\mid o_2) + o_2(t)\bigr)^2}{2} = \frac{t^2}{2}. 
\]
Consequently, the utility difference from the lowest type unfolds as:
\[ 
\Psi_t - \Psi_{\underline t} = \frac{t^2 - \underline t^{\,2}}{2} = \int_{\underline t}^t s \,ds = \int_{\underline t}^t \bigl(\tilde o_1(s,s\mid o_2) + o_2(s)\bigr) \,ds. 
\]
It immediately follows that the constraint function $\Xi(t) = \Psi_t - \Psi_{\underline t} - \int_{\underline t}^t \bigl(\tilde o_1(s,s\mid o_2) + o_2(s)\bigr) \,ds = 0$ for all $t$. Hence $o_2$ also solves the original problem.

As for principal 1, by construction she offers her pointwise ideal outcome $o_1(t) = o_1^x = o_1^*(t)$ for all $t \in [t^{x-1}, t^x)$. She therefore attains her maximal payoff at every type, successfully solving her decomposition program.

Finally, for every $t \in [t^{x-1}, t^x)$, the total allocation $o_1(t) + o_2(t) = o_1^x + (t - o_1^x) = t$ gives the agent her exact bliss point. Therefore, under the corresponding optimal agent strategy, the menu profile
\[ 
\left( \{o_1^1, \dots, o_1^{K+1}\}, \bigcup_{x=1}^{K+1} \{s - o_1^x\}_{s \in [t^{x-1}, t^x)} \right) 
\]
constitutes a \eqref{P3}-induced PBE menu profile.

\end{proof}

\section{Proof for Bundling: Proposition \ref{t1}}

\begin{proof}

I first show the ``if'' direction, i.e., any $(\{(b_1,p_1),...,(b_l,p_l)\}, \{(b_1',p_1'),...,(b_m',p_m')\})$ that satisfies \eqref{eq:marketsplitting} is a \eqref{P4'}-induced menu profile. Let $\mathscr{M}_2:=\{(b_1',p_1'),...,(b_m',p_m')\}$, then the \eqref{P4'} for principal 1 to solve for is
\begin{equation*}
\begin{aligned}
\max_{\phi_1: T \to O_1 \cup \{quit\}} \quad
& \mathbb{E}_t\!\left[
v_1(\phi_1(t),t \mid \mathscr{M}_2)
-\frac{1-F(t)}{f(t)}(v_1)_t(\phi_1(t),t \mid \mathscr{M}_2)
\right] \\
\text{s.t.} \quad
& v_1(\phi_1(t),t \mid \mathscr{M}_2)
\ge v_1(\phi_1(t'),t \mid \mathscr{M}_2),
\qquad \forall\, t,t', \\
& v_1(\phi_1(t),t \mid \mathscr{M}_2) \ge 0,
\qquad \forall\, t.
\end{aligned}
\end{equation*}

For any $t$, let $\phi_1(t):=(b^1(t),p^1(t)) \in \mathcal{B} \times [0, \bar{p}]$ if $\phi_1(t) \neq quit$.\footnote{If $\phi_1(t)=quit$, then $b^1(t)=p^1(t)=quit$, and \[\max \limits_{(b',p') \in \mathscr{M}_2}U((quit,b'),t)
-\frac{1-F(t)}{f(t)}\frac{\partial \max \limits_{(b',p') \in \mathscr{M}_2}U((quit,b'),t)}{\partial t}-\dfrac{1}{2}\max_{(b,b')} U\!\bigl((b,b'),t_*\bigr)=\max \limits_{(b',p') \in \mathscr{M}_2}U((quit,b'),t)-quit-\dfrac{1}{2}\max_{(b,b')} U\!\bigl((b,b'),t_*\bigr):=0.\]} The problem can be further written out as
\begin{equation*}
\begin{aligned}
\max_{\phi_1: T \to O_1 \cup \{quit\}} \quad
& \mathbb{E}_t\!\left[
\max \limits_{(b',p') \in \mathscr{M}_2}U((b^1(t),b'),t)
-\frac{1-F(t)}{f(t)}\frac{\partial \max \limits_{(b',p') \in \mathscr{M}_2}U((b^1(t),b'),t)}{\partial t}-\dfrac{1}{2}\max_{(b,b')} U\!\bigl((b,b'),t_*\bigr)
\right] \\
\text{s.t.} \quad
& \max \limits_{(b',p') \in \mathscr{M}_2}U((b^1(t),b'),t)-p^1(t)-\dfrac{1}{2}\max_{(b,b')} U\!\bigl((b,b'),t_*\bigr)
\\ &\ge \max \limits_{(b',p') \in \mathscr{M}_2}U((b^1(t'),b'),t)-p^1(t')-\dfrac{1}{2}\max_{(b,b')} U\!\bigl((b,b'),t_*\bigr),
\qquad \forall\, t,t', \\
& \max \limits_{(b',p') \in \mathscr{M}_2}U((b^1(t),b'),t)-p^1(t)-\dfrac{1}{2}\max_{(b,b')} U\!\bigl((b,b'),t_*\bigr) \ge 0,
\qquad \forall\, t.
\end{aligned}
\end{equation*}

I proceed to construct a $\phi_1^*$ that solves the above problem. Clearly, there exists a surjection $s: [t_*, \bar{t}] \rightarrow \{1,...,l\}$. Consider the following $\phi_1^*$: 
\begin{equation}
\label{eq:phi1star}
\phi_1^*(t)=
\begin{cases}
quit, 
& \text{if } t \in [\underline{t},\, t_*), \\[6pt]
\bigl(b_{s(t)},\, \dfrac{1}{2}\max_{(b,b')} U\!\bigl((b,b'),\, t_*\bigr)\bigr),
& \text{if } t \in [t_*,\, \bar{t}].
\end{cases}
\end{equation}

I first show $\phi_1^*$ given by \eqref{eq:phi1star} is feasible. For the IR constraints, if $t \in [\underline{t}, t_*)$, then
\[\max \limits_{(b',p') \in \mathscr{M}_2}U((b^1(t),b'),t)-p^1(t)-\dfrac{1}{2}\max_{(b,b')} U\!\bigl((b,b'),t_*\bigr)=0,\]
if $t \in [t_*, \bar{t}]$, then by $U((b,b'),\cdot)$ is strictly increasing in $t$ and $\mathcal{B}_*^2$ is non-empty, 
\begin{align*}
\max_{(b',p') \in \mathscr{M}_2}
&\; U\bigl((b^1(t),b'),t\bigr) - p^1(t)
- \dfrac{1}{2}\max_{(b,b')} U\!\bigl((b,b'),t_*\bigr) \\
\ge\;&
\max_{(b,b')} U\!\bigl((b,b'),t_*\bigr)
- \dfrac{1}{2}\max_{(b,b')} U\!\bigl((b,b'),t_*\bigr)
- \dfrac{1}{2}\max_{(b,b')} U\!\bigl((b,b'),t_*\bigr) \\
=\;& 0 .
\end{align*}

As for the IC constraints, 
\begin{itemize}
    \item if $t,t' \in [\underline{t}, t_*)$, then $0 \geq 0$,
    \item if $t \in [\underline{t}, t_*), t' \in [t_*, \bar{t}]$, then $0 \geq \max_{(b,b')} U\!\bigl((b,b'),t\bigr)-\max_{(b,b')} U\!\bigl((b,b'),t_*\bigr),$
    \item if $t \in [t_*, \bar{t}], t' \in [\underline{t}, t_*)$, then $\max_{(b,b')} U\!\bigl((b,b'),t\bigr)-\max_{(b,b')} U\!\bigl((b,b'),t_*\bigr) \geq 0,$
    \item if $t,t' \in [t_*, \bar{t}]$, then $\max_{(b,b')} U\!\bigl((b,b'),t\bigr)-\max_{(b,b')} U\!\bigl((b,b'),t_*\bigr) \geq \max_{(b,b')} U\!\bigl((b,b'),t'\bigr)-\max_{(b,b')} U\!\bigl((b,b'),t_*\bigr)$. 
\end{itemize}

I next show $\phi_1^*$ given by \eqref{eq:phi1star} maximizes the objective. Note that for any $t$,

\begin{equation}
\label{eq:optimality}
\begin{split}
\max_{(b',p') \in \mathscr{M}_2}
&\; U\bigl((b^1(t),b'),t\bigr)
-\frac{1-F(t)}{f(t)}
\frac{\partial}{\partial t}
\max_{(b',p') \in \mathscr{M}_2}
U\bigl((b^1(t),b'),t\bigr)
-\dfrac{1}{2}\max_{(b,b')} U\!\bigl((b,b'),t_*\bigr) \\
\le\;&
\max_{(b,b')}
\Bigl[
U\bigl((b,b'),t\bigr)
-\frac{1-F(t)}{f(t)}\, U_t\bigl((b,b'),t\bigr)
\Bigr]
-\dfrac{1}{2}\max_{(b,b')} U\!\bigl((b,b'),t_*\bigr).
\end{split}
\end{equation}

By the definition of $t_*$ and $U((b,b'),t)-\dfrac{1-F(t)}{f(t)}U_t((b,b'),t)$ is strictly increasing in $t$, for any $t \in [\underline{t}, t_*)$,
\[ \max_{(b',p') \in \mathscr{M}_2}U\bigl((b^1(t),b'),t\bigr)
-\frac{1-F(t)}{f(t)}
\frac{\partial}{\partial t}
\max_{(b',p') \in \mathscr{M}_2}
U\bigl((b^1(t),b'),t\bigr)
-\dfrac{1}{2}\max_{(b,b')} U\!\bigl((b,b'),t_*\bigr) \le 0,\]
where the equality can be attained through $\phi_1^*(t)$ given by \eqref{eq:phi1star}, in particular, $\phi_1^*(t)=quit, \forall t \in [\underline{t}, t_*)$. Meanwhile, for any $t \in [t_*, \bar{t}]$, by \eqref{eq:marketsplitting}, the equality of \eqref{eq:optimality} can also be attained through $\phi_1^*$ given by \eqref{eq:phi1star}. 

Symmetrically, consider surjection $s': [t_*, \bar{t}] \rightarrow \{1,...,m\}$ and the following $\phi_2^*:$

\begin{equation*}
\label{eq:phi2star}
\phi_2^*(t)=
\begin{cases}
quit, 
& \text{if } t \in [\underline{t},\, t_*), \\[6pt]
\bigl(b'_{s'(t)},\, \dfrac{1}{2}\max_{(b,b')} U\!\bigl((b,b'),\, t_*\bigr)\bigr),
& \text{if } t \in [t_*,\, \bar{t}].
\end{cases}
\end{equation*}

It solves for principal 2's \eqref{P4'}. In addition, $\phi_1^*$ and $\phi_2^*$ constructed above satisfy \eqref{eq:4i}: for any $t \in [\underline{t}, t_*),$
\[  \max_{o_i \in \{\phi_i^{*}(s)\}_{s \in T} \setminus \{quit\},\ \forall i} 
    V((o_1,o_2),t)=  \max_{(b,b')} U\!\bigl((b,b'),\, t\bigr)\bigr)-\max_{(b,b')} U\!\bigl((b,b'),\, t_*\bigr)\bigr)<0,\]
    and for any $t \in [\underline{t},t_*), \phi_1^*(t)=\phi_2^*(t)=quit$. For any $t \in [t_*, \bar{t}],$
    \[  \max_{o_i \in \{\phi_i^{*}(s)\}_{s \in T} \setminus \{quit\},\ \forall i} 
    V((o_1,o_2),t)=  \max_{(b,b')} U\!\bigl((b,b'),\, t\bigr)\bigr)-\max_{(b,b')} U\!\bigl((b,b'),\, t_*\bigr)\bigr) \geq 0,\]
    and for any $t \in [t_*, \bar{t}], (\phi_1^{*}(t), \phi_2^{*}(t)) 
        \in 
        \argmax_{o_i \in \{\phi_i^{*}(s)\}_{s \in T} \setminus \{quit\},\ \forall i}
        V((o_1,o_2),t)$. 

        Hence, $(\{\phi_1^*(s)\}_{s \in T} \backslash \{quit\}, \{\phi_2^*(t)\}_{t \in T} \backslash \{quit\})=(\{(b_1,p_1),...,(b_l,p_l)\}, \{(b_1',p_1'),...,(b_m',p_m')\})$ is a \eqref{P4'}-induced menu profile.

Regarding the ``only if'' direction, let $(\{(b_1,p_1),...,(b_l,p_l)\}, \{(b_1',p_1'),...,(b_m',p_m')\})$ be a generic \eqref{P4'}-induced menu profile. Similar to the arguments above, let $\mathscr{M}_1:=\{(b_1,p_1),...,(b_l,p_l)\}$, then principal 2's \eqref{P4'} can be written out as:
\begin{equation*}
\begin{aligned}
\max_{\phi_2: T \to O_2 \cup \{quit\}} \quad
& \mathbb{E}_t\!\left[
\max \limits_{(b,p) \in \mathscr{M}_1}U((b,b^2(t)),t)-p
-\frac{1-F(t)}{f(t)}\frac{\partial \max \limits_{(b,p) \in \mathscr{M}_1}U((b,b^2(t)),t)}{\partial t}
\right] \\
\text{s.t.} \quad
& \max \limits_{(b,p) \in \mathscr{M}_1}U((b,b^2(t)),t)-p^2(t)-p
\\ &\ge \max \limits_{(b,p) \in \mathscr{M}_1}U((b,b^2(t)),t)-p^2(t')-p,
\qquad \forall\, t,t', \\
& \max \limits_{(b,p) \in \mathscr{M}_1}U((b,b^2(t)),t)-p^2(t)-p \ge 0,
\qquad \forall\, t.
\end{aligned}
\end{equation*}

It is not hard to see that $\phi_2(t)=quit, \forall t \leq \underaccent{\dot}{t}; \phi_2(t)=(b',p'), \forall p > \underaccent{\dot}{t}$, where 
\begin{itemize}
\item there exists $b \in \argmin \limits_{(b,p) \in \mathscr{M}_1} p \big|_{\mathcal{B}}$ such that $(b,b') \in \mathcal{B}_*^2$,
\item $\max \limits_{(\tilde{b},\tilde{b}')} U((\tilde{b}, \tilde{b}'),\underaccent{\dot}{t})-\dfrac{1-F(\underaccent{\dot}{t})}{f(\underaccent{\dot}{t})}U_t((\tilde{b}, \tilde{b}'),\underaccent{\dot}{t})-\min \limits_{(b,p) \in \mathscr{M}_1}p=0,$
\item $\max \limits_{(\tilde{b},\tilde{b}')} U((\tilde{b}, \tilde{b}'),\underaccent{\dot}{t})-\min \limits_{(b,p) \in \mathscr{M}_1}p-p'=0$,
\end{itemize}
always solves principal 2's \eqref{P4'}. In particular, it also solves the relaxed \eqref{P4'} without constraints. Hence, if $(\{(b_1,p_1),...,(b_l,p_l)\}, \{(b_1',p_1'),...,(b_m',p_m')\})$ is a \eqref{P4'}-induced menu profile, then it has to be the case that there exists $b_i' \in \{b_1',...,b_m'\}$ such that there exists $b_j \in \{b_1,...,b_l\}$ such that $(b_j, b_i') \in \mathcal{B}_*^2$, and $b_j \in \argmin \limits_{(b,p) \in \mathscr{M}_1}p \big|_{\mathcal{B}}$, and $p_1'=...=p_m'=p'$.\footnote{The requirement of the existence of $b_j$ follows from the condition after ``If additionally'' in Proposition \ref{t1}. The same-price structure follows from the proposed $\phi_2'$ also solves the relaxed \eqref{P4'}. In addition, for cleanliness of the result, I ignore the case where there exists a measure zero set $\tilde{T} \subseteq T$ such that $\phi_2(T)|_{[0,\bar{p}]} \neq \{p'\}$. } Suppose there exist $k,k' \in \{1,...,l\}$ such that $p_k \neq p_{k'}$, then $\argmin \limits_{(b,p) \in \mathscr{M}_1}p|_{\mathcal{B}} \neq \{b_1,...,b_l\}$. Take $b_{j'} \in \{b_1,...,b_l\} \backslash \argmin \limits_{(b,p) \in \mathscr{M}_1} p \big|_{\mathcal{B}}$, and let $\phi_1(t)=(b_j,p_j), \phi_1(t')=(b_{j'}, p_{j'})$. In that case,
\[
\begin{aligned}
v_1\!\left(\phi_1(t'),\, t' \mid \{(b_1',p_1'),\ldots,(b_m',p_m')\}\right)
&\le \max_{(\tilde b,\tilde b')}\, U\!\left((\tilde b,\tilde b'),\, t'\right) - p_{j'} - p' \\
&< \max_{(\tilde b,\tilde b')}\, U\!\left((\tilde b,\tilde b'),\, t'\right) - p_j - p' \\
&= v_1\!\left(\phi_1(t),\, t' \mid \{(b_1',p_1'),\ldots,(b_m',p_m')\}\right),
\end{aligned}
\]
a violation of IC, and a contradiction to $(\{(b_1,p_1),...,(b_l,p_l)\}, \{(b_1',p_1'),...,(b_m',p_m')\})$ is a \eqref{P4'}-induced menu profile. Therefore, $p_1=...=p_l=p$. Given the same-price structure, if there exists $b_{j''} \in \{b_1,..,b_l\}$ such that there does not exist $b_i' \in \{b_1',...,b_m'\}$ such that $(b_{j''}, b_i') \in \mathcal{B}^2_{*}$, then let $\phi_1(t'')=(b_{j''}, p)$. Similar to the argument above,
\[
\begin{aligned}
v_1\!\left(\phi_1(t''),\, t'' \mid \{(b_1',p_1'),\ldots,(b_m',p_m')\}\right)
&< \max_{(\tilde b,\tilde b')}\, U\!\left((\tilde b,\tilde b'),\, t''\right) - p - p' \\
&= v_1\!\left(\phi_1(t),\, t'' \mid \{(b_1',p_1'),\ldots,(b_m',p_m')\}\right),
\end{aligned}
\]
again a violation of IC. Thus, the first two lines of \eqref{eq:marketsplitting} holds. As for the relationship between $p$ and $p'$, by \eqref{eq:4i}, there exists $t_*$ such that 
\[
\begin{aligned}
\max_{(\tilde b,\tilde b')}\; 
&U((\tilde b,\tilde b'),t_*)
-\frac{1-F(t_*)}{f(t_*)}\,U_t((\tilde b,\tilde b'),t_*)
- p  = 0,\\[4pt]
\max_{(\tilde b,\tilde b')}\; 
&U((\tilde b,\tilde b'),t_*)
-\frac{1-F(t_*)}{f(t_*)}\,U_t((\tilde b,\tilde b'),t_*)
- p' = 0,\\[4pt]
\max_{(\tilde b,\tilde b')}\; 
&U((\tilde b,\tilde b'),t_*) = p + p'.
\end{aligned}
\]

Therefore, $p=p'=\dfrac{\max_{(\tilde b,\tilde b')}
U((\tilde b,\tilde b'),t_*)}{2}$, where $t_*$ satisfies $\dfrac{\max \limits_{(b,b')} U((b,b'), t_*)}{2}= \max \limits_{(b,b')} U((b, b'),t_*)-\dfrac{1-F(t_*)}{f(t_*)}U_t((b, b'),t_*)$. 

\end{proof}

\bibliographystyle{apalike}
\bibliography{decomposingca}

\newpage
\normalsize

\pagenumbering{arabic}
\setcounter{page}{1}
\begin{center}
    {\LARGE \textbf{Online Appendix}}\\[1em]

    \end{center}
\addcontentsline{toc}{section}{Online Appendix}

\section{Envelope Formula under Indirect Utility}
\label{s21}

This section provides a technical justification for applying the envelope formula to the indirect utility function, and more generally, to functions expressible as the upper envelope of a family of continuously differentiable functions on a closed interval of $\mathbb{R}$. This analysis establishes the foundation for applying the decomposition method to standard principal-agent problems solvable via the envelope theorem.

\begin{lemma}
\label{l2}
    Let $T$ be a closed interval of $\mathbb{R}$. Consider $\mathfrak{f}: X \times T \rightarrow \mathbb{R}$. If $\mathfrak{f}_2^{+}(x, t)$ and $\mathfrak{f}_2^{-}(x, t)$ exist and $\mathfrak{f}_2^{+}(x, t) \geq \mathfrak{f}_2^{-}(x, t)$ for any $x \in X,t \in T$, and $\mathfrak{V}(t):=\sup \limits_{x \in X} \mathfrak{f}(x,t):=\mathfrak{f}(x^*(t),t)$ is absolute continuous, then $\mathfrak{V}'$ exists only if $\mathfrak{f}_2(x^*(\cdot), \cdot)$ exists, and 
    \begin{equation}
    \label{eq4}
        \mathfrak{V}(t)=\mathfrak{V}(\underline{t})+\displaystyle{\int}_{\underline{t}}^t \mathfrak{f}_2(x^*(s),s)\mathbf{d}s.
    \end{equation}
    The contrapositive can be stated as: for $t$ such that $\mathfrak{f}_2^{+}(x^*(t),t)>\mathfrak{f}_2^{-}(x^*(t),t)$, $\mathfrak{V}'(t)$ does not exist.\footnote{For any $x \in X, t\in T, \mathfrak{f}_2^{+}(x,t):=\lim \limits_{\varepsilon \rightarrow 0^{+}} \frac{\mathfrak{f}(x,t+\varepsilon)-\mathfrak{f}(x,t)}{\varepsilon}, \mathfrak{f}_2^{-}(x,t):=\lim \limits_{\varepsilon \rightarrow 0^{+}} \frac{\mathfrak{f}(x,t)-\mathfrak{f}(x,t-\varepsilon)}{\varepsilon}$.} 
\end{lemma}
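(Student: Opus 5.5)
The plan is to compare difference quotients of $\mathfrak{V}$ with those of $\mathfrak{f}(x^*(t),\cdot)$. Since $\mathfrak{V}\ge \mathfrak{f}(x^*(t),\cdot)$ everywhere and the two agree at $t$, the function $\mathfrak{V}-\mathfrak{f}(x^*(t),\cdot)$ has a minimum (equal to zero) at $t$.

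For $\varepsilon>0$ this minimum property gives
\[
\frac{\mathfrak{V}(t+\varepsilon)-\mathfrak{V}(t)}{\varepsilon}\ \ge\ \frac{\mathfrak{f}(x^*(t),t+\varepsilon)-\mathfrak{f}(x^*(t),t)}{\varepsilon},
\]
and, because $\mathfrak{V}(t-\varepsilon)\ge \mathfrak{f}(x^*(t),t-\varepsilon)$ with equality at $t$,
\[
\frac{\mathfrak{V}(t)-\mathfrak{V}(t-\varepsilon)}{\varepsilon}\ \le\ \frac{\mathfrak{f}(x^*(t),t)-\mathfrak{f}(x^*(t),t-\varepsilon)}{\varepsilon}.
\]
Letting $\varepsilon\to0^+$, and using that the one-sided partials exist by hypothesis, I get
\[
\mathfrak{V}^{+}(t)\ \ge\ \mathfrak{f}_2^{+}(x^*(t),t)\ \ge\ \mathfrak{f}_2^{-}(x^*(t),t)\ \ge\ \mathfrak{V}^{-}(t),
\]
with the lower and upper Dini derivatives understood where the limits are not yet known to exist. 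This chain is the key step. If $\mathfrak{V}'(t)$ exists, the two outer terms coincide, so every inequality in the chain is an equality. Hence $\mathfrak{f}_2^{+}(x^*(t),t)=\mathfrak{f}_2^{-}(x^*(t),t)=\mathfrak{V}'(t)$, which means $\mathfrak{f}_2(x^*(t),t)$ exists and equals $\mathfrak{V}'(t)$. This is the ``only if'' claim. Its contrapositive is immediate: if $\mathfrak{f}_2^{+}(x^*(t),t)>\mathfrak{f}_2^{-}(x^*(t),t)$, the chain gives $\mathfrak{V}^{+}(t)>\mathfrak{V}^{-}(t)$, so $\mathfrak{V}'(t)$ does not exist. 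Some care is needed here, since only Dini derivatives of $\mathfrak{V}$ are available a priori. I would phrase the argument with $\liminf$ on the right-hand quotient and $\limsup$ on the left-hand quotient to keep it rigorous.

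For the integral formula \eqref{eq4}, I would invoke absolute continuity of $\mathfrak{V}$. By Lebesgue's theorem $\mathfrak{V}$ is differentiable almost everywhere, and $\mathfrak{V}(t)=\mathfrak{V}(\underline{t})+\int_{\underline{t}}^t\mathfrak{V}'(s)\,\mathbf{d}s$. At every point of differentiability the previous step gives $\mathfrak{V}'(s)=\mathfrak{f}_2(x^*(s),s)$. The integrands therefore agree almost everywhere, and substituting yields \eqref{eq4.}

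The main obstacle is measurability and integrability of $s\mapsto \mathfrak{f}_2(x^*(s),s)$, because $x^*(\cdot)$ is just a selection and need not be measurable. I would sidestep this by noting that the function coincides almost everywhere with $\mathfrak{V}'$, which is integrable because $\mathfrak{V}$ is absolutely continuous. On the exceptional null set its values are irrelevant, so \eqref{eq4} is read with the integrand defined up to null sets. A secondary subtlety is that $x^*(t)$ may not be unique. The argument above applies to any maximizer, so the conclusion holds for every selection.
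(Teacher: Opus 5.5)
Your proposal is correct and follows the paper's proof essentially step for step. Like the paper, you sandwich the one-sided difference quotients of $\mathfrak{V}$ between those of $\mathfrak{f}(x^*(t),\cdot)$ using $\mathfrak{V}\ge\mathfrak{f}(x^*(t),\cdot)$ with equality at $t$, force equality of $\mathfrak{f}_2^{+}$ and $\mathfrak{f}_2^{-}$ whenever $\mathfrak{V}'(t)$ exists, and then obtain the integral formula from absolute continuity. You also add two refinements the paper leaves implicit: Dini derivatives where the paper writes $\lim$ before existence is known, and the observation that $\mathfrak{f}_2(x^*(\cdot),\cdot)$ agrees a.e.\ with the integrable $\mathfrak{V}'$, which settles its measurability.
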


\begin{proof}
    It is not hard to observe that \[
\begin{aligned}
\lim_{\varepsilon \to 0^{+}} 
\frac{\mathfrak{V}(t+\varepsilon)-\mathfrak{V}(t)}{\varepsilon}
&= \lim_{\varepsilon \to 0^{+}} 
\frac{\mathfrak{f}\!\left(x^*(t+\varepsilon),\,t+\varepsilon\right)
      -\mathfrak{f}\!\left(x^*(t),\,t\right)}{\varepsilon} \\
&\ge \lim_{\varepsilon \to 0^{+}} 
\frac{\mathfrak{f}\!\left(x^*(t),\,t+\varepsilon\right)
      -\mathfrak{f}\!\left(x^*(t),\,t\right)}{\varepsilon} \\
&= \mathfrak{f}_2^{+}\!\left(x^*(t),\,t\right).
\end{aligned}
\]
 Similarly, $\lim \limits_{\varepsilon \rightarrow 0^{+}} \dfrac{\mathfrak{V}(t)-\mathfrak{V}(t-\varepsilon)}{\varepsilon} \leq \mathfrak{f}_2^{-}(x^*(t),t)$. If $\mathfrak{V}'$ exists, then 
    \begin{equation}
    \label{eq5}
        \mathfrak{f}_2^{+}(x^*(t),t) \leq \lim \limits_{\varepsilon \rightarrow 0^{+}} \dfrac{\mathfrak{V}(t+\varepsilon)-\mathfrak{V}(t)}{\varepsilon} = \lim \limits_{\varepsilon \rightarrow 0^{+}} \dfrac{\mathfrak{V}(t)-\mathfrak{V}(t-\varepsilon)}{\varepsilon} \leq \mathfrak{f}_2^{-}(x^*(t),t).
    \end{equation}
    By the assumption that $\mathfrak{f}_2^{+} \geq \mathfrak{f}_2^{-}$, (\ref{eq5}) implies that $\mathfrak{f}_2^{+}=\mathfrak{f}_2^{-}$, which further implies that $\mathfrak{f}_2(x^*(\cdot),\cdot)$ exists and $\mathfrak{V}'(\cdot)=\mathfrak{f}_2(x^*(\cdot),\cdot)$. (\ref{eq4}) then follows from the absolute continuity of $\mathfrak{V}$. 
\end{proof}

The following remark demonstrates an incorrect result that tries to get rid of the $\mathfrak{f}_2^{+} \geq \mathfrak{f}_2^{-}$ assumption when applying the envelop formula to not everywhere differentiable $\mathfrak{f}$, which further emphasizes the importance of $\mathfrak{f}_2^+ \geq \mathfrak{f}_2^{-}$:

\begin{remark}
     Let $T$ be a closed interval on $\mathbb{R}$. Consider $\mathfrak{f}: X \times T \rightarrow \mathbb{R}$. $\mathfrak{f}(x,\cdot)$ is absolute continuous for all x, $\mathfrak{f}_2^{+}(x, t)$ and $\mathfrak{f}_2^{-}(x, t)$ exist for all $x$ and $t$, and $\mathfrak{V}(t):=\sup \limits_{x \in X} \mathfrak{f}(x,t):=\mathfrak{f}(x^*(t),t)$ is absolute continuous.
     
An argument can be the following: let \[
\begin{aligned}
T_1 &:= \{\, t \in T : \mathfrak{V}'(t)\ \text{does not exist} \,\}, \\
T_2 &:= \{\, t \in T : \mathfrak{V}'(t)\ \text{exists but } \mathfrak{V}'(t)\neq \mathfrak{f}_2(x^*(t),t) \,\}.
\end{aligned}
\] 
By absolute continuity of $\mathfrak{V}$, $T_1$ is of zero measure. In addition, for any $t \in T_2, \mathfrak{f}_2(x^*(t),t)$ does not exist.\footnote{This follows from (\ref{eq5}) and $\mathfrak{V}' \neq \mathfrak{f}_2 \Rightarrow \mathfrak{f}_2^+ \neq \mathfrak{f}_2^-$.} However, this proposition is not sufficient to say that $T_2$ is of zero measure: even though for any $x$, $\mathfrak{f}_2(x,\cdot)$ is differentiable almost everywhere, $x^*(\cdot)$ may not be constant and $T_2$ can be an uncountable union of measure zero sets which may not be of zero measure. Hence, it is incorrect to say that $\mathfrak{V}'(\cdot)=\mathfrak{f}_2(x^*(\cdot), \cdot)$ almost everywhere and Lemma \ref{l2} does not hold. In fact, as shown in \cite{ce2013}, ``downward kinks'', i.e., types where $\mathfrak{f}_2^{+}<\mathfrak{f}_2^{-}$, do matter. 
\end{remark}

The following lemma shows that when a function is the upper envelope of a set of continuous differentiable functions, it is naturally satisfied that the right derivative is no less than the left derivative pointwise.

\begin{lemma}
\label{l3+}
For any finite set of real-valued functions $\{\mathfrak{f}^j\}_{j \in \mathcal{J}}$ defined on $[a,b] \subseteq \mathbb{R}$, if for any $j \in \mathcal{J}, \mathfrak{f}^j \in C^1([a,b])$, then $\mathfrak{g}(\cdot):=\max \limits_{j \in \mathcal{J}} \mathfrak{f}^j(\cdot)$ satisfies:

\begin{equation*}
\label{eq36}
    \mathfrak{g}'_{+}(x) \geq \mathfrak{g}'_{-}(x), \forall x \in (a,b).
\end{equation*}
The result still holds when $\mathcal{J}$ is an infinite set and $\mathfrak{g}(\cdot):=\max \limits_{j \in \mathcal{J}}\mathfrak{f}^j(\cdot)$ is still well-defined. 
    
\end{lemma}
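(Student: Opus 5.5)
The plan is to argue pointwise. Fix $x \in (a,b)$ and set $\mathcal{J}^*(x) := \{j \in \mathcal{J} : \mathfrak{f}^j(x) = \mathfrak{g}(x)\}$, the set of active functions at $x$. The claim then follows from two one-sided comparisons. The right derivative of $\mathfrak{g}$ is bounded below by the right derivative of any active function. The left derivative of $\mathfrak{g}$ is bounded above by the left derivative of any active function. Since each $\mathfrak{f}^j$ is $C^1$, its one-sided derivatives coincide, and the two bounds chain together.

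Concretely, pick any $j \in \mathcal{J}^*(x)$. For $\varepsilon > 0$ small, $\mathfrak{g}(x+\varepsilon) \ge \mathfrak{f}^j(x+\varepsilon)$ and $\mathfrak{g}(x) = \mathfrak{f}^j(x)$, so
\[
\frac{\mathfrak{g}(x+\varepsilon)-\mathfrak{g}(x)}{\varepsilon} \ge \frac{\mathfrak{f}^j(x+\varepsilon)-\mathfrak{f}^j(x)}{\varepsilon}.
\]
Taking $\liminf_{\varepsilon \to 0^+}$ gives $\liminf \ge (\mathfrak{f}^j)'(x)$. Symmetrically, $\mathfrak{g}(x-\varepsilon) \ge \mathfrak{f}^j(x-\varepsilon)$ gives
\[
\frac{\mathfrak{g}(x)-\mathfrak{g}(x-\varepsilon)}{\varepsilon} \le \frac{\mathfrak{f}^j(x)-\mathfrak{f}^j(x-\varepsilon)}{\varepsilon},
\]
so the $\limsup_{\varepsilon \to 0^+}$ of the left quotient is at most $(\mathfrak{f}^j)'(x)$. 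Hence $\liminf(\text{right quotient}) \ge (\mathfrak{f}^j)'(x) \ge \limsup(\text{left quotient})$. Whenever $\mathfrak{g}'_+(x)$ and $\mathfrak{g}'_-(x)$ exist, this is exactly $\mathfrak{g}'_+(x) \ge \mathfrak{g}'_-(x)$. Stated with $\liminf/\limsup$, it holds even without assuming existence. This is the same comparison already used in the proof of Lemma \ref{l2}.

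The only real obstacle is existence of the one-sided derivatives, if the statement is read as asserting them. For finite $\mathcal{J}$ I would handle this directly. By continuity, only indices in $\mathcal{J}^*(x)$ can be active on a small one-sided neighborhood of $x$. Among active indices, those with larger derivative at $x$ dominate to the right, up to $o(\varepsilon)$ by $C^1$. Therefore $\mathfrak{g}'_+(x) = \max_{j \in \mathcal{J}^*(x)} (\mathfrak{f}^j)'(x)$ and $\mathfrak{g}'_-(x) = \min_{j \in \mathcal{J}^*(x)} (\mathfrak{f}^j)'(x)$, which gives the inequality with explicit formulas.

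For infinite $\mathcal{J}$, with the max well-defined so that $\mathcal{J}^*(x) \neq \emptyset$, I would rely on the $\liminf/\limsup$ version above, which needs only one active index and no finiteness. I would remark that the inequality is meant in this sense, or holds wherever the one-sided derivatives exist. That is all Lemma \ref{l2} requires, since it is applied where $\mathfrak{V}$ is absolutely continuous.
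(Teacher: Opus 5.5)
Your argument is correct and is essentially the paper's: both fix $x$, pass to the active set $J(x)=\{j:\mathfrak{f}^j(x)=\mathfrak{g}(x)\}$, and bound $\mathfrak{g}'_+(x)$ below and $\mathfrak{g}'_-(x)$ above by derivatives of active functions (the paper uses $\max_{j\in J(x)}(\mathfrak{f}^j)'(x)\ge\min_{j\in J(x)}(\mathfrak{f}^j)'(x)$). Your single-index $\liminf/\limsup$ formulation is the more careful one for infinite $\mathcal{J}$: there, the paper's interchange of the limit with $\max_{j\in J(x)}$, and its claim that the argument carries over unchanged, tacitly assume the one-sided derivatives exist, which can fail (e.g.\ $\mathfrak{g}(x)=|x|(2+\sin(1/x))$ with $\mathfrak{g}(0)=0$ is a pointwise-attained max of $C^1$ functions and has no right derivative at $0$).
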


\begin{proof}
    For any $x \in (a,b)$, consider $J(x):=\{j \in \mathcal{J}: \mathfrak{f}^j(x)=\mathfrak{g}(x)\}$. Since for any $j, \mathfrak{f}^j \in C^1([a,b]), \mathfrak{g}_{+}' \text{ and } \mathfrak{g}_{-}'$ are well-defined almost everywhere. Note that for any $x \in (a,b)$, 
\[
\begin{aligned}
\mathfrak{g}'_+(x)
&= \lim_{\varepsilon \to 0^+}\frac{\mathfrak{g}(x+\varepsilon)-\mathfrak{g}(x)}{\varepsilon} = \lim_{\varepsilon \to 0^+}\frac{\max_{j\in \mathcal{J}} \mathfrak{f}^j(x+\varepsilon)-\mathfrak{g}(x)}{\varepsilon} \\
&\ge \lim_{\varepsilon \to 0^+}\frac{\max_{j\in J(x)} \mathfrak{f}^j(x+\varepsilon)-\mathfrak{g}(x)}{\varepsilon} = \max_{j\in J(x)} (\mathfrak{f}^j)'(x).
\end{aligned}
\]
Similarly,  
\[
\begin{aligned}
\mathfrak{g}'_-(x)
&= \lim_{\varepsilon \to 0^+}\frac{\mathfrak{g}(x)-\mathfrak{g}(x-\varepsilon)}{\varepsilon} = \lim_{\varepsilon \to 0^+}\frac{\mathfrak{g}(x)- \max_{j\in \mathcal{J}}\mathfrak{f}^j(x-\varepsilon)}{\varepsilon} \\
&\le \lim_{\varepsilon \to 0^+}\frac{\mathfrak{g}(x)-\max_{j\in J(x)} \mathfrak{f}^j(x-\varepsilon)}{\varepsilon} = \min_{j\in J(x)} (\mathfrak{f}^j)'(x).
\end{aligned}
\]
Hence, for any $x \in (a,b)$, $\mathfrak{g}'_{+}(x) \geq \max_{j\in J(x)} (\mathfrak{f}^j)'(x) \geq \min_{j\in J(x)} (\mathfrak{f}^j)'(x) \geq \mathfrak{g}'_{-}(x)$. 

Clearly, when $\mathcal{J}$ is an infinite set and $\mathfrak{g}(\cdot):=\max \limits_{j \in \mathcal{J}}\mathfrak{f}^j(\cdot)$ is still well-defined, nothing of the arguments above needs to be changed, and the result follows. 
\end{proof}

The last lemma helps say more about the upper envelope of a set of continuous differentiable functions, in particular, help showing the ``upper envelope of upper envelope'' is absolute continuous. 

\begin{lemma}
\label{l1}
    For any finite set of real-valued functions $\{\mathfrak{f}^j\}_{j \in \mathcal{J}}$ defined on $[a,b] \subseteq \mathbb{R}$, if for any $j \in \mathcal{J}, \mathfrak{f}^j \in C^1([a,b])$, then $\mathfrak{g}(\cdot):=\max \limits_{j \in \mathcal{J}} \mathfrak{f}^j(\cdot)$ satisfies:
\begin{enumerate}
    \item $\mathfrak{g}(\cdot)$ is absolute continuous on $[a,b]$,
    \item there exists an integrable $\mathfrak{h}: [a,b] \rightarrow \mathbb{R}_{+}$ such that $|\mathfrak{g}'(\cdot)| \leq \mathfrak{h}(\cdot)$ a.e.,
    \item if for any $j,  \mathfrak{f}^j$ is increasing on $[a,b]$, then $\mathfrak{g}$ is also increasing on $[a,b]$. 

    \end{enumerate}
    The result still holds when $\mathcal{J}$ is an infinite set, $\{\mathfrak{f}^j\}_{j \in \mathcal{J}}$ is a set of equi-Lipschitz functions, and $\mathfrak{g}(\cdot):=\sup \limits_{j \in \mathcal{J}}\mathfrak{f}^j(\cdot)<\infty$. 
\end{lemma}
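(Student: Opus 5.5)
The plan is to prove the three claims in order for finite $\mathcal{J}$, and then to say what changes in the infinite equi-Lipschitz case.

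\emph{Claim 1 (absolute continuity).} Since each $\mathfrak{f}^j\in C^1([a,b])$ and $[a,b]$ is compact, $|(\mathfrak{f}^j)'|$ is bounded, say by $L_j$. By the mean value theorem each $\mathfrak{f}^j$ is then $L_j$-Lipschitz. Put $L:=\max_{j\in\mathcal{J}}L_j$, which is finite because $\mathcal{J}$ is finite. For $x,y\in[a,b]$, choose $j$ attaining $\mathfrak{g}(x)$. Then
\[
\mathfrak{g}(x)-\mathfrak{g}(y)\le \mathfrak{f}^j(x)-\mathfrak{f}^j(y)\le L|x-y|,
\]
and the same bound holds with $x$ and $y$ swapped. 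So $\mathfrak{g}$ is $L$-Lipschitz, and Lipschitz functions are absolutely continuous.

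\emph{Claim 2 (integrable bound).} Because $\mathfrak{g}$ is Lipschitz, it is differentiable almost everywhere (Rademacher, or Lebesgue in one dimension), and wherever $\mathfrak{g}'$ exists we have $|\mathfrak{g}'|\le L$. Take $\mathfrak{h}\equiv L$, which is integrable on the bounded interval. Alternatively, at points where $\mathfrak{g}'$ exists, the inequalities in the proof of Lemma \ref{l3+} squeeze $\mathfrak{g}'(x)$ between $\min_{j\in J(x)}(\mathfrak{f}^j)'(x)$ and $\max_{j\in J(x)}(\mathfrak{f}^j)'(x)$. That gives the sharper bound $\mathfrak{h}:=\max_j|(\mathfrak{f}^j)'|$, which is continuous and hence integrable. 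Either choice is enough.

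\emph{Claim 3 (monotonicity).} Take $x\le y$ and choose $j$ attaining $\mathfrak{g}(x)$. Then
\[
\mathfrak{g}(x)=\mathfrak{f}^j(x)\le \mathfrak{f}^j(y)\le \mathfrak{g}(y).
\]
This argument needs no regularity at all.

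\emph{Infinite case.} Let $\mathcal{J}$ be infinite, with $\{\mathfrak{f}^j\}$ equi-Lipschitz with common constant $L$ and $\mathfrak{g}=\sup_j\mathfrak{f}^j<\infty$. The supremum may not be attained, so I will replace the maximizer by an $\varepsilon$-maximizer. Fix $\varepsilon>0$ and pick $j$ with $\mathfrak{f}^j(x)>\mathfrak{g}(x)-\varepsilon$. Then
\[
\mathfrak{g}(x)-\mathfrak{g}(y)\le \mathfrak{f}^j(x)+\varepsilon-\mathfrak{f}^j(y)\le L|x-y|+\varepsilon.
\]
Letting $\varepsilon\to0$ gives the Lipschitz bound, and Claims 1 and 2 follow as before with $\mathfrak{h}\equiv L$. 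For monotonicity, $\mathfrak{f}^j(x)\le\mathfrak{f}^j(y)\le\mathfrak{g}(y)$ holds for every $j$, so taking the supremum over $j$ gives $\mathfrak{g}(x)\le\mathfrak{g}(y)$.

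The only delicate step is this infinite case, where the supremum is not attained. There, equi-Lipschitzness replaces the finite maximum of the individual Lipschitz constants, and the $\varepsilon$-argument replaces attainment of the maximum. In the finite case every step is routine.
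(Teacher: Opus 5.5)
Your proposal is correct and follows essentially the same route as the paper: show $\mathfrak{g}$ is Lipschitz (constant $\max_j L_j$ in the finite case, the common constant $L$ in the equi-Lipschitz case), which gives absolute continuity and the constant bound $\mathfrak{h}\equiv L$, and obtain monotonicity by a pointwise comparison with a (near-)maximizer. The differences are only presentational. Your one-sided maximizer and $\varepsilon$-maximizer bounds spell out the inequality $|\sup_j \mathfrak{f}^j(x)-\sup_j \mathfrak{f}^j(y)|\le \sup_j|\mathfrak{f}^j(x)-\mathfrak{f}^j(y)|$, which the paper uses directly, and your direct monotonicity argument replaces the paper's proof by contradiction.
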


\begin{proof}
    For any $j, \mathfrak{f}^j \in C^1([a,b])$ implies that $\mathfrak{f}^j$ is Lipschitz on $[a,b]$. Let the Lipschitz constant for $\mathfrak{f}^j$ be $L^j \in \mathbb{R}_{+}$. For any $x,y \in [a,b],$
    \[
\begin{aligned}
|\mathfrak{g}(x)-\mathfrak{g}(y)|
&= |\max_j \mathfrak{f}^j(x) - \max_j \mathfrak{f}^j(y)| \\
&\le \max_j \lvert \mathfrak{f}^j(x)-\mathfrak{f}^j(y)\rvert \\
&\le \max_j L^j\,\lvert x-y\rvert .
\end{aligned}
\]

 Hence, $\mathfrak{g}$ is $\max \limits_j L^j$-Lipschitz on $[a,b]$, which implies that $\mathfrak{g}$ is absolute continuous on $[a,b]$. 

    As for the second property, notice that $\mathfrak{g}$ is absolute continuous implies that $\mathfrak{g}$ is differentiable almost everywhere. Since for any $j, \mathfrak{f}^j \in C^1([a,b]), (\mathfrak{f}^j)'$ is bounded on $[a,b]$, which further implies that there exists $M \in \mathbb{R}_{+}$ such that \[\max \limits_j \max \limits_{x \in [a,b]} |(\mathfrak{f}^j)'(x)| \leq M.\] Hence, $|\mathfrak{g}'(x)| \leq M$ whenever it exists, and the integrable $\mathfrak{h}$ function can simply be $\mathfrak{h}(x)=M, \forall x \in [a,b]$.

    As for the third property, suppose there exist $a \leq x < x' \leq b$ such that $\mathfrak{g}(x')<\mathfrak{g}(x)$. Let $\mathfrak{g}(x')=\mathfrak{f}^{j'}(x')$ and $\mathfrak{g}(x)=\mathfrak{f}^j(x)$, then $\mathfrak{f}^j(x') \leq \mathfrak{f}^{j'}(x') < \mathfrak{f}^j(x)$, a contradiction to $\mathfrak{f}^j$ is increasing on $[a,b]$. 

    As for the case where $\mathcal{J}$ is an infinite set, and $\{\mathfrak{f}^j\}_{j \in \mathcal{J}}$ is a set of equi-Lipschitz functions, for any $j \in \mathcal{J}$, let $\mathfrak{f}^j$ be $L$-Lipschitz. For any $x,y \in [a,b],$
    \[
\begin{aligned}
|\mathfrak{g}(x)-\mathfrak{g}(y)|
&= |\sup_j \mathfrak{f}^j(x) - \sup_j \mathfrak{f}^j(y)| \\
&\le \sup_j \lvert \mathfrak{f}^j(x)-\mathfrak{f}^j(y)\rvert \\
&\le L\,\lvert x-y\rvert .
\end{aligned}
\]
 Hence, $\mathfrak{g}$ is $L$-Lipschitz on $[a,b]$, which implies that $\mathfrak{g}$ is absolute continuous on $[a,b]$. 

 As for the second property, for any $x \in [a,b],$
 \[ |\mathfrak{g}'(x)|=|\lim \limits_{\varepsilon \rightarrow 0} \frac{\mathfrak{f}(x+\varepsilon)-\mathfrak{f}(x)}{\varepsilon}| \leq L,\]
 so that the integrable $\mathfrak{h}$ function can simply be $\mathfrak{h}(x)=L, \forall x \in [a,b]$.

 As for the third property, suppose there exist $a \leq x < x' \leq b$ such that $\mathfrak{g}(x')<\mathfrak{g}(x)$, then there exists $j \in \mathcal{J}$ and $\eta>0$ such that
 \[\mathfrak{f}^j(x') \leq \mathfrak{g}(x')<\mathfrak{g}(x)-\eta<\mathfrak{f}^j(x),\]
 a contradiction to $\mathfrak{f}^j$ is increasing on $[a,b]$. 

\end{proof}

To ensure that $\mathfrak{V}(\cdot)=\sup \limits_{x \in X} \mathfrak{f}(x, \cdot)$ is absolute continuous, by Lemma \ref{l1}, if $\mathfrak{f}(x,\cdot)$ is the upper envelope of a (certain) set of $C^1$ functions for any $x$, then $\mathfrak{f}(x, \cdot)$ is absolute continuous and there exists $M_x \in \mathbb{R}_{+}$ such that $|\mathfrak{f}(x, \cdot)| \leq M_x$ for almost all $t$. 
By \cite{mise2002}, this ensures that $\mathfrak{V}(\cdot)$ is absolute continuous.

\section{Omitted Proofs}
\subsection{Proof for Constructing PBE under Outside Options: Proposition \ref{prop:intrinsicca}}

\begin{proof}
    I first prove the intrinsic common agency case. For all $(\phi_1^*,...,\phi_n^*)$ such that for any $i, \phi_i^*$ solves \eqref{P4'} and satisfies \eqref{eq:4i}, let $\mathscr{M}_i:=\{\phi_i^*(s)\}_{s \in T} \backslash \{quit\}, \forall i$. Consider $\sigma_A$ such that for any $t$,
    \[ \sigma_A(\mathscr{M}_1,...,\mathscr{M}_n,t)=\begin{cases} quit, & \text{ if } \max \limits_{o_i \in \mathscr{M}_i, \forall i} V((o_1,...,o_n),t)<0, \\ (o_1^t,...,o_n^t) &\text{ otherwise. }\end{cases}\]
    By \eqref{eq:4i}, condition \eqref{1i} is not violated. Let $\sigma_A$ satisfy \eqref{1i} on menu profiles other than $(\mathscr{M}_1,...,\mathscr{M}_n)$ as well. Suppose $(\mathscr{M}_1,...,\mathscr{M}_n, \sigma_A)$ is not a PBE, then there exists $i$ and $\mathscr{M}_i'$ such that 
    \[\mathbb{E}_t[u_i(\sigma_A(\mathscr{M}_1,...,\mathscr{M}_i',...,\mathscr{M}_n,t))]>\mathbb{E}_t[u_i(\sigma_A(\mathscr{M}_1,...,\mathscr{M}_i,...,\mathscr{M}_n,t))].\]
    Consider $\phi_i':T\rightarrow O_i \cup \{quit\}$ such that for any $t$,
    \[ \phi_i'(t)=\begin{cases} quit, & \text{ if } \max \limits_{o_j \in \mathscr{M}_j, \forall j \neq i; o_i \in \mathscr{M}_i'} V((o_1,...,o_n),t)<0, \\ \in \argmax \limits_{o_i \in \supp \sigma_A(\mathscr{M}_1,...,\mathscr{M}_i',...,\mathscr{M}_n,t)|_{O_i}} u_i(o_i,t) &\text{ otherwise. }\end{cases}\]
    I first show that $\mathbb{E}_t[u_i(\phi_i'(t))]>\mathbb{E}_t[u_i(\phi_i^*(t))]$. By \eqref{eq:4i} and the construction of $\sigma_A$ above, for any $t$ such that $\max \limits_{o_i \in \mathscr{M}_i, \forall i} V((o_1,...,o_n),t)<0$,
    \[\sigma_A(\mathscr{M}_1,...,\mathscr{M}_n,t)=\phi_i^*(t)=quit.\]
    Similarly, for any $t$ such that $\max \limits_{o_j \in \mathscr{M}_j, \forall j \neq i; o_i \in \mathscr{M}_i'} V((o_1,...,o_n),t)<0$,
    \[\sigma_A(\mathscr{M}_1,...,\mathscr{M}_i',...,\mathscr{M}_n,t)=\phi_i'(t)=quit. \]
    Hence, \eqref{166} in the proof of Proposition \ref{p1} still holds, which directly gives $\mathbb{E}_t[u_i(\phi_i'(t))]>\mathbb{E}_t[u_i(\phi_i^*(t))]$.

    I proceed to show that $\phi_i'$ is feasible to \eqref{P4'}. If $\phi_i'(t)=quit$, then clearly the IR constraint is satisfied. As for the IC constraint,
    \begin{itemize}
        \item if $\phi_i'(t')=quit$, then 
        \[ v_i(\phi_i'(t),t|\prod_{j \neq i} \{\phi_j^*(s)\}_{s \in T} \backslash \{quit\})=0 \geq 0=v_i(\phi_i'(t'),t|\prod_{j \neq i} \{\phi_j^*(s)\}_{s \in T} \backslash \{quit\}),\]
        \item if $\phi_i'(t') \neq quit$, then
        \[
\begin{aligned}
v_i\!\left(\phi_i'(t),\,t \mid \prod_{j \neq i} \{\phi_j^*(s)\}_{s \in T}\setminus\{quit\}\right)
= 0 &> \max_{\substack{o_j \in \mathscr{M}_j\ \forall j\neq i\\ o_i \in \mathscr{M}_i'}} V\!\left((o_1,\ldots,o_n),\,t\right) \\
&\ge v_i\!\left(\phi_i'(t'),\,t \mid \prod_{j \neq i} \{\phi_j^*(s)\}_{s \in T}\setminus\{quit\}\right).
\end{aligned}
\]

    \end{itemize}
    If $\phi_i'(t') \neq quit$, then by construction, the IR constraint is satisfied. As for the IC constraint, 
    \begin{itemize}
        \item if $\phi_i'(t') = quit$, then
                \[ v_i(\phi_i'(t),t|\prod_{j \neq i} \{\phi_j^*(s)\}_{s \in T} \backslash \{quit\}) \geq 0=v_i(\phi_i'(t'),t|\prod_{j \neq i} \{\phi_j^*(s)\}_{s \in T} \backslash \{quit\}),\]
        \item if $\phi_i'(t') \neq quit$, then similar arguments in the proof of Proposition \ref{p0}, in particular, \eqref{177} holds. 
    \end{itemize}
    Hence, $\phi_i'$ is feasible to \eqref{P4'}, a contradiction to $\phi_i^*$ solves \eqref{P4'}. Therefore, $(\mathscr{M}_1,...,\mathscr{M}_n, \sigma_A)$ is a PBE. 

    I proceed to prove the delegated common agency case. For all $(\phi_1^*,...,\phi_n^*)$ such that for any $i, \phi_i^*$ solves \eqref{P4''} and satisfies \eqref{4D}, let $\mathscr{M}_i=\{\phi_i^*(s)\}_{s \in T}, \forall i$. Consider $\sigma_A$ such that for any $t$,
    \[\sigma_A(\mathscr{M}_1,...,\mathscr{M}_n,t)=(\phi_1^*(t_1),...,\phi_n^*(t_n)).\]
By \eqref{4D}, condition \eqref{1d} is not violated. Let $\sigma_A$ satisfy \eqref{1d} on menu profiles other than $(\mathscr{M}_1,...,\mathscr{M}_n)$ as well. Suppose $(\mathscr{M}_1,...,\mathscr{M}_n, \sigma_A)$ is not a PBE, then there exists $i$ and $\mathscr{M}_i'$ such that 
    \[\mathbb{E}_t[u_i(\sigma_A(\mathscr{M}_1,...,\mathscr{M}_i',...,\mathscr{M}_n,t))]>\mathbb{E}_t[u_i(\sigma_A(\mathscr{M}_1,...,\mathscr{M}_i,...,\mathscr{M}_n,t))].\]
    Consider $\phi_i':T\rightarrow O_i$ such that for any $t$,
    \[\phi_i'(t) \in \argmax \limits_{o_i \in \supp \sigma_A(\mathscr{M}_1,...,\mathscr{M}_i',...,\mathscr{M}_n,t)|_{O_i}}u_i(o_i,t).\]
    \eqref{166} in the proof of Proposition \ref{p1} still holds, which directly gives $\mathbb{E}_t[u_i(\phi_i'(t))]>\mathbb{E}_t[u_i(\phi_i^*(t))]$. Meanwhile, by $\sigma_A$ satisfying \eqref{1d}, $\phi_i'$ is feasible to \eqref{P4''}. This leads to a contradiction of $\phi_i^*$ is a solution to \eqref{P4''}. Therefore, $(\mathscr{M}_1,...,\mathscr{M}_n, \sigma_A)$ is a PBE. 
\end{proof}

\subsection{Proofs for Bundling: Proposition \ref{t2} and Corollary \ref{c1}}

\begin{proof}
    For any $b \in \mathcal{B}$, let principal 1 offer $\{(b,p)\}$, principal 2 offer $\{(b_1^*,p_1^*),...,(b_m^*,p_m^*)\}$ specified in Proposition \ref{t2}. First consider principal 2's \eqref{P4'} given principal 1 offering $\{(b,p)\}$. In particular, since principal 1 offers a singleton menu, the agent's indirect utility at principal 2 degenerates to an upper envelope of a single function, which is simply the function itself: 
    \begin{equation*}
\begin{aligned}
\max_{\phi_2: T \to O_2 \cup \{quit\}} \quad
& \mathbb{E}_t\!\left[
\varphi^{b}(b^2(t),t)-\dfrac{U\!\bigl((b,b^\complement),t_*\bigr)}{2} 
\right] \\
\text{s.t.} \quad
 U((b,b^2(t)),t)-p^2(t)-\dfrac{U\!\bigl((b,b^\complement),t_*\bigr)}{2} &\ge U((b,b^2(t')),t)-p^2(t')-\dfrac{U\!\bigl((b,b^\complement),t_*\bigr)}{2},
\qquad \forall\, t,t', \\
U((b,b^2(t)),t)-p^2(t)-\dfrac{U\!\bigl((b,b^\complement),t_*\bigr)}{2} &\ge 0,
\qquad \forall\, t.
\end{aligned}
\end{equation*}

Since $\varphi^b$ satisfies MD$^\star$ and for any $a \in \Delta(\mathcal{B}), \varphi^b(a,\cdot)$ is monotonic, by \cite{feng2025}, $\{\varphi^{b}(\tilde{b}, \cdot)\}_{\tilde{b} \in \mathcal{B}}$ has a unique upper envelope, and $\{quit\} \cup \{\text{ the upper envelope of } \{\varphi^{b}(\tilde{b}, \cdot)\}_{\tilde{b} \in \mathcal{B}} \}$ is the unique minimally optimal menu of the problem above. Moreover, at $\underline{t}$,
\[\{b^\complement\}=\argmax \limits_{\tilde{b}} U((b,\tilde{b}),\underline{t})-\dfrac{1}{f(\underline{t})}U_t((b,\tilde{b}),\underline{t}),\]
and at $\bar{t}$, for any $\tilde{b} \nsupset b^\complement, U((b,\tilde{b}), \bar{t})<U((b,b^\complement),\bar{t})$. In that case, by Proposition 3 of \cite{feng2025}, \text{ the upper envelope of } $\{\varphi^{b}(\tilde{b}, \cdot)\}_{\tilde{b} \in \mathcal{B}} \}$, denoted as $\{(b_1^*,p_1^*),...,(b_m^*,p_m^*)\}$, is a nested or a tree menu with the root being $b^\complement$, and $\min \limits_{x \in \{1,...,m\}} p_x^*=U((b,b^\complement),t_*)-\dfrac{U((b,b^\complement),t_*)}{2}=\dfrac{U((b,b^\complement),t_*)}{2}$. 

Now consider principal 1's \eqref{P4'} given principal 2 offering $\{(b_1^*,p_1^*),...,(b_m^*,p_m^*)\}$. Without loss of generality, let $p_1^*=\min \limits_{x \in \{1,...,m\}} p_x^*$. Note that principal 1's objective satisfies

\begin{equation}
\label{eq:p1objective}
\begin{aligned}
\mathbb{E}_t\!\Biggl[
&\max_{(b_i^*,p_i^*)} U\bigl((b^1(t),b_i^*),t\bigr) - p_i^*
-\frac{1-F(t)}{f(t)}\,
\frac{\partial}{\partial t}\max_{(b_i^*,p_i^*)} U\bigl((b^1(t),b_i^*),t\bigr)
\Biggr] \\
\le\;& \mathbb{E}_t\!\Biggl[
\max_{(b_i^*,p_i^*)} U\bigl((b^1(t),b_i^*),t\bigr) - p_i^*
-\frac{1-F(t)}{f(t)}\,U_t\bigl((b^1(t),b_i^*),t\bigr)
\Biggr] \\
\le\;& \mathbb{E}_t\!\Biggl[
\max_{(b_i^*,p_i^*)} U\bigl(([b_i^*]^\complement,b_i^*),t\bigr) - p_i^*
-\frac{1-F(t)}{f(t)}\,U_t\bigl(([b_i^*]^\complement,b_i^*),t\bigr)
\Biggr] \\
=\;& \mathbb{E}_t\!\Biggl[
U\bigl(([b_1^*]^\complement,b_1^*),t\bigr)
-\frac{1-F(t)}{f(t)}\,U_t\bigl(([b_1^*]^\complement,b_1^*),t\bigr)
- p_1^*
\Biggr],
\end{aligned}
\end{equation}
where the second ``$\leq$'' is by the condition \ref{p10cd2} in Proposition \ref{t2}, and the ``$=$'' is by condition \ref{p10cd3} in Proposition \ref{t2} and $p_1^*=\min \limits_{x \in \{1,...,m\}} p_x^*$. In particular, by the definition of $t_*$ and the assumption that $U\bigl(([b_1^*]^\complement,b_1^*),t\bigr)
-\frac{1-F(t)}{f(t)}\,U_t\bigl(([b_1^*]^\complement,b_1^*),t\bigr)$ is strictly increasing in $t$, one $\phi_1^*$ that can attain the equality is
\begin{equation}
\label{eq:phi1starp10}
\phi_1^*(t)=
\begin{cases}
quit, 
& \text{if } t \in [\underline{t},\, t_*), \\[6pt]
\bigl([b_1^*]^\complement,\, \dfrac{U\!\bigl(([b_1^*]^\complement,b_1^*),\, t_*\bigr)}{2} \big) ,
& \text{if } t \in [t_*,\, \bar{t}].
\end{cases}
\end{equation}

Clearly, $\phi_1^*$ given by \eqref{eq:phi1starp10} satisfies IC and IR constraints, as $\{\phi_1^*(s)\}_{s \in T} \backslash \{quit\}$ is a singleton. Note that \eqref{eq:4i} is also satisfied by Corollary \ref{cor:n-1singletonoo}. Hence, $(\{(b,p)\}, \{(b_1^*,p_1^*),...,(b_m^*,p_m^*)\})$ specified in Proposition \ref{t2} is a \eqref{P4'}-induced menu profile. 

Regarding Corollary \ref{c1}, the existence of $b_0 \supsetneq b_0^\complement$ such that $U((b_0,b_0'),\bar{t})>U(b_0,b_0^\complement),\bar{t})$ ensures that $m \geq 2$. Meanwhile, \eqref{eq:p1objective} should be rewritten as
\begin{equation*}
\begin{aligned}
\mathbb{E}_t\!\Biggl[
&\max_{(b_i^*,p_i^*)} U\bigl((b^1(t),b_i^*),t\bigr) - p_i^*
-\frac{1-F(t)}{f(t)}\,
\frac{\partial}{\partial t}\max_{(b_i^*,p_i^*)} U\bigl((b^1(t),b_i^*),t\bigr)
\Biggr] \\
\le\;& \mathbb{E}_t\!\Biggl[
\max_{(b_i^*,p_i^*)} U\bigl((b^1(t),b_i^*),t\bigr) - p_i^*
-\frac{1-F(t)}{f(t)}\,U_t\bigl((b^1(t),b_i^*),t\bigr)
\Biggr] \\
\le\;& \mathbb{E}_t\!\Biggl[
\max_{(b_i^*,p_i^*)} U\bigl(([b_i^*]^\complement,b_i^*),t\bigr) - p_i^*
-\frac{1-F(t)}{f(t)}\,U_t\bigl(([b_i^*]^\complement,b_i^*),t\bigr)
\Biggr] \\
=\;& \mathbb{E}_t\!\Biggl[
U\bigl((b_0,b_0^\complement),t\bigr)
-\frac{1-F(t)}{f(t)}\,U_t\bigl((b_0,b_0^\complement),t\bigr)
- p_1^*
\Biggr].
\end{aligned}
\end{equation*}

\end{proof}

\section{Proofs for Extensions}

\subsection{Proof of Proposition \ref{p0}}
\begin{proof}
I first prove the first part of the proposition, i.e., when $\sigma_A$ is a pure strategy that satisfies condition \ref{cd1} in Definition \ref{d0} and \eqref{piia}. For $(\phi_1^*,...,\phi_n^*)$ such that $\forall i, \phi_i^*$ solves (\ref{P1}), it is straightforward that, by the definition of $\mathscr{M}_i$, for any $i$, \[\mathbb{E}_t[u_i(\sigma_A(\{\phi_1^*(s)\}_{s \in T},...,\{\phi_n^*(s)\}_{s \in T},t),t)]=\mathbb{E}_t[u_i(\sigma_A(\mathscr{M}_1,...,\mathscr{M}_n,t),t)].\]
If $(\mathscr{M}_1,...,\mathscr{M}_n,\sigma_A)$ is not a PBE, then there exists $i$ and $\mathscr{M}_i'$ such that \[\mathbb{E}_t[u_i(\sigma_A(\mathscr{M}_1,...\mathscr{M}_i,...,\mathscr{M}_n,t),t)]<\mathbb{E}_t[u_i(\sigma_A(\mathscr{M}_1,...,\mathscr{M}_i',...,\mathscr{M}_n,t),t)].\]
Let $\phi_i': T \rightarrow O_i$ such that for any $t$
\begin{equation} 
\label{16++} \phi_i'(t):=\sigma_A(\mathscr{M}_1,...,\mathscr{M}_i',...,\mathscr{M}_n,t)\bigg|_{O_i}.
\end{equation}
$\phi_i'$ is well-defined because $\sigma_A$ is a pure strategy. It is straightforward that \begin{equation} \label{eq:subsetmiprime} \mathscr{M}_1 \times ... \times \{\phi_i'(s)\}_{s \in T} \times ... \times \mathscr{M}_n \subseteq \mathscr{M}_1 \times ... \mathscr{M}_i' \times ... \times \mathscr{M}_n.\end{equation} I first show that $\phi_i'$ satisfies the IC constraint of \eqref{P1}, i.e., $\forall t,t', v_i(\phi_i'(t),t|\mathscr{M}_{-i}) \geq v_i(\phi_i'(t'),t|\mathscr{M}_{-i})$. This is because 
\begin{equation*}
\begin{split}
    v_i(\phi_i'(t),t \mid \mathscr{M}_{-i})
    &= \max_{o \in \mathscr{M}_1 \times \dots \times \mathscr{M}_i' \times \dots \times \mathscr{M}_n} V(o,t) \\
    &\geq \max_{o \in \mathscr{M}_1 \times \dots \times \{\phi_i'(s)\}_{s \in T} \times \dots \times \mathscr{M}_n} V(o,t) \\
    &\geq v_i(\phi_i'(t'),t \mid \mathscr{M}_{-i}), 
\end{split}
\end{equation*}
where the first $``\geq''$ is by \eqref{eq:subsetmiprime}. 

Moreover, to make use of \eqref{piia}, want to show that \[\supp \sigma_A(\mathscr{M}_1,...,\mathscr{M}_i',...,\mathscr{M}_n,t) \subseteq \mathscr{M}_1 \times ... \times \{\phi_i'(s)\}_{s \in T} \times ... \times \mathscr{M}_n, \forall t.\]

It suffices to show that \[\supp \sigma_A(\mathscr{M}_1,...,\mathscr{M}_i',...,\mathscr{M}_n,t) \bigg|_{O_i} \subseteq \{\phi_i'(s)\}_{s \in T}, \forall t,\] which is guaranteed by (\ref{16++}). This, combining with $\sigma_A$ satisfying \eqref{piia}, gives
\[\sigma_A(\mathscr{M}_1,...,\{\phi_i'(s)\}_{s \in T},...,\mathscr{M}_n,t) = \sigma_A(\mathscr{M}_1,...,\mathscr{M}_i',...,\mathscr{M}_n,t), \forall t.\]
As a result,
\begin{equation*}
\begin{split}
    \mathbb{E}_t[u_i(\sigma_A(\{\phi_1^*(s)\}_{s \in T},\dots,\{\phi_n^*(s)\}_{s \in T},t),t)]
    &= \mathbb{E}_t[u_i(\sigma_A(\mathscr{M}_1,\dots,\mathscr{M}_n,t),t)] \\
    &< \mathbb{E}_t[u_i(\sigma_A(\mathscr{M}_1,\dots,\mathscr{M}_i',\dots,\mathscr{M}_n,t),t)] \\
    &= \mathbb{E}_t[u_i(\sigma_A(\mathscr{M}_1,\dots,\{\phi_i'(s)\}_{s \in T},\dots,\mathscr{M}_n,t),t)] .
\end{split}
\end{equation*}
Since $\phi_i'$ also satisfies the IC constraint of \eqref{P1}, this is a contradiction to $\phi_i^*$ solves (\ref{P1}). Hence, $(\mathscr{M}_1,...,\mathscr{M}_n, \sigma_A)$ is a PBE.

I proceed to prove the second part of the proposition, i.e., the case where $\sigma_A$ is a mixed strategy that satisfies condition \ref{cd1} in Definition \ref{d0} and \eqref{contraction}. Identical to the arguments above, if $(\mathscr{M}_1,...,\mathscr{M}_n,\sigma_A)$ is not a PBE, then there exists $i$ and $\mathscr{M}_i'$ such that $\mathbb{E}_t[u_i(\sigma_A(\mathscr{M}_1,...\mathscr{M}_i,...,\mathscr{M}_n,t),t)]<\mathbb{E}_t[u_i(\sigma_A(\mathscr{M}_1,...,\mathscr{M}_i',...,\mathscr{M}_n,t),t)]$. Let $\phi_i'': T \rightarrow \Delta(O_i)$ satisfy (\ref{16++}). I first show that $\phi_i''$ satisfies the IC constraint in \eqref{P1'}, i.e., $\forall t,t', v_i(\phi_i''(t),t|\mathscr{M}_{-i}) \geq v_i(\phi_i''(t'),t|\mathscr{M}_{-i})$. This is because 
\begin{equation*}
\begin{split}
    v_i(\phi_i''(t),t \mid \mathscr{M}_{-i})
    &= \int_{O_i} v_i(o_i,t \mid \mathscr{M}_{-i}) \, \phi_i''(t)(\mathbf{d}o_i) \\
    &= 
       \max_{o \in \mathscr{M}_1 \times \dots \times \mathscr{M}_i' \times \dots \times \mathscr{M}_n}
       V(o,t) \\
    &\geq v_i(\phi_i''(t'),t \mid \mathscr{M}_{-i}),
\end{split}
\end{equation*}
where the second ``$=$'' is by for any $o_i \in \supp \phi_i''(t), o_i \in \supp \sigma_A(\mathscr{M}_1,...,\mathscr{M}_i',...,\mathscr{M}_n,t)$. 

Note that
\small
\begin{align*}
\mathbb{E}_t[\displaystyle{\int}_{O_i}&u_i(\sigma_A(\mathscr{M}_1,...,\tilde{o}_i,...,\mathscr{M}_n,t),t)\phi_i''(t)(\mathbf{d}\tilde{o}_i)] \\
&= \mathbb{E}_t[\displaystyle{\int}_{O_i} \displaystyle{\int}_{O} u_i(o',t)\sigma_A(\mathscr{M}_1,...,\tilde{o}_i,...,\mathscr{M}_n,t)(\mathbf{d}o')\phi_i''(t)(\mathbf{d}\tilde{o}_i)] \\
&= \mathbb{E}_t[\displaystyle{\int}_{O_i} \displaystyle{\int}_{\mathscr{M}_1 \times ... \times \{\tilde{o}_i\} \times ... \times\mathscr{M}_n} u_i(o',t) \dfrac{\sigma_A(\mathscr{M}_1,...,\mathscr{M}_i',...,\mathscr{M}_n,t)(\mathbf{d}o')}{\sigma_A(\mathscr{M}_1,...,\mathscr{M}'_i,...,\mathscr{M}_n,t)[\supp \sigma_A(\mathscr{M}_1,...,\mathscr{M}_i',...,\mathscr{M}_n,t) \cap \mathscr{M}_1 \times ... \times \{\tilde{o}_i\} \times ... \times\mathscr{M}_n]} \\
& \quad \phi_i''(t)(\mathbf{d}\tilde{o}_i) ] \\
&= \mathbb{E}_t[\displaystyle{\int}_O u_i(o',t) \\
& \quad \displaystyle{\int}_{O_i} 1_{o' \in \mathscr{M}_1 \times ... \times \{\tilde{o}_i\} \times ... \mathscr{M}_n} \dfrac{\phi_i''(t)(\mathbf{d}\tilde{o}_i) }{\sigma_A(\mathscr{M}_1,...,\mathscr{M}'_i,...,\mathscr{M}_n,t)[\supp \sigma_A(\mathscr{M}_1,...,\mathscr{M}_i',...,\mathscr{M}_n,t) \cap \mathscr{M}_1 \times ... \times \{\tilde{o}_i\} \times ... \times\mathscr{M}_n]} \\
& \quad \sigma_A(\mathscr{M}_1,...,\mathscr{M}_i',...,\mathscr{M}_n,t)(\mathbf{d}o')] \\
&= \mathbb{E}_t[\displaystyle{\int}_O u_i(o',t) \displaystyle{\int}_{O_i} 1_{o' \in \mathscr{M}_1 \times ... \times \{\tilde{o}_i\} \times ... \mathscr{M}_n} \mathbf{d}\tilde{o}_i \sigma_A(\mathscr{M}_1,...,\mathscr{M}_i',...,\mathscr{M}_n,t)(\mathbf{d}o')] \\
&= \mathbb{E}_t[\sigma_A(\mathscr{M}_1,...,\mathscr{M}_i',...,\mathscr{M}_n,t)],
\end{align*} \normalsize
where the second ``$=$'' is by \eqref{contraction}, and the fourth ``$=$'' is by (\ref{16++}). As a result, \[\mathbb{E}_t[\displaystyle{\int}_{O_i}u_i(\sigma_A(\mathscr{M}_1,...,\tilde{o}_i,...,\mathscr{M}_n,t),t)\phi_i''(t)(\mathbf{d}\tilde{o}_i)] > \mathbb{E}_t[u_i(\sigma_A(\mathscr{M}_1,...,\{\phi_i^*(s)\}_{s \in T},...,\mathscr{M}_n,t),t)].\]
By (\ref{2*}), for any $i,$ \[ \mathbb{E}_t[u_i(\sigma_A(\{\phi_1^*(s)\}_{s \in T},...,\{\phi_n^*(s)\}_{s \in T},t),t)] \geq \mathbb{E}_t[u_i(\sigma_A(\mathscr{M}_1,...\{\phi_i^*(t)\},...,\mathscr{M}_n,t),t)],\] which further implies that, 

\[ \mathbb{E}_t[\displaystyle{\int}_{O_i}u_i(\sigma_A(\mathscr{M}_1,...,\tilde{o}_i,...,\mathscr{M}_n,t),t)\phi_i''(t)(\mathbf{d}\tilde{o}_i)] > \mathbb{E}_t[u_i(\sigma_A(\mathscr{M}_1,...\{\phi_i^*(t)\},...,\mathscr{M}_n,t),t)].\]

This, combining with the feasibility of $\phi_i''$, shows that $\phi_i^*$ does not solve (\ref{P1'}), a contradiction. Hence, $(\mathscr{M}_1,...,\mathscr{M}_n, \sigma_A)$ is a PBE.
\end{proof}

\subsection{Proof of Proposition \ref{p2}}
\begin{proof}

First discuss the case where $\sigma_A$ is a pure-strategy that satisfies \eqref{piia}. For any $t$, let \[\supp\sigma_A(\mathscr{M}_1, ..., \mathscr{M}_n,t) := (o_{1t}, ..., o_{nt}).\] Let $\phi_i^*(t)= o_{it}, \forall i,t$. It is straightforward that $\{\phi_i^*(s)\}_{s \in T} \subseteq \mathscr{M}_i, \forall i$. Proceed to show that for any $i, \phi_i^*$ solves problem (\ref{P2}). First show the feasibility. By condition \ref{cd1} in Definition \ref{d0},  for any $t, $
\[(o_{1t},...,o_{nt}) \in \supp \sigma_A(\mathscr{M}_1,...,\mathscr{M}_n,t) \Rightarrow (o_{1t},...,o_{nt}) \in \argmax \limits_{\tilde{o}_i \in \mathscr{M}_i, \forall i} V((\tilde{o_1},...,\tilde{o_n}),t).\] Hence, for any $i, \max \limits_{\tilde{o}_i \in \mathscr{M}_i, \forall i} V((\tilde{o_1},...,\tilde{o_n}),t)=v_i(\phi_i^*(t),t |\mathscr{M}_{-i}) \geq v_i(\phi_i^*(t'),t|\mathscr{M}_{-i}), \forall t,t'$. 

Next show the optimality. Since \[\supp \sigma_A(\mathscr{M}_1,...,\mathscr{M}_n,t) \subseteq \mathscr{M}_1 \times ... \times \{\phi_i^*(s)\}_{s \in T} \times ... \times \mathscr{M}_n,\] by \eqref{piia}, 
\begin{equation}
\label{17+}
    \mathbb{E}_t[u_i(\sigma_A(\mathscr{M}_1,...,\{\phi_i^*(s)\}_{s \in T},...,\mathscr{M}_n,t),t)]= \mathbb{E}_t[u_i(\sigma_A(\mathscr{M}_1,...,\mathscr{M}_i,...,\mathscr{M}_n,t),t)], \forall i. 
\end{equation}
Suppose there exists $i$ and $\phi_i':T \rightarrow O_i$ such that $v_i(\phi_i'(t),t|\mathscr{M}_{-i}) \geq v_i(\phi_i'(t'),t|\mathscr{M}_{-i}), \forall t,t'$ and \[\mathbb{E}_t[u_i(\sigma_A(\mathscr{M}_1,...,\{\phi_i^*(s)\}_{s \in T},...,\mathscr{M}_n,t),t)]<\mathbb{E}_t[u_i(\sigma_A(\mathscr{M}_1,...,\{\phi_i'(s)\}_{s \in T},...,\mathscr{M}_n,t),t)].\] By (\ref{17+}), \[\mathbb{E}_t[u_i(\sigma_A(\mathscr{M}_1,...,\{\phi_i'(s)\}_{s \in T},...,\mathscr{M}_n,t),t)]>\mathbb{E}_t[u_i(\sigma_A(\mathscr{M}_1,...,\mathscr{M}_i,...,\mathscr{M}_n,t),t)],\] 
a contradiction to $(\mathscr{M}_1,...,\mathscr{M}_n, \sigma_A)$ is a PBE. Hence, for any $i, \phi_i^*$ solves problem (\ref{P2}).

Alternatively, if $\sigma_A$ is a mixed strategy that satisfies \eqref{contraction}, and $(\mathscr{M}_1,...,\mathscr{M}_n,\sigma_A)$ satisfies (\ref{3*}), then for any $i$, let $\phi_i^{**}(t)=o_i^t, \forall t$, where $\{o_i^t\}_{t \in T}$ is specified by condition (\ref{3*}). It is straightforward that for any $i, \{\phi_i^{**}(t)\}_{t \in T} \subseteq \mathscr{M}_i$. I first show that for any $i,$
\begin{equation}
    \label{19++}
    \mathbb{E}_t[u_i(\sigma_A(\mathscr{M}_1,...,\mathscr{M}_n,t),t)] \leq \mathbb{E}_t[u_i(\sigma_A(\mathscr{M}_1,...,\{\phi_i^{**}(t)\}_{t \in T}, ..., \mathscr{M}_n,t),t)].
\end{equation}

By \eqref{contraction}, for any $t, i,$
\[\supp \sigma_A(\mathscr{M}_1,...,\{\phi_i^{**}(t)\}_{t \in T},...,\mathscr{M}_i,t) = \supp \sigma_A(\mathscr{M}_1,...,\mathscr{M}_n,t) \cap \mathscr{M}_1 \times ... \times \{\phi_i^{**}(t)\}_{t \in T} \times ... \times \mathscr{M}_n.\]
Suppose there exists $t$ and $\bar{o}=(\bar{o}_1,...,\bar{o}_i,...,\bar{o}_n) \in \sigma_A(\mathscr{M}_1,...,\mathscr{M}_n,t) \cap \mathscr{M}_1 \times ... \times \{\phi_i^{**}(t)\}_{t \in T} \times ... \times \mathscr{M}_n$ such that \begin{equation} \label{20++} \bar{o}_i \notin \argmax \limits_{o_i \in \supp \sigma_A(\mathscr{M}_1,...,\mathscr{M}_n,t)\bigg|_{O_i}} \displaystyle{\int}_{O_{-i}} u_i(o_i, o_{-i},t)\sigma_A(\mathscr{M}_1,...,\mathscr{M}_n,t)(\mathbf{d}o_{-i}|o_i). \end{equation} Since $\bar{o} \in \mathscr{M}_1 \times ... \times \{\phi_i^{**}(t)\}_{t \in T} \times ... \times \mathscr{M}_n$, there exists $\bar{t} \in T$ such that $\bar{o}_i=o_i^{\bar{t}}$. If $\bar{t}=t$, then (\ref{20++}) is a direct violation of the first statement of (\ref{3*}). If $\bar{t} \neq t$, then by (\ref{20++}) and $\bar{o} \in \supp \sigma_A(\mathscr{M}_1,...,\mathscr{M}_n,t)$,
\begin{equation*}
    o_i^{\bar{t}} \in \supp \sigma_A(\mathscr{M}_1,...,\mathscr{M}_n,t)\bigg|_{O_i} \backslash \argmax \limits_{o_i \in \supp \sigma_A(\mathscr{M}_1,...,\mathscr{M}_n,t)\bigg|_{O_i}} \displaystyle{\int}_{O_{-i}} u_i(o_i, o_{-i},t)\sigma_A(\mathscr{M}_1,...,\mathscr{M}_n,t)(\mathbf{d}o_{-i}|o_i),
\end{equation*}
a contradiction to the second statement of (\ref{3*}). Hence, for any $t$, \begin{equation} \label{23+} \supp \sigma_A(\mathscr{M}_1,...,\{\phi_i^{**}(t)\}_{t \in T},...,\mathscr{M}_i,t)\bigg|_{O_i} \subseteq \argmax \limits_{o_i \in \supp \sigma_A(\mathscr{M}_1,...,\mathscr{M}_n,t)\bigg|_{O_i}} \displaystyle{\int}_{O_{-i}} u_i(o_i, o_{-i},t)\sigma_A(\mathscr{M}_1,...,\mathscr{M}_n,t)(\mathbf{d}o_{-i}|o_i). \end{equation} Claim that (\ref{23+}) implies (\ref{19++}): by (\ref{23+}), 
\begin{equation*}
    \begin{split}
        \mathbb{E}_t[u_i(\sigma_A(\mathscr{M}_1,...,\{\phi_i^{**}(t)\}_{t \in T},...,\mathscr{M}_n,t),t)] &= \mathbb{E}_t[\max \limits_{o_i \in \supp \sigma_A(\mathscr{M}_1,...,\mathscr{M}_n,t)\bigg|_{O_i}} \displaystyle{\int}_{O_{-i}} u_i(o_i, o_{-i},t)\sigma_A(\mathscr{M}_1,...,\mathscr{M}_n,t)(\mathbf{d}o_{-i}|o_i)] \\ & \geq \mathbb{E}_t[\displaystyle{\int}_{O_i} \displaystyle{\int}_{O_{-i}} u_i(o_i,o_{-i},t) \sigma_A(\mathscr{M}_1,...,\mathscr{M}_n,t)(\mathbf{d}o_{-i}|o_i) \tilde{\sigma_A}(\mathscr{M}_1,...,\mathscr{M}_n,t)(\mathbf{d}o_i)] \\ & = \mathbb{E}_t[\displaystyle{\int}_O u_i(o,t)\sigma_A(\mathscr{M}_1,...,\mathscr{M}_n,t)(\mathbf{d}o)] = \mathbb{E}_t[u_i(\sigma_A(\mathscr{M}_1,...,\mathscr{M}_n,t),t)],
    \end{split}
\end{equation*}
where $\tilde{\sigma_A}(\mathscr{M}_1,...,\mathscr{M}_n,t)=\sigma_A(\mathscr{M}_1,...,\mathscr{M}_n,t) \bigg|_{O_i}$. 

I proceed to show that for any $i, \phi_i^{**}$ solves (\ref{P2}). (\ref{3*}) ensures that for any $t, \phi_i^{**}(t) \in \argmax \limits_{o \in \supp \sigma_A(\mathscr{M}_1,...,\mathscr{M}_n,t)} u_i(o,t)\bigg|_{O_i}$, so that the feasibility of $\phi_i^{**}$ directly follows from
\begin{equation*}
\begin{split}
    v_i(\phi_i^{**}(t),t \mid \mathscr{M}_{-i})
    &= \max \limits_{o \in \mathscr{M}_1 \times \ldots \times \mathscr{M}_i \times \ldots \times \mathscr{M}_n} 
       V(o,t) \geq \max \limits_{o \in \mathscr{M}_1 \times \ldots \times 
        \{\phi_i^{**}(t)\}_{t \in T} \times \ldots \times \mathscr{M}_n} 
        V(o,t) \\
    &\geq v_i(\phi_i^{**}(t'), t \mid \mathscr{M}_{-i}), 
\end{split}
\end{equation*}
where the first $``\geq''$ is by $\mathscr{M}_1 \times ... \times \{\phi_i^{**}(t)\}_{t \in T} \times ... \times \mathscr{M}_n \subseteq \mathscr{M}_1 \times ... \mathscr{M}_i \times ... \times \mathscr{M}_n$. As for the optimality, suppose there exists $i$ and feasible $\phi_i'$ such that \[\mathbb{E}_t[u_i(\sigma_A(\mathscr{M}_1,...,\{\phi_i^{**}(t)\}_{t \in T},...,\mathscr{M}_n,t),t)]< \mathbb{E}_t[u_i(\sigma_A(\mathscr{M}_1,...,\{\phi_i'(s)\}_{s \in T},...,\mathscr{M}_n,t),t)],\]
then by (\ref{19++}), \begin{equation*} \mathbb{E}_t[u_i(\sigma_A(\mathscr{M}_1,...,\mathscr{M}_n,t),t)]< \mathbb{E}_t[u_i(\sigma_A(\mathscr{M}_1,...,\{\phi_i'(s)\}_{s \in T},...,\mathscr{M}_n,t),t)],\end{equation*} 
a contradiction to $(\mathscr{M}_1,...,\mathscr{M}_n,\sigma_A)$ is a PBE. As a result, for any $i, \phi_i^{**}$ solves (\ref{P2}).
    
\end{proof}

\subsection{Proof of Proposition \ref{prop:extendeddm1}}

     \begin{proof}
     It is straightforward that, by the definition of $\mathscr{M}_i$, for any $i$, \[\mathbb{E}_t\!\left[
u_i\!\left(
\sigma_A(
\{\phi_1^*(s,o_{-1})\}_{s\in T,\, o_{-1}\in O_{-1}},\ldots,
\{\phi_i(s,o_{-i})\}_{s\in T,\, o_{-i}\in O_{-i}},\ldots,
\{\phi_n^*(s,o_{-n})\}_{s\in T,\, o_{-n}\in O_{-n}},
t
),
t
\right)
\right]=\mathbb{E}_t[u_i(\sigma_A(\mathscr{M}_1,...,\mathscr{M}_n,t),t)].\]
         Suppose there exists $i$ and $\mathscr{M}_i'$ such that 
         \begin{equation*} \mathbb{E}_t[u_i(\sigma_A(\mathscr{M}_1,...,\mathscr{M}_n,t),t)]<\mathbb{E}_t[u_i(\sigma_A(\mathscr{M}_1,...,\mathscr{M}_i',...,\mathscr{M}_n,t),t)].
         \end{equation*}
         Consider $\phi_i': T \times O_{-i} \rightarrow O_i$ such that for any $o_{-i},t$, \[ \phi_i'(t,o_{-i}):=\sigma_A(\mathscr{M}_1,...,\mathscr{M}_i',...,\mathscr{M}_n,t)\big|_{O_i}.\]
         $\phi_i'$ is well-defined because $\sigma_A$ is a pure-strategy. Clearly, $\{\phi_i'(t,o_{-i})\}_{t \in T, o_{-i} \in O_{-i}} \subseteq \mathscr{M}_i'$, which, by similar arguments made in the proof of Proposition \ref{p0}, gives the feasibility of $\phi_i'$. Also similar to the arguments made in the proof of Proposition \ref{p0}, by \eqref{piia},
         \begin{equation*}
\begin{split}
    &\mathbb{E}_t\!\left[
u_i\!\left(
\sigma_A(
\{\phi_1^*(s,o_{-1})\}_{s\in T,\, o_{-1}\in O_{-1}},\ldots,
\{\phi_i(s,o_{-i})\}_{s\in T,\, o_{-i}\in O_{-i}},\ldots,
\{\phi_n^*(s,o_{-n})\}_{s\in T,\, o_{-n}\in O_{-n}},
t
),
t
\right)
\right] \\
    &= \mathbb{E}_t[u_i(\sigma_A(\mathscr{M}_1,\dots,\mathscr{M}_n,t),t)] \\
    &< \mathbb{E}_t[u_i(\sigma_A(\mathscr{M}_1,\dots,\mathscr{M}_i',\dots,\mathscr{M}_n,t),t)] \\
    &= \mathbb{E}_t[u_i(\sigma_A(\mathscr{M}_1,\dots,\{\phi_i'(s, o_{-i})\}_{s \in T, o_{-i} \in O_{-i}},\dots,\mathscr{M}_n,t),t)],
\end{split}
\end{equation*}
a contradiction to $\phi_i^*$ solves the problem. Hence, $(\mathscr{M}_1,...,\mathscr{M}_n, \sigma_A)$ is a PBE. 
     \end{proof}

\subsection{Proof of Proposition \ref{prop:extendeddm2}}

\begin{proof}
    For any $t$, let $\sigma_A(\mathscr{M}_1,...,\mathscr{M}_n,t):=(o_1^t,...,o_n^t)$, which is well-defined as $\sigma_A$ is a pure-strategy. For any $i$, consider $\phi_i^*: T \times O_{-i} \rightarrow O_i$ such that for any $t$ and $o_{-i}^t$,
    \begin{equation} \label{eq:extendedonpath} \phi_i^*(t,o_{-i}^t):=o_i^t. \end{equation}
    In addition, since $\card(\mathscr{M}_i) \leq \card (O_{-i})$, for a fixed $t_0 \in T$,
    \[ \card (\mathscr{M}_i \backslash \{o_i^{t_0}\}) \leq \card (O_{-i} \backslash \{o_{-i}^{t_0}\}),\]
    which further implies that there exists an injection $\iota_i: \mathscr{M}_i \backslash \{o_i^t\}_{t \in T} \rightarrow O_{-i} \backslash \{o_{-i}^{t_0}\}$. For any $o_i \in \mathscr{M}_i \backslash \{o_i^t\}_{t \in T}$, let 
    \begin{equation} \label{eq:extendeeoffpath} \phi_i^*(t_0,\iota_i(o_i)):=o_i.\end{equation}
    For any other $(t,o_{-i})$ that is not specified by \eqref{eq:extendedonpath} or \eqref{eq:extendeeoffpath}, let $\phi_i^*(t, o_{-i})$ be an arbitrary element of $\mathscr{M}_i$. In that case, $\{\phi_i^*(s,o_{-i})\}_{s \in T, o_{-i} \in O_{-i}}=\mathscr{M}_{-i}$. 

    I proceed to show that for any $i, \phi_i^*$ constructed above is feasible to the problem in Proposition \ref{prop:extendeddm2}. For any $t$, since $\sigma_A$ is a pure-strategy, $\supp \sigma_A(\mathscr{M}_1,...,\mathscr{M}_n,t)\big|_{O_{-i}}=\{o_{-i}^t\}$. In addition, 
    \[ \phi_i^*(t, o_{-i}^t)=o_i^t \in \argmax \limits_{o \in \mathscr{M}_1 \times ... \times \mathscr{M}_n} V(o,t) \big|_{O_i},\]
    which directly gives for any $t', o_{-i}',$
    \[ v_i(\phi_i^*(t,o_{-i}^t),t|\mathscr{M}_{-i}) \geq v_i(\phi_i^*(t', o_{-i}'),t|\mathscr{M}_{-i}).\]
    Suppose the constructed $\phi_i^*$ does not solve the problem in Proposition \ref{prop:extendeddm2}, i.e., there exists $\phi_i'$ that is feasible and
   \[ \mathbb{E}_t\!\left[
u_i\!\left(
\sigma_A(\mathscr{M}_1,\ldots,\{\phi^*_i(s,o_{-i})\}_{s\in T,\, o_{-i}\in O_{-i}},\ldots,\mathscr{M}_n,t),
t
\right)
\right] < \mathbb{E}_t\!\left[
u_i\!\left(
\sigma_A(\mathscr{M}_1,\ldots,\{\phi'_i(s,o_{-i})\}_{s\in T,\, o_{-i}\in O_{-i}},\ldots,\mathscr{M}_n,t),
t
\right)
\right]. \]
Since $\{\phi_i^*(s, o_{-i})\}_{s \in T, o_{-i} \in O_{-i}}=\mathscr{M}_i, \{\phi_i'(s, o_{-i})\}_{s \in T, o_{-i} \in O_{-i}}$ is a profitable deviation for principal $i$, a contradiction to $(\mathscr{M}_1,...,\mathscr{M}_n,\sigma_A)$ is a PBE. 
\end{proof}

\section{Supplementary Examples}

\subsection{Examples for Remark \ref{rmk:upnr}}
\label{sec:examplesrmk1}

I first show an example that there exists $(\phi_1^*,...,\phi_n^*)$ such that for any $i, \phi_i^*$ solves \eqref{P3} and \eqref{2+} is satisfied but $(\phi_1^{**},...,\phi_n^{**})$, where for any $i,t,$
    \[ \phi_i^{**}(t)=o_i^t,\]
    where $o_i^t$ is specified in \eqref{2+}, does not satisfy for any $i, \phi_i^{**}$ solves \eqref{P3}.

    \begin{example}
\label{ex:upr-construction-counterexample}
Consider a setting with two principals $\{1,2\}$. The agent has two types $T=\{t_1,t_2\}$.
Let
\[
O_1=\{a,c\},\qquad O_2=\{b,b'\}.
\]
The agent's utility is given by the following tables:

\begin{minipage}{0.45\textwidth}
\centering
\[
\begin{array}{c|cc}
 & b & b' \\ \hline
a & 5 & 5 \\
c & 0 & 10 \\
 & t_1
\end{array}
\]
\end{minipage}
\qquad
\begin{minipage}{0.45\textwidth}
\centering
\[
\begin{array}{c|cc}
 & b & b' \\ \hline
a & 0 & 0 \\
c & 10 & 10 \\
 & t_2
\end{array}
\]
\end{minipage}

Principal 1's utility is
\[
u_1(a,t_1)=1,\quad u_1(c,t_1)=-100,\qquad
u_1(a,t_2)=1,\quad u_1(c,t_2)=5,
\]
and principal 2's utility is
\[
u_2(b,t_1)=1,\quad u_2(b',t_1)=0,\qquad
u_2(b,t_2)=1,\quad u_2(b',t_2)=1.
\]

Consider the profile
\[
\phi_1^*(t_1)=a,\qquad \phi_1^*(t_2)=a,
\]
\[
\phi_2^*(t_1)=b,\qquad \phi_2^*(t_2)=b'.
\]
It is straightforward to check that for any $i \in \{1,2\}, \phi_i^*$ solves \eqref{P3}.

Moreover, \eqref{2+} is satisfied. In particular, one can choose
\[
(o_1^{t_1},o_2^{t_1})=(a,b),\qquad
(o_1^{t_2},o_2^{t_2})=(a,b),
\]
which are utility-maximizing (within the menu profile) for the agent at the corresponding types and preserve both principals' utilities relative to $(\phi_1^*,\phi_2^*)$.

Now define $(\phi_1^{**},\phi_2^{**})$ by
\[
\phi_i^{**}(t)=o_i^t.
\]
Then
\[
\phi_1^{**}(t_1)=a,\qquad \phi_1^{**}(t_2)=a,
\]
\[
\phi_2^{**}(t_1)=b,\qquad \phi_2^{**}(t_2)=b.
\]
Hence the induced rival menu for principal 1 becomes $\{b\}$ instead of $\{b,b'\}$.

Given the rival menu $\{b\}$, principal 1 can deviate to
\[
\psi_1(t_1)=a,\qquad \psi_1(t_2)=c,
\]
which is feasible for \eqref{P3} since
\[
v_1(a,t_1\mid\{b\})=5>0=v_1(c,t_1\mid\{b\}),
\]
and
\[
v_1(c,t_2\mid\{b\})=10>0=v_1(a,t_2\mid\{b\}).
\]
Its expected payoff is
\[
\frac{1}{2}u_1(a,t_1)+\frac{1}{2}u_1(c,t_2)
=\frac{1}{2}(1)+\frac{1}{2}(5)=3,
\]
whereas $(\phi_1^{**},\phi_2^{**})$ yields principal 1 expected payoff
\[
\frac{1}{2}u_1(a,t_1)+\frac{1}{2}u_1(a,t_2)
=\frac{1}{2}(1)+\frac{1}{2}(1)=1.
\]
Therefore, although $(\phi_1^*,\phi_2^*)$ solves \eqref{P3} and satisfies \eqref{2+}, the profile obtained by the construction $\phi_i^{**}(t)=o_i^t$ does not necessarily solve \eqref{P3}.
\end{example}

I proceed to show an example that directly contradicts the claim in Remark \ref{rmk:upnr}.

\begin{example}
\label{ex:upr-not-upnr}
Consider a setting with two principals $\{1,2\}$. The agent has three types
\[
T=\{t_1,t_2,t_3\},
\]
each occurring with probability $\frac{1}{3}$. Let
\[
O_1=\{a_1,a_2,a_3\},\qquad O_2=\{b_1,b_2,b_3\}.
\]

The agent's utility is given by the following tables:

\begin{minipage}{0.3\textwidth}
\centering
\[
\begin{array}{c|ccc}
 & b_1 & b_2 & b_3 \\ \hline
a_1 & 1 & 2 & 0 \\
a_2 & 1 & 1 & 0 \\
a_3 & 2 & 1 & 0 \\
 & t_1
\end{array}
\]
\end{minipage}
\qquad
\begin{minipage}{0.3\textwidth}
\centering
\[
\begin{array}{c|ccc}
 & b_1 & b_2 & b_3 \\ \hline
a_1 & 1 & 1 & 2 \\
a_2 & 2 & 1 & 0 \\
a_3 & 0 & 1 & 0 \\
 & t_2
\end{array}
\]
\end{minipage}
\qquad
\begin{minipage}{0.3\textwidth}
\centering
\[
\begin{array}{c|ccc}
 & b_1 & b_2 & b_3 \\ \hline
a_1 & 1 & 2 & 0 \\
a_2 & 2 & 2 & 2 \\
a_3 & 2 & 2 & 0 \\
 & t_3
\end{array}
\]
\end{minipage}

Principal 1's utility, which is only $(a_i,t_i)$-dependent, is
\[
\begin{array}{c|ccc}
 & a_1 & a_2 & a_3 \\ \hline
t_1 & 2 & 1 & 0 \\
t_2 & 0 & 2 & 0 \\
t_3 & 1 & 0 & 0
\end{array}
\]
and principal 2's utility is
\[
\begin{array}{c|ccc}
 & b_1 & b_2 & b_3 \\ \hline
t_1 & 0 & 1 & 0 \\
t_2 & 2 & 2 & 0 \\
t_3 & 1 & 0 & 0
\end{array}.
\]

Consider
\[
\phi_1^*(t_1)=a_2,\qquad \phi_1^*(t_2)=a_2,\qquad \phi_1^*(t_3)=a_3,
\]
\[
\phi_2^*(t_1)=b_2,\qquad \phi_2^*(t_2)=b_2,\qquad \phi_2^*(t_3)=b_3.
\]

A direct check shows that $(\phi_1^*,\phi_2^*)$ solves \eqref{P3}. Moreover, \eqref{2+} is satisfied:
\begin{itemize}
    \item at $t_1$, choose $(o_1^{t_1},o_2^{t_1})=(a_2,b_2)$;
    \item at $t_2$, choose $(o_1^{t_2},o_2^{t_2})=(a_2,b_2)$;
    \item at $t_3$, choose $(o_1^{t_3},o_2^{t_3})=(a_2,b_2)$.
\end{itemize}
Indeed, for each type $t$, the chosen pair lies in
\[
\argmax_{o_i \in \{\phi_i^*(s)\}_{s \in T},\, i=1,2} V((o_1,o_2),t),
\]
and for each principal $i$,
\[
u_i(o_i^t,t)=u_i(\phi_i^*(t),t).
\]
Hence, there exists a profile where the menus within it mutually solve \eqref{P3} and satisfying \eqref{2+}.

I now claim that there does not exist any profile $(\psi_1,\psi_2)$ such that both principals solve \eqref{P3} and
\[
(\psi_1(t),\psi_2(t))
\in
\argmax_{o_i \in \{\psi_i(s)\}_{s \in T},\, i=1,2}
V((o_1,o_2),t)
\qquad \forall t.
\]

The only profiles where the menus within them mutually solve \eqref{P3} are
\[
(\psi_1,\psi_2)=\bigl((a_1,a_2,a_1),(b_2,b_1,b_1)\bigr)
\]
and
\[
(\psi_1,\psi_2)=\bigl((a_2,a_2,a_3),(b_2,b_2,b_3)\bigr).
\]
Neither of these satisfies the aforementioned condition:
\begin{itemize}
    \item for $\bigl((a_1,a_2,a_1),(b_2,b_1,b_1)\bigr)$, at type $t_3$ the chosen pair
    $(a_1,b_1)$ yields utility $1$, while the maximum over
    $\{a_1,a_2\}\times\{b_1,b_2\}$ is $2$;
    \item for $\bigl((a_2,a_2,a_3),(b_2,b_2,b_3)\bigr)$, at type $t_3$ the chosen pair
    $(a_3,b_3)$ yields utility $0$, while the maximum over
    $\{a_2,a_3\}\times\{b_2,b_3\}$ is $2$.
\end{itemize}
\end{example}

\subsection{Example for Pareto Improvements of \eqref{P3}-induced PBE}
\label{sec:paretoimprovements}

\begin{example}[A counterexample to Pareto optimality of \eqref{P3}-induced PBE under \eqref{eq:noindifference}]
\label{ex:pareto-counterexample}
Consider a setting with two principals $\{1,2\}$. The agent has two types $T=\{t_1,t_2\}$, each occurring with probability $\frac{1}{2}$. Let
\[
O_1=\{a,b,c\},\qquad O_2=\{A,B,C\}.
\]

The agent's utility is given by the following tables:

\begin{minipage}{0.45\textwidth}
\centering
\[
\begin{array}{c|ccc}
 & A & B & C \\ \hline
a & 5 & 2 & 0 \\
b & 1 & 7 & 6 \\
c & 4 & 8 & 3 \\
 & t_1
\end{array}
\]
\end{minipage}
\qquad
\begin{minipage}{0.45\textwidth}
\centering
\[
\begin{array}{c|ccc}
 & A & B & C \\ \hline
a & 5 & 8 & 7 \\
b & 2 & 3 & 4 \\
c & 6 & 1 & 0 \\
 & t_2
\end{array}
\]
\end{minipage}

For each type, all nine utility levels are distinct. Hence \eqref{eq:noindifference} holds.

Principal 1's utility is
\[
u_1(a,t_1)=4,\quad u_1(b,t_1)=0,\quad u_1(c,t_1)=3,
\]
\[
u_1(a,t_2)=1,\quad u_1(b,t_2)=0,\quad u_1(c,t_2)=4,
\]
and principal 2's utility is
\[
u_2(A,t_1)=4,\quad u_2(B,t_1)=2,\quad u_2(C,t_1)=1,
\]
\[
u_2(A,t_2)=0,\quad u_2(B,t_2)=1,\quad u_2(C,t_2)=3.
\]

Consider the singleton menu profile
\[
(\{c\},\{C\}).
\]
This profile is \eqref{P3}-induced: given $\{C\}$, principal 1's problem is solved by the constant schedule $t\mapsto c$; given $\{c\}$, principal 2's problem is solved by the constant schedule $t\mapsto C$; and with singleton menus, \eqref{2+} is automatic.

The corresponding expected payoff vector is
\[
\left(
\frac{u_1(c,t_1)+u_1(c,t_2)}{2},
\frac{u_2(C,t_1)+u_2(C,t_2)}{2}
\right)
=
\left(
\frac{3+4}{2},
\frac{1+3}{2}
\right)
=
(3.5,2).
\]

Now consider the menu profile
\[
(\{a,b,c\},\{A\}).
\]
Given principal 2's singleton menu $\{A\}$, the agent chooses
\[
a \text{ at type } t_1,\qquad c \text{ at type } t_2,
\]
since
\[
V((a,A),t_1)=5>\max\{1,4\},\qquad
V((c,A),t_2)=6>\max\{5,2\}.
\]
Thus the induced payoff vector is
\[
\left(
\frac{u_1(a,t_1)+u_1(c,t_2)}{2},
\frac{u_2(A,t_1)+u_2(A,t_2)}{2}
\right)
=
\left(
\frac{4+4}{2},
\frac{4+0}{2}
\right)
=
(4,2).
\]

It is straightforward to verify that $(\{a,b,c\},\{A\})$ is a PBE: given $\{A\}$, principal 1 already attains her maximal feasible payoff at each type, and given $\{a,b,c\}$, principal 2 cannot obtain expected payoff exceeding $2$.

Therefore, $(\{a,b,c\},\{A\})$ Pareto dominates $(\{c\},\{C\})$, since
\[
(4,2)\ge (3.5,2)
\]
with strict inequality for principal 1.

Hence, even under \eqref{eq:noindifference}, a PBE that is Pareto optimal among all \eqref{P3}-induced PBE need not be Pareto optimal.
\end{example}

\subsection{Example for Outside Options}
\label{appendixegoo}
\begin{example}
\label{e2+}
Consider an example with the exactly same setup as Example \ref{e1+} but with different payoff tables. It is straightforward that for any $i \in \{1,2\}, u_i$ only depends on $o_i$. 

\begin{minipage}{0.45\textwidth}
\centering
\[
\begin{array}{c|cc}
 & b & b' \\ \hline
a & (5,5,5) & (5,0,10) \\
a' & (0,5,10) & (0,0,0) \\
 & t_1  &
\end{array}
\]
\end{minipage}
\quad
\begin{minipage}{0.45\textwidth}
\centering
\[
\begin{array}{c|cc}
 & b & b' \\ \hline
a & (5,5,-2) & (5,0,-2) \\
a' & (0,5,-2) & (0,0,0) \\
 & t_2 &
\end{array}
\]
\end{minipage}

I first consider intrinsic common agency. Claim that $\phi_1^*(t_1)=a, \phi_1^*(t_2)=a';\phi_2^*(t_1)=b, \phi_2^*(t_2)=b'$ solves \eqref{P4'} for any distribution on $T$. I first check the feasibility. IC and \eqref{i} for $t_1$ is guaranteed by
\begin{equation*}
    \begin{split}
        v_1(a,t_1|\{b,b'\})=10 \geq 10=v_1(a',t_1|\{b,b'\}) \geq 0, \\
        v_2(b,t_1|\{a,a'\})=10 \geq 10=v_2(b',t_1|\{a,a'\}) \geq 0,
    \end{split}
\end{equation*}
and IC and \eqref{i} for $t_2$ is guaranteed by
\begin{equation*}
    \begin{split}
        v_1(a',t_2|\{b,b'\}) \geq (=) 0 \geq -2=v_1(a,t_2|\{b,b'\}), \\
        v_2(b',t_2|\{a,a'\}) \geq (=) 0 \geq -2=v_2(b,t_2|\{a,a'\}).
    \end{split}
\end{equation*}

The optimality of $(\phi_1^*,\phi_2^*)$ can be argued through the following: under $t_1$, $\{a\}$ and $\{b\}$ strictly dominates $\{a'\}$ and $\{b'\}$ respectively, and under $t_2$, allocating $\{a'\}$ and $\{b'\}$ are the unique deterministic mechanism that satisfies \eqref{i} for principal 1 and 2 respectively. 

However, $(\phi_1^*, \phi_2^*)$ does not satisfy \eqref{eq:4i}, for exactly the same reason demonstrated in Example \ref{e1+}. 

I proceed to show that, for almost all distributions on $T$, there does not exist $\sigma_A$ such that $(\{a,a'\}, \{b,b'\}, \sigma_A)$ is a PBE. Given principal 2 plays $\{b,b'\}$, for any $\sigma_A$ that satisfies condition \eqref{1i}, 
\begin{equation*}
\begin{split}
    u_1(\sigma_A(\{a\}, \{b,b'\}, t_1)) &= 5; \quad 
    u_1(\sigma_A(\{a\}, \{b,b'\}, t_2)) = 0, \\
    u_1(\sigma_A(\{a'\}, \{b,b'\}, t_1)) &= 0; \quad 
    u_1(\sigma_A(\{a'\}, \{b,b'\}, t_2)) = 0, \\
    u_1(\sigma_A(\{a,a'\}, \{b,b'\}, t_1)) &= 5p_A; \quad 
    u_1(\sigma_A(\{a,a'\}, \{b,b'\}, t_2)) = 0.
\end{split}
\end{equation*}
where $\sigma_A(\{a,a'\}, \{b,b'\},t_1)=p_A(a,b')+(1-p_A)(a',b)$. Similarly, given principal 1 plays $\{a,a'\}$, 
\begin{equation*}
\begin{split}
    u_2(\sigma_A(\{a,a'\}, \{b\}, t_1)) &= 5; \quad
    u_2(\sigma_A(\{a,a'\}, \{b\}, t_2)) = 0, \\
    u_2(\sigma_A(\{a,a'\}, \{b'\}, t_1)) &= 0; \quad
    u_2(\sigma_A(\{a,a'\}, \{b'\}, t_2)) = 0, \\
    u_2(\sigma_A(\{a,a'\}, \{b,b'\}, t_1)) &= 5 - 5p_A; \quad
    u_2(\sigma_A(\{a,a'\}, \{b,b'\}, t_2)) = 0.
\end{split}
\end{equation*}
Let $Pr(t=t_1)=p$, then if there exists $\sigma_A$ such that $(\{a,a'\}, \{b,b'\}, \sigma_A)$ is a PBE, then there exists $p_A$ such that
\begin{equation*}
    \begin{split}
        p\cdot 5p_A \geq 5p, \\
        p \cdot 5(1-p_A) \geq 5p,
    \end{split}
\end{equation*}
which implies that $p_A$ exists if and only if $p=0$. In that case, for $p > 0$, there does not exist any $p_A$ that makes the inequalities hold, thus there does not exist $\sigma_A$ such that $(\{a,a'\},\{b,b'\}, \sigma_A)$ is a PBE. 

As for delegated common agency, let $a'$ and $b'$ be the agent's outside option at principal 1 and 2 respectively.\footnote{This corresponds to the natural interpretation of ``outside options'', as $V((a',b'),t_i)=0, \forall i \in \{1,2\}$.} Then, all the arguments follow.

\end{example}

\end{spacing}

\end{document}